\documentclass[sigplan,10pt]{acmart}
\usepackage{preamble}
\acmConference[SOSP '26]
  {The 32nd ACM Symposium on Operating Systems Principles}
  {September 29--October 2, 2026}
  {Prague, Czech Republic}
\acmYear{2026}
\copyrightyear{2026}
\acmDOI{}
\acmISBN{}
\title{\sys{}: Append-Only Ledgers for (Mostly) Trusted Execution Environments}
\date{}

\author{Shubham Mishra}
\authornote{Work partially  done while at Azure Research, Microsoft}
\affiliation{%
    \institution{UC Berkeley}
    \country{}
}
\author{Jo\~{a}o Gon\c{c}alves}
\authornotemark[1]
\affiliation{%
    \institution{IST Lisbon \& INESC-ID}
    \country{}
}
\author{Chawinphat Tankuranand}
\affiliation{%
    \institution{UC Berkeley}
    \country{}
}
\author{Neil Giridharan}
\affiliation{%
    \institution{UC Berkeley}
    \country{}
}
\author{Natacha Crooks}
\authornotemark[1]
\affiliation{%
    \institution{UC Berkeley}
    \country{}
}
\author{Heidi Howard}
\affiliation{%
    \institution{Azure Research, Microsoft}
    \country{}
}

\author{Chris Jensen}
\affiliation{%
    \institution{Azure Research, Microsoft}
    \country{}
}

\renewcommand\footnotetextcopyrightpermission[1]{}

\begin{document}
\begin{abstract}
Distributed ledgers are increasingly relied upon by industry to provide trustworthy accountability, strong integrity-protection, and high availability for critical data without centralizing trust. Recently, distributed append-only logs are opting for a layered approach, combining crash-fault-tolerant (CFT) consensus with hardware-based Trusted Execution Environments (TEEs) for greater resiliency. Unfortunately, hardware TEEs can be subject to (rare) attacks, undermining the very guarantees that distributed ledgers are carefully designed to achieve.
In response, we present \sys{}, a new distributed consensus protocol that \textit{cautiously} trusts the guarantees of TEEs.
\sys{} carefully embeds a Byzantine fault-tolerant (BFT) protocol inside of a CFT protocol with no additional messages. This is made possible through careful refactoring of both the CFT and BFT protocols such that their structure aligns. \sys{} achieves performance in line with regular TEE-enabled consensus protocols, while guaranteeing integrity in the face of TEE platform compromises.
\end{abstract}

\begin{CCSXML}
<ccs2012>
   <concept>
       <concept_id>10002978.10003006.10003013</concept_id>
       <concept_desc>Security and privacy~Distributed systems security</concept_desc>
       <concept_significance>500</concept_significance>
       </concept>
   <concept>
       <concept_id>10010520.10010575</concept_id>
       <concept_desc>Computer systems organization~Dependable and fault-tolerant systems and networks</concept_desc>
       <concept_significance>500</concept_significance>
       </concept>
 </ccs2012>
\end{CCSXML}

\ccsdesc[500]{Security and privacy~Distributed systems security}
\ccsdesc[500]{Computer systems organization~Dependable and fault-tolerant systems and networks}
\keywords{Trusted Execution Environment, Byzantine Fault Tolerance, Consensus, Distributed Systems.}
\maketitle
\pagestyle{plain}
\section{Introduction}

Tamper-proof shared logs are increasingly being used in industry to guarantee strong integrity protection and high availability. They spread \textit{trust} across multiple administrative domains, and rely on the fact that it is unlikely that all parties will be simultaneously compromised. Such logs have been used for applications like code transparency~\cite{Delignat23,scitt}, key recovery for encrypted messaging~\cite{svr3}, decentralized identity management~\cite{did-ccf}, privacy-preserving advertising~\cite{privacy-sandbox}, multi-party data sharing~\cite{PDO}, and confidential machine learning~\cite{apple-pcc, azure-cai}. Often, these logs are utilized to record a small amount of highly sensitive state, such as secret keys, serving as the root of trust for a much larger application: Signal for instance, implements its key recovery service across three clouds~\cite{svr3}, while Microsoft offers a code transparency service~\cite{cts-announcement}.

To build these tamper-proof logs, industry~\cite{ccf,svr3,oasis} and academia~\cite{Wang22,hybster,gupta2023bft,Achilles, dinis2023rr} are increasingly adopting a layered strategy in which they overlay cryptographic and algorithmic security atop additional hardware security.
These systems leverage hardware-based \textit{Trusted Execution Environments} (TEEs) such as AMD SEV-SNP~\cite{SEV-SNP}, Intel TDX~\cite{TDX}, or ARM CCA~\cite{CCA}. TEEs ensure integrity, even in the presence of machine compromise: they offer the guarantee that the nodes will never deviate from a given protocol implementation. To avoid data loss in the presence of benign crashes, tamper-proof logs then internally deploy a crash-fault-tolerant (CFT) consensus protocol, replicating data across $2f+1$ TEE-augmented replicas.

These solutions work well \textit{under the assumption} that TEEs are correct. Unfortunately, while secure hardware undoubtedly raises the barrier to attack, it is far from foolproof. Most TEEs can (with sufficient effort and access) be compromised. In fact, attackers have extracted user secrets from all major TEEs~\cite{sgx-step, foreshadow, one-glitch, fault-in-bus, malicious-vc, severed, badramsp25} and new vulnerabilities are continuously found.
The time between a vulnerability being discovered and patched is a huge window of opportunity for attackers, especially since these vulnerabilities affect \textit{entire classes of TEEs}. An attacker can potentially compromise all TEEs of the same type within a particular cloud region.

In short, TEE compromises are a rare but catastrophic event.
The current approach of placing full trust in TEEs leaves users vulnerable. The alternative approach of placing \textit{no} trust in TEEs is similarly unsatisfactory. It requires deploying a Byzantine-fault-tolerant consensus protocol (BFT), which comes at a significant performance cost. It requires increasing the replication factor to $3f+1$ and significantly increases latency. This paper asks: \textit{how can one preserve the performance of current append-only ledgers while providing meaningful guarantees in the presence of rare-but-catastrophic TEE compromises}.

To answer this question, we adopt the ethos of \textit{cautious trust}. In the common case, the system should make progress at a speed that matches the performance of current systems. However, the system should \textit{also} offer the ability for nodes to detect and recover to a consistent state when correctness violations occur.
Importantly,  this new \textit{auditing} process should pose no additional burden on clients. Prior works that focus on detecting (but not preventing) TEE compromises~\cite{ccf, microsoft-ccf,brandenburger2017rollback,peerreview} place significant burden on clients (or third-party auditors) who now need to be intimately aware of the "dirty" details of the distributed log.

To achieve this, we make two simple observations. First, we recognize that BFT consensus protocols naturally implement this detect-and-reconcile process.
These protocols guarantee that all nodes will agree on the same state, even if machines (and TEEs) act maliciously. Second, we observe that with careful refactoring, it is possible to \textit{embed} a BFT protocol inside a CFT protocol such that the BFT protocol will, with bounded delay, audit and reconcile all transactions that are committed by the CFT protocol.
In the absence of TEE compromises, committing a transaction offers the same guarantees as existing TEE-based consensus solutions~\cite{Achilles,hybster,gupta2023bft,Wang22}. In the presence of TEE compromises, the BFT protocol commit can then be used to audit the behavior of the original CFT protocol: TEE compromises that lead to divergent state are simply treated as instances of equivocation, which the BFT protocol can detect and reconcile.

Embedding a BFT protocol inside a CFT protocol in this way requires addressing three challenges. First, the number of phases in both protocol types differs.  CFT consensus protocols commit operations in a single phase (ignoring leader election) while BFT requires at least two. Second, quorum sizes are different. CFT consensus requires a simple majority of votes for committing, whereas BFT protocols require a supermajority. Finally, CFT protocols do not typically require additional cryptography such as signatures as all nodes are trusted. Instead they are essential for BFT correctness. %

\sys{}, our new distributed ledger, instantiates this idea. \sys{} \textit{commits} operations assuming crash faults only, but in the background \textit{audits} committed state against Byzantine attacks \textit{without any additional messages}.
Clients can continue to view a transaction as confirmed after it has been committed, as they do today. They may, however, choose to wait for an audit notification to be received if the transaction is highly sensitive or there is a known unpatched TEE vulnerability. \sys{} further guarantees that the auditing process never trails behind the regular commit index by more than a constant bound. Such an approach makes sure that the "blast radius" of an attack is at most the tail of the log that is committed but not yet audited. In comparison, current TEE-based protocols provide no such feature: an adversary can fork the entire log and remain undetected without active client intervention.

\sys{} achieves these guarantees while capturing more realistic failure patterns: the system consists of $\Pi$ independent platforms or trust domains, each of which consists of a variable number of nodes. Nodes may crash independently, while TEE vulnerabilities can compromise entire platforms.

\sys{} carefully repurposes existing BFT techniques to achieve these strong properties. First, \sys{} revisits Upright~\cite{clement09upright} to capture the notion of platforms: nodes crash independently, but platforms are compromised as a group. We coin this term \textit{platform fault tolerance}. %
Second, \sys{} \textit{incrementalizes quorums}: it leverages hash chaining (each batch is linked to all of its predecessors) to ensure voting for a batch is not just a vote for this batch but for all batches that precede it. Late votes or votes from later batches can be used to \textit{complete} old quorums. Thirdly, \sys{} combines pipelining with incremental quorums to ensure that the BFT auditing process never trails behind CFT consensus by more than a constant bound, despite requiring twice as many phases.

Naturally, combining all these techniques requires care. \sys{} must carefully design a new view change that accounts for these dual notions of transaction confirmation. \sys{} introduces the concept of \textit{view stabilization}: a simple extra phase in the view change that allows newly elected leaders to, in the absence of malicious compromise, reliably detect when batches could have committed.

Our results are promising. \sys{} gets best of both worlds. \sys{}'s throughput and \textit{commit} latency matches current TEE-enabled ledgers (including a signed version of Raft~\cite{raft} and production systems like Microsoft's CCF~\cite{ccf}), despite the stronger guarantees. \sys{}'s \textit{audit} latency and throughput instead matches the performance of a state-of-the-art BFT protocol, Autobahn~\cite{autobahn}.
Moreover, we show, on five applications, that users can easily navigate the trade-off of when to leverage commit guarantees (as they do today) relative to when they should upgrade to the new audit feature. In summary, we make three contributions:
\begin{itemize}[leftmargin=*]
\item We introduce platform-fault-tolerance, a close cousin of CFT and BFT that better reflects distributed trust today. (\S\ref{sec:platform})
\item We describe \sys{}, a new append-only log that embeds a BFT protocol inside a CFT protocol to achieve strong integrity properties with good performance. (\S\ref{sec:overview}, \S\ref{sec:protocol})
\item We implement and evaluate \sys{} on representative deployments and applications. (\S\ref{sec:eval})
\end{itemize}

\section{Towards Platform-Fault-Tolerance}
\label{sec:platform}

Distributed trust is increasingly used by application provi\-ders as a tool for building systems with strong reliability and integrity properties~\cite{ccf,svr3,oasis,cts-announcement,Wang22,hybster,gupta2023bft,Achilles,dinis2023rr}. These deployments typically rely on the existence of a tamper-proof ledger that remains correct in the presence of misbehaving nodes, where each node is assumed to be deployed in an independent trust domain, with at most $f$ nodes misbehaving.

\subsection{Why a new failure model?}

Careful study of these deployments reveals two common patterns emerging.  First, \textit{developers are increasingly deploying their distributed trust system within a cloud or across clouds}. Users of distributed trust want increased security, but do not want the operational overhead that comes with it. The cloud offers a convenient always-available black box. Azure Confidential Ledger~\cite{ccf-ledger}, for instance, provides users with the abstraction of a tamper-proof ledger by spreading trust across multiple nodes in Azure. Signal's key recovery service (SVR3)~\cite{svr3} spreads trust across multiple clouds and enclaves.

Second, \textit{distributed trust systems combine algorithmic and hardware protection}. The systems mentioned above all run a (modified) CFT protocol inside a TEE, e.g., Intel SGX, AMD SEV-SNP, etc.
The TEE is \textit{trusted} to preclude Byzantine attacks such as equivocation. The resulting consensus protocol can be simplified and optimized~\cite{gupta2023bft,chun2007a2m,hbft,trinc,Wang22,ccf,messadi2022splitbft,Achilles}, resulting in better performance.
Traditional failure models fail to capture this specificity in three key ways.
\begin{figure}
    \centering
    \includegraphics[width=\linewidth]{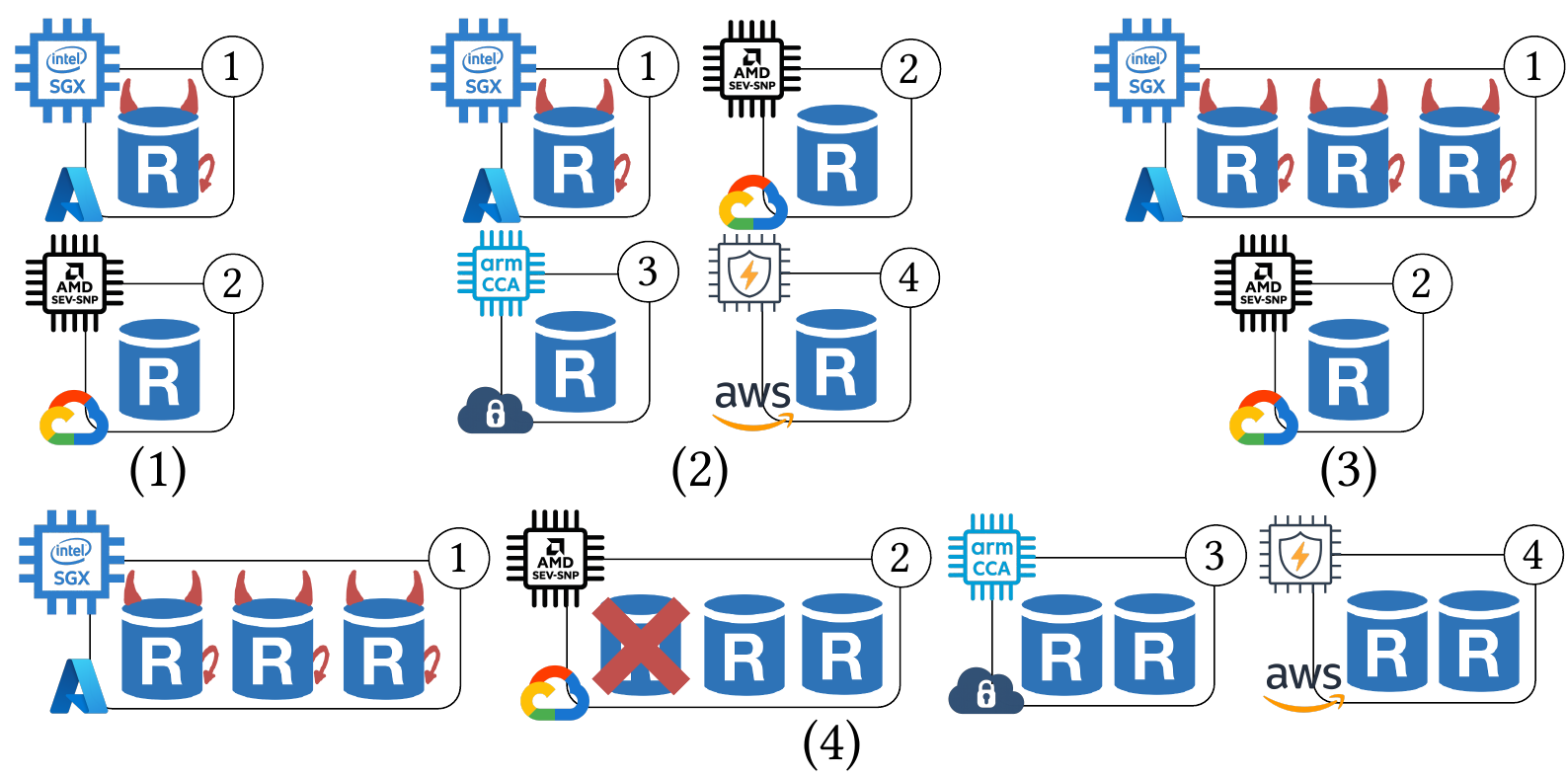}
    \caption{Platform Fault Tolerance. Independent crashes (X) but platform-wide TEE compromises (horns)}
    \label{fig:pft}
\end{figure}

\par \textit{1. They do not match the failure models of cloud-based trusted hardware.} Current failure models place either full trust or no trust in TEEs.  Both extremes are unrealistic. Trusted hardware significantly raises the barrier against Byzantine attacks, making them extremely unlikely and very costly to mount~\cite{badramsp25, batteringramsp26}. But TEEs still do not provide perfect guarantees. Even if the hardware manufacturer is not malicious, vulnerabilities are frequently found in all popular TEEs~\cite{googlesecurityreport,qiu2019voltjockey,chen2019sgxpectre,fei2021security,li2022systematic,li2021cipherleaks}.

\par \textit{2. They do not distinguish between nodes and trust domains.} Current failure models assume that node failures are independent. While this is mostly true for crash failures, TEE compromises are instead inherently correlated. A TEE vulnerability affects all nodes with the same hardware type~\cite{heckler}. %
Rather than considering nodes as independent fault domains, one should instead think of failures at the granularity of \textit{platforms}. Each platform consists of potentially many nodes, each describing a correlated failure domain: shared hardware, shared operators, shared physical location.
\par \textit{3. They conflate safety and liveness.} Cloud-based TEE deployments frequently choose to eschew liveness~\cite{svr3,angel2023nimble} in the presence of malicious compromise.  Malicious hosts can \textit{already} trivially violate liveness simply by refusing to disseminate client requests.
CSPs can also attack liveness by crashing nodes, manipulating hosts' notion of time, or even covertly reordering messages such that the network appears healthy but malicious attacks are easier to launch. The added replication factor cost to guarantee liveness is thus often not worth it. Unfortunately, current failure models offer no way to reason about safety and liveness independently.

\begin{table}[!t]
    \center
    \footnotesize
    \begin{tabular}{cl}
        \hline
        \multicolumn{1}{|c|}{\textbf{Notation}} & \multicolumn{1}{l|}{\textbf{Meaning}} \\ \hline \hline
        \multicolumn{1}{|c|}{$\Pi$}         & \multicolumn{1}{l|}{Total number of platforms}        \\ \hline
        \multicolumn{1}{|c|}{$N$}         & \multicolumn{1}{l|}{Total number of nodes}        \\ \hline
        \multicolumn{1}{|c|}{$N^i$}         & \multicolumn{1}{l|}{Total number of nodes in the $i^{th}$ platform}       \\ \hline
        \multicolumn{1}{|c|}{$\pi_{safe}$}         & \multicolumn{1}{l|}{Max number of malicious platforms (Safety) }        \\ \hline
        \multicolumn{1}{|c|}{$f_{safe}$}         & \multicolumn{1}{l|}{\makecell[l]{Max number of Byzantine failures (Safety) \\ Assume the sequence $\{ N^1, N^2, ..., N^\Pi \}$ is sorted in \\ descending order, then, $f_{safe} = \sum_{j = 1}^{\pi_{safe}} N^j$}}        \\ \hline
        \multicolumn{1}{|c|}{$\pi_{live}$}         & \multicolumn{1}{l|}{Max number of malicious platforms (Liveness) }        \\ \hline
        \multicolumn{1}{|c|}{$f_{live}$}         & \multicolumn{1}{l|}{\makecell[l]{Max number of Byzantine failures (Liveness) \\ Assume the sequence $\{ N^1, N^2, ..., N^\Pi \}$ is sorted in \\ descending order, then, $f_{live} = \sum_{j = 1}^{\pi_{live}} N^j$}}        \\ \hline
        \multicolumn{1}{|c|}{$c$}         & \multicolumn{1}{l|}{Max number of crash failures (Liveness)}        \\ \hline
         \multicolumn{1}{|c|}{$u$}         & \multicolumn{1}{l|}{$u=f_{live} + c$. Total number of node failures (Liveness).}        \\ \hline

    \end{tabular}
\caption{Notation summary PFT}
\label{tab:pft}
\end{table}
\subsection{Formalizing Platform Fault Tolerance}

To address these issues, we propose a new failure model,  platform-fault-tolerance (PFT). PFT distinguishes nodes from platforms, which correspond to correlated failure-domains, and better separates safety and liveness thresholds. A fault-model has two parts: language to model failure thresholds, and correctness guarantees that hold when these failure thresholds are met. We describe both in turn.

\par \textbf{Failure Modeling} In line with past work~\cite{lamport2006lower,Dutta2005BestCaseCO,malek2005fault,clement09upright}, we distinguish between crashed nodes and Byzantine nodes.
A \textit{crashed node} follows the protocol (TEE integrity guarantee is preserved) but stops sending/receiving messages. Nodes crash for a variety of reasons, including traditional hardware failures or software bugs. When a TEE is compromised, this voids the guarantee that the enclave will execute the attested code; the machine may then deviate \textit{arbitrarily} from the protocol. We refer to this as a \textit{Byzantine} failure.

\par \textbf{PFT Guarantees} A PFT system consists of N nodes, split into $\Pi$ platforms $\{p_1, p_2, ..., p_\Pi\}$, with platforms representing a distinct fault domain, be it a distinct cloud operator, cloud region, or hardware stack.
All nodes within a fault domain \textit{may} be compromised by an adversary.
Fundamentally, the notion of platform hinges on the operator's belief of what constitutes fault independence. Some system administrators may believe that different cloud platforms, even if they run the same TEE-type, will not both be compromised together, perhaps because the attacks require physical access. Others will consider two platforms independent only if they run on different clouds, different enclaves, and have a fully independent software stack. Failure tolerance against crash failures or reboots is then achieved by deploying multiple machines within each platform: a platform $p_i$ consists of $N^i$ nodes.

\begin{definition}
A PFT system of $N$ nodes and $\Pi$ platforms is \begin{itemize}
\item safe up to $\pi_{safe}$ platform compromises ($f_{safe}$ total compromised nodes)
\item  live up to $c$ crash failures and $\pi_{live}$ platform compromises ($f_{live}$ total compromised nodes)
\end{itemize}
\end{definition}

PFT systems make two separate \textit{and orthogonal} claims to users. First, a PFT system should be safe: it should expose a single totally ordered ledger. This guarantee holds up to $\pi_{safe}$ platform compromises. Second, the system should remain live: all transactions should eventually be committed. Liveness should be guaranteed up to $\pi_{live}$ platform compromises and $c$ node crashes. Crash failures are independent of platforms and can only impact liveness; PFT thus considers that the system \textit{as a whole} will suffer from at most $c$ crash failures.

Crucially, $\pi_{safe}$, $\pi_{live}$, and $c$ are orthogonal parameters that can be configured independently. A system configured with $\pi_{live}=1$ and $\pi_{safe}=2$ may fail to respond to users if it experiences two platform compromises ($\pi_{live}<2$), but any responses returned will be correct (safety is preserved). Conversely, if configured with $\pi_{live}=2$ and $\pi_{safe}=1$, the system will, with two platform compromises, continue responding but the responses returned may not be correct (liveness is preserved, safety is not) \footnote{In this paper, we always configure \sys{} to remain safe.}.

Allowing for safety and liveness to be configured independently allows for increased flexibility in choosing the number of platforms and replica count within each platform. This is reminiscent of the Upright model~\cite{clement09upright, malek2005fault,lamport2006lower}, which also considers safety and liveness as orthogonal as a way to reduce the replication factor. %
Specifically, solving consensus under PFT requires the total number of nodes $N$ to be:
\begin{theorem}
\label{theorem:node-count}
$N\geq 2(c+f_{live}) + f_{safe} + 1$

where $f_{safe}$ and $f_{live}$ represent, respectively, the maximum number of nodes that can be compromised as a result of $\pi_{safe}$ or $\pi_{live}$ platforms being compromised.
\end{theorem}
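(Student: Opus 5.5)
This is a lower bound, so I would prove it by contradiction with a partition (indistinguishability) argument in the style of the Upright bound $N \geq 2u + r + 1$ of Clement et al.\ and the classical Dwork--Lynch--Stockmeyer argument. Set $u = c + f_{live}$ as in Table~\ref{tab:pft}, so the claim reads $N \geq 2u + f_{safe} + 1$. Suppose a protocol solves consensus with $N \leq 2u + f_{safe}$. Partition the nodes into three disjoint sets $A$, $B$, $F$ with $|A| \leq u$, $|B| \leq u$, $|F| \leq f_{safe}$. I choose $F$ as the union of the $\pi_{safe}$ largest platforms, so that $F$ can be made Byzantine wholesale. The leftover nodes are split into $A$ and $B$. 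Platform boundaries matter only for $F$, because the failures attributed to $A$ or $B$ below are counted against the $u$ budget. That budget is met by crashes, or by crashes plus up to $\pi_{live}$ platform compromises, and crashes are platform-independent. So no platform structure is needed for $A$ and $B$.

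I would then build three executions under partial synchrony. In $E_1$, every node in $B$ has crashed, which uses at most $u \leq c + f_{live}$ failures. Liveness then forces $A \cup F$ to decide some value $v$ after finite time, with $A$ starting with input $v$. In $E_2$, symmetrically, $A$ has crashed and $B \cup F$ decide $v' \neq v$, with $B$ starting with input $v'$. Validity and input choice force distinct values. In $E_3$, all nodes are up and $F$ is Byzantine, which is allowed since $|F| \leq f_{safe}$ comes from $\pi_{safe}$ compromised platforms. The nodes of $F$ equivocate, behaving toward $A$ exactly as in $E_1$ and toward $B$ exactly as in $E_2$. All messages between $A$ and $B$ are delayed beyond both decision times, which is legal before GST.

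Then $A$ cannot distinguish $E_3$ from $E_1$ and decides $v$, while $B$ cannot distinguish $E_3$ from $E_2$ and decides $v'$. This breaks safety even though only $\pi_{safe}$ platforms are compromised, a contradiction. Hence $N \geq 2u + f_{safe} + 1$.

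The delicate step is justifying the liveness claim in $E_1$ and $E_2$. The protocol must decide even though it cannot tell whether the silent set $B$ crashed or is merely slow. Here I would use that liveness is required with up to $u$ silent nodes, so $A \cup F$ must decide without ever hearing from $B$. I also need that $F$ behaves correctly in $E_1$ and $E_2$, so the decision there is legitimately forced. A second point to check is the definition of $f_{safe}$ as the sum of the largest platforms. This makes $F$ a union of whole platforms of size exactly $f_{safe}$, covering the worst case. Under the minimal-size convention, if $N < 2u + f_{safe}$ the sets $A$ and $B$ are simply smaller, which only helps the argument.
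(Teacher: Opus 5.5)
Your argument fails at the sentence claiming that no platform structure is needed for $A$ and $B$. In PFT the system as a whole suffers at most $c$ crashes. So $E_1$ is an execution in which a decision is owed only if the silent set $B$ lies inside at most $c$ crashed nodes plus at most $\pi_{live}$ compromised platforms; symmetrically for $A$ in $E_2$. An arbitrary set of $u=c+f_{live}$ nodes is not of that form. Collecting $f_{live}$ silent nodes from $\pi_{live}$ platforms requires the largest platforms, but you have already placed the $\pi_{safe}$ largest platforms in $F$. Moreover, $A$ and $B$ would each need their own disjoint share of them. If $E_1$ or $E_2$ exceeds the liveness budget, the protocol owes no decision there and the contradiction disappears.

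This cannot be patched inside the genuine platform model, because there the bound is not necessary. Take $c=0$, $\pi_{safe}=\pi_{live}=1$ and platform sizes $(3,1,1,1)$. Then $f_{safe}=f_{live}=3$ and the theorem demands $N\geq 10$, yet $N=6$ suffices. Run PBFT among one representative per platform (four logical replicas, at most one Byzantine), and let the other two nodes of the large platform adopt a value after $2$ matching decision reports. In your construction $F$ is the large platform, so one of $A,B$ contains two singleton platforms, and silencing it takes two compromises.

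The bound is necessary only in the threshold abstraction, where the adversary may silence any $u$ nodes and corrupt any $f_{safe}$ nodes (e.g.\ one node per platform). That is exactly the reading under which the paper argues. Its justification is a quorum-intersection count stated purely in node counts: quorums have at most $N-u$ members for liveness, and a later quorum may contain $u$ nodes that never saw $o$ plus $f_{safe}$ liars, hence $(N-u)-(u+f_{safe})\geq 1$.

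If you adopt that abstraction explicitly, your three-execution argument is sound, with two small fixes. Keep $A$ and $B$ nonempty by shrinking $F$ rather than them when $N<2u+f_{safe}$. Give $F$ input $v$ in $E_1$ and $v'$ in $E_2$ so validity pins both decisions. It is then a genuinely stronger route than the paper's: the counting argument only constrains protocols that decide through node quorums of size at most $N-u$, whereas indistinguishability rules out every protocol.
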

We defer the proof to \Cref{sec:safety}.
For simplicity, we write $u=f_{live} + c$.
Note that, in the traditional CFT setting, $f_{safe} = f_{live} = 0$, yielding $N \geq 2c + 1$, the CFT bound. Similarly, in the traditional BFT setting, $u = f_{safe} = f$, which yields $N \geq 3f + 1$, the same bound that BFT protocols expect. PFT never requires more nodes than that required by traditional BFT protocols.

This result provides two key benefits:
\\
\textit{1. It allows us to quantify the degree to which sacrificing liveness reduces replication overhead.}
Increasing tolerance to malicious compromise for providing liveness ($f_{live}$) requires increasing the total number of nodes by twice the amount necessary to improve safety ($f_{safe}$). Hence why many cloud-based ledgers choose to eschew liveness. The ability to strike this tradeoff is especially important in TEE-setups where the limited number of cloud providers and TEEs limit the number of platforms one can deploy. Assume, for simplicity, that all platforms consist of a single node so $\pi_{safe} = f_{safe}, ~\pi_{live} = f_{live}$ and $\Pi = N$. To guarantee safety in the presence of a single platform compromise, so, $\pi_{safe} = f_{safe} = 1$, two platforms suffice (Fig. \ref{fig:pft}(1)). Remaining live in the presence of a platform compromise ($\pi_{live}=f_{live}=1$) requires adding not one but two platforms $\Pi=N=4$ (Fig. \ref{fig:pft}(2)).

This asymmetry stems from needing to prevent conflicting operations from committing: commit quorums for operations $o$ and $o'$ must always intersect in one honest node (an honest node will never commit both). The challenge arises when one still wants to make progress when  $u=c+f_{live}$ replicas may not respond. This bounds the maximum size of a quorum for $o$ to $N-u$ replicas. $Q_o$, for instance, may only have included nodes $\{n_1,n_2,n_3\}$. The quorum $Q_{o'}$ (where $N-u\geq|Q_{o'}|$) formed for $o'$ may thus include up to $u$ nodes that do not know about $o$ (here $n_4$) as well as up to $f_{safe}$ malicious nodes that lie about $o'$ (here $n_1$).  $Q_{o'}$ is only guaranteed to see $o$ if it includes an honest node that definitely have seen $o$: $ |Q_{o'}| - (u + f_{safe})\geq 1$. Setting $|Q_{o'}|$ to its maximum size of $N-u$ yields $(N - u) - (u + f_{safe})\geq 1$. Rearranging the terms yields \cref{theorem:node-count}. Concretely, if $Q_{o} = \{n_1,n_2,n_3\}$ and $Q_{o'}=\{n_1,n_3,n_4\}$, $Q_{o'}$ is guaranteed to learn about $o$ through $n_3$ even though $n_1$ may lie and $n_4$ never knew about $o$.

\textit{2. It captures failure correlation without explicitly mentioning platforms.} The benefits of this decision are more subtle as they pertain to algorithm design. Omitting direct reference to platforms allows us to design a \textit{single} reconfiguration algorithm (see \Cref{sec:reconfiguration}) to reconfigure both nodes and platforms.  The equation instead captures failure correlation within a platform through parameters $f_{safe}$  and $f_{live}$: these are computed according to the maximum number of machines that become vulnerable when platforms get compromised. In Fig. \ref{fig:pft}(3) for instance, if one wants to tolerate one platform compromise for safety ($\pi_{safe}$), one should assume that, in the worst case, the largest platform will be compromised. As such $f_{safe}=3$ and $N=4$. Seeking to additionally tolerate $\pi_{live}=1$ (making $f_{live} = 3$) and $c=1$ brings up
the total number of nodes to $N=3+2*(3+1)+1 = 12$ (\cref{fig:pft}(4)).

 \par \textbf{Limitations} As the definition of what constitutes a platform depends on the operators belief about fault independence, unknown shared vulnerabilities across platforms may lead to more than $\pi_{safe}$ platform compromises. This is akin to exceeding the $f$ threshold in the BFT or CFT model, and is fundamental. Nonetheless, PFT does provide the ability to evolve platforms through reconfiguration (see \Cref{sec:reconfiguration}) as beliefs about fault independence change over time.
For example, an operator may have initially considered a (TEE, region) pair as an independent platform.
If they later discover a vulnerability in a shared library used in all of the platforms, they may consider reconfiguring to add platforms that do not share that dependency.

\section{\sys{} Overview}
\label{sec:overview}

\sys{}, our new PFT ledger, recognizes that platform compromise of TEEs is a rare but possible event. \sys{} assumes that, in the common case, TEEs are secure and runs a CFT protocol atop the TEEs to commit transactions. \sys{} additionally embeds a BFT protocol that audits transactions off the critical path. This allows \sys{} to detect malicious behavior and recover from rare platform compromises at no extra cost.

\subsection{API and Guarantees}
\sys{} supports two notions of transaction confirmation. \textit{Committed} transactions are guaranteed to be durable in the absence of TEE compromises. This is equivalent to the \textit{current} guarantees offered by TEE-augmented systems such as Signal's SVR3 or Microsoft's CCF. Formally:
\begin{definition}[Committed Transactions]
With no platform compromises, no two correct replicas consider transactions $C$ and $C'$ ($C\neq C')$
committed at the same position in the log.
\end{definition}

When TEE compromises do occur, committed transactions can be rolled back. \sys{} offers an additional \textit{audit} API that guarantees safety in the presence of such attacks. If a transaction has been audited, it can never be rolled back. %
\begin{definition}[Audited Transactions]
Up to $\pi_{safe}$ platform compromises, no two correct replicas consider transactions $C$ and $C'$ ($C\neq C'$)  audited at the same position in the log.
\end{definition}
Committed transactions are asynchronously audited.
This process bounds the degree of trust that one has to place in TEEs: a TEE compromise only affects transactions that have been committed but not yet audited, keeping the window of vulnerability small. In contrast, existing protocols that place full trust in TEEs would allow the entire log to be rolled back in the presence of even a single TEE compromise.

Choosing how to leverage this API is determined by the application. There are several options. As commit is the guarantee currently offered by major distributed trust deployments (Azure Confidential Ledger~\cite{ccf, ccf-ledger} and all prior TEE-assisted protocols), we assume that most clients will continue to consider commit guarantees as good enough and thus make no changes to their codebase. Pessimistic clients may instead take advantage of the added security and additionally wait for transactions to be audited, but at the cost of higher latency. This decision can evolve over time: clients can switch from commit to audit if a new vulnerability is discovered, and switch back once it has been patched. The decision can also be made on a per-transaction basis: sensitive transactions may, for instance, require auditing before they are considered final by clients (reconfiguration, revoking permissions, high-value bank transfers, etc.). We discuss several use cases in \S\ref{sec:eval}.%

\subsection{Protocol Overview}
\begin{figure}[!t]
    \centering
    \vspace{-0.4cm}
    \includegraphics[width=1.0\linewidth]{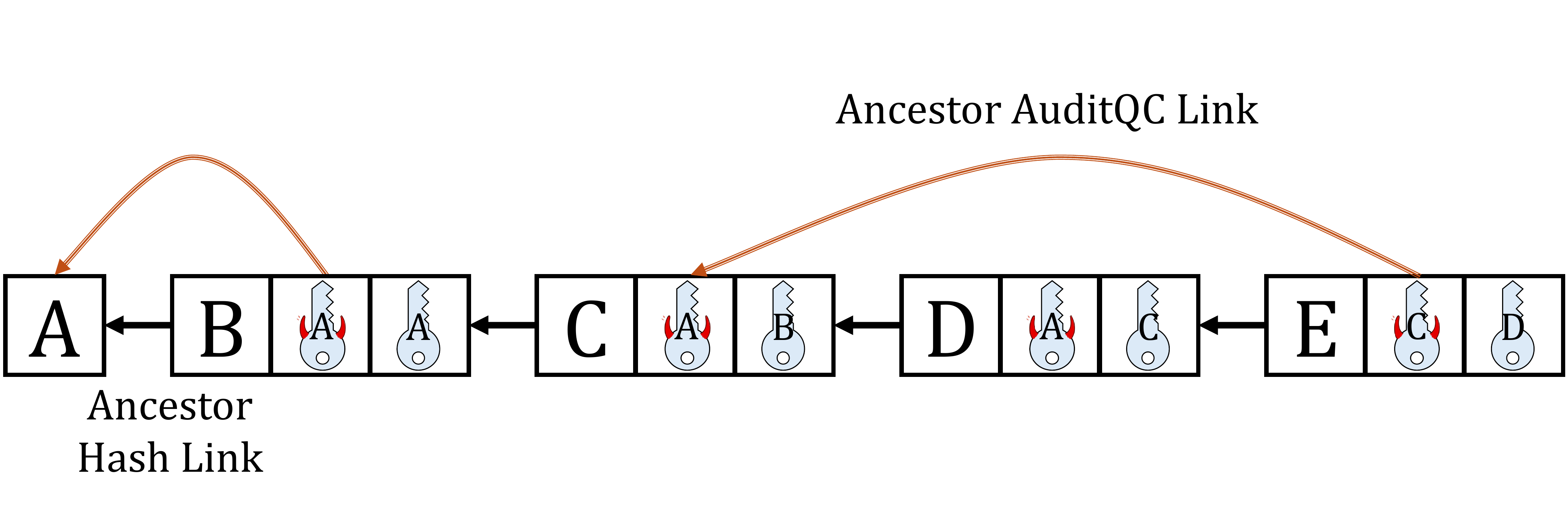}
    \vspace{-1cm}
    \caption{Log, Batches, and QCs}
    \label{fig:log}
\end{figure}
\sys{} shares the structure of other consensus protocols. It consists of three components: steady-state with an optional fast path, a view-change, and commit logic. %
\sys{} proceeds in a sequence of \textit{views}, with a designated leader per view. Each view consists of assembling one or more \textit{quorums} (set of votes) from sufficiently many replicas. As is standard, \sys{} assumes that the adversary cannot break standard cryptographic primitives. \sys{} similarly makes standard timing assumptions: it makes no assumption about message arrival time for safety but requires bounded delivery for liveness. \sys{} further leverages the lift-and-shift approach that new TEEs enable. The protocol code is agnostic to the TEE platform. %
\sys{} directly inherits the guarantee that nodes will always follow protocol-spec in the absence of platform compromise, but recognizes that they may deviate arbitrarily from it when attacked.

\sys{} carefully repurposes existing BFT techniques to embed a BFT protocol with no additional message overhead in a way that guarantees the same audit throughput as commit throughput. \sys{} specifically leverages three techniques: 1) hash chaining 2) pipelining 3) fast audits.

First, \sys{} leverages hash chains to \textit{incrementalize} quorum formation.
It cryptographically links each batch of transactions to all its predecessors in the chain and thus associates each new proposal with a unique voting history. A vote on a batch is also a vote for all batches (and thus all quorums) that precede it in the chain. Late votes, as well as votes received from a \textit{different} set of replicas for later operations, can be used to "complete" prior commit quorums and asynchronously harden them into audit quorums.
Second,
\sys{} combines pipelining with incremental quorums to ensure that the BFT auditing process never trails behind CFT consensus by more than a constant factor, despite requiring twice as many phases. Pipelining enables message sharing: the second phase of the BFT protocol can be layered atop the next batch of the CFT algorithm.
Finally, \sys{} supports an audit fast path that allows it to optimistically audit commit decisions in a single round-trip during gracious intervals~\cite{aardvark}.

Naturally, combining all these techniques requires care. \sys{} must revisit the traditional view change logic. BFT view changes only preserve transactions that have been audited. In \sys{}, branches that have been committed but not audited must also be preserved across views (in the absence of malicious actors). The asymmetry between the smaller, unsigned CFT quorum and the larger signed BFT quorum makes this challenging. To this effect, \sys{} introduces a new concept: \textit{view stabilization}. View stabilization adds an extra round to the view change to ensure that, in the absence of malicious nodes, if a leader commits a branch, future leaders will always reject older conflicting branches.

We formally prove \sys{} correct and include a formal TLA+ specification in \Cref{subsec:tlaplus}.

\section{Protocol Description}
\label{sec:protocol}

We now describe \sys{} in more detail. Note that, for ease of exposition, we omit writing the standard checks performed by all replicas (are the messages well-formed, signatures valid, and from the right view?) but assume that all replicas check that messages are valid prior to processing.

\begin{figure*}[t]
    \centering
\includegraphics[width=\linewidth]{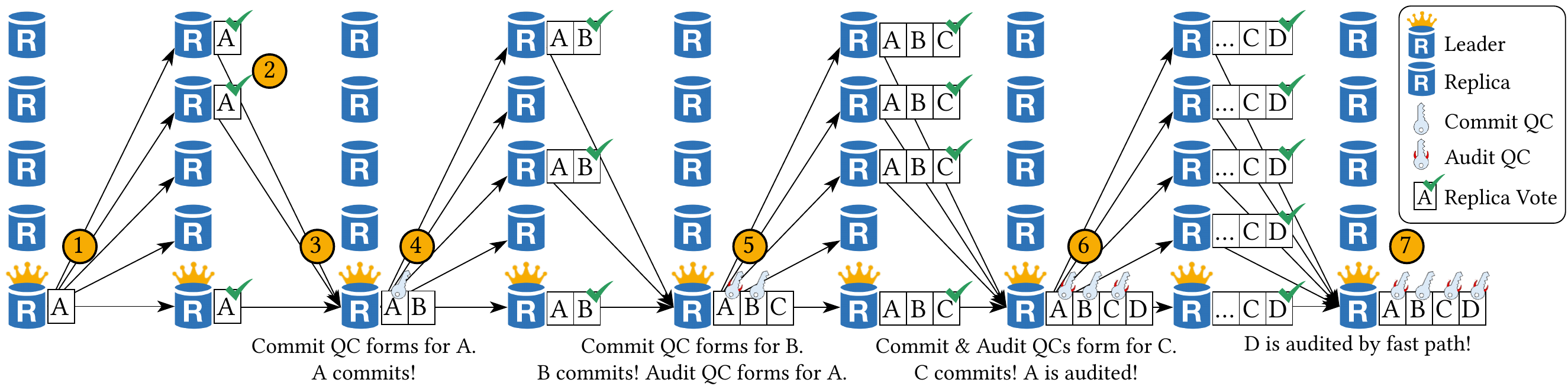}
    \caption{Steady-state with five replicas ($u = f_{safe} = 1$) and four batches (A,B,C,D). The example uses more nodes than necessary to make the differences in commit and audit QC apparent ($N=5\geq2u+f_{safe}+1$). Quorum sizes for commit, slow and fast audits are $3, 4,$ and $5$ respectively.  }
    \label{fig:steadystate}
\vspace{-0.2cm}
\end{figure*}

\subsection{Notation}
\begin{algorithm}
    \begin{algorithmic}[1]
    \State \textbf{$v_{curr}$}: Node's current view
    \State{\textit{$b_{curr}$}: Local branch seen by the node.}
    \State{\textit{highAuditQC}: Highest auditQC known to node} %
    \State{\textit{highCommitQC}: Highest commitQC known to node} %
    \State{\textit{auditIndex}: Audit index}
    \State{\textit{commitIndex}: Commit index}
    \State{\textit{pendingQCs[][]}: Mapping of (index,replica id) to votes received}
    \end{algorithmic}
    \caption{Node State Cheat Sheet}
    \label{alg:state}
\end{algorithm}
\par \textbf{Batch Format.} As is standard, \sys{} batches client requests into \textit{batches}, with each batch containing
a hash pointer to its parent batch. %
A batch $B$ contains the following information
$B=\langle v,i, p, commitQC_{anc},
auditQC_{anc}, b \rangle_{Opt<sig>}$. $v$ denotes the view the batch was
proposed in, $i$, the batch index number (its position in the ledger),
$p$, a pointer to the hash of its parent batch, the commit/audit quorum certificates certifying an ancestor
batch of $B$: $commitQC_{anc}$
and $auditQC_{anc}$; and $b$, the sequence of client transactions. Batches may optionally be signed by the leader ($Opt<sig>$).

\par \textbf{Batch extension and conflicts}. The sequence of parent pointers links batches into a chain. We formally define an ancestor (or predecessor) of $B$ to be any batch $B_{anc}$ for which a sequence of parent pointers links $B$ to $B_{anc}$.
$B'$ extends (or is a child of) a batch $B$ %
if $B$ is an ancestor of $B'$.  In contrast, $B'$ conflicts with $B$ if neither extends the other. %
\par \textbf{Quorum Certificates}. A Quorum Certificate, $QC=\langle Q, B \rangle$ consists of a set $Q$ of $k$ \textsc{Resp} messages from distinct nodes, and a digest of the batch $B$ being voted on or for one of its descendants.
\sys{} considers two types of quorum certificates, \textit{commit quorum certificates (commitQCs)} and \textit{audit quorum certificates (auditQCs)}. Audit quorum certificates must necessarily contain at least $k =~$\supermajority \textit{signed} \textsc{Resp} messages.  Commit quorum certificates relax this requirement. They form after receiving $k=$\majority \textsc{Resp} messages, with or without signatures.
The lack of signatures creates an interesting asymmetry that \sys{} must carefully reason about in the protocol: unsigned messages can be spoofed during Byzantine attacks. %

\par \textbf{Branches}  Each branch individually forms a chain of batch\-es that extend each other. \sys{}'s chaining allows us to uniquely represent a branch with a single batch, as each batch contains a pointer to its parent as well as the highest QCs known to the leader when the batch was formed.
Conflicting branches, as the name suggests, contain batches that conflict with batches of the other branch.
Fig. \ref{fig:log} depicts a branch with multiple hash-chained batches and QC links among them.

\subsection{Steady-state} We first describe the steady-state. We include redundant information in messages for clarity, but optimize away redundancy as part of our concrete implementation (\S\ref{subsec:impl-notes}). Node state is summarized in \Cref{alg:state}.

\protocol{\textbf{1: L} $\rightarrow$ \textbf{R}: Leader $L$ broadcasts \msg{Append-Entry}{B}.}
\par The leader $L$ for view $v$ forms a new batch $B=\langle v_{curr}, i, p,\\
highCommitQC, highAuditQC, b \rangle_{Opt<sig>}$.
  $v_{curr}$
 is the leader's current view, $p$ is the hash of the last batch in the chain, and $i$ is the sequence number of the batch.  $highCommitQC$ and $highAuditQC$ are the highest commit and audit QCs known to the leader. $b$ is the set of client transactions. For efficiency, \sys{} has the leader sign every $s$ batches only. Signatures are only necessary for auditing. Thanks to hash-chaining, auditing a single batch transitively audits all of its ancestors. $L$ then updates $b_{curr}$ to $B$, before broadcasting the \textsc{Append-Entry} message. Finally, $L$ inserts $B.i$ in $pendingQCs$ to asynchronously track the votes $B$ receives.

\protocol{\textbf{2: R} $\rightarrow$ \textbf{L}: $R$ processes \msg{Append-Entry}{B} and votes.}

$R$ checks that \one $L$ is the current view's leader, \two $R$ has not yet voted for this sequence number in this view. If $B$ is the direct child of $R$'s current branch ($B.parent=b_{curr}$), $R$ then moves directly to voting. If $R$ is \textit{not} the direct child ($B.parent.i >b_{curr}.i$), $R$ retrieves all batches between $b_{curr}$ and $B$ from the leader using the hash of $B.parent$ embedded with $B$ ($B.p$). Thus $R$ can retrieve $B.parent$ from the leader (and recursively, any ancestors).
In this process, if $R$ discovers that $B$ conflicts with $b_{curr}$, it will call for a \textsc{view-change} as $L$ may be malicious. Otherwise, $R$ appends all retrieved batches to $b_{curr}$ and moves to voting.

$R$ updates its local branch ($R.b_{curr} \gets B$) and both its high commit QC ($R.highCommitQC \gets B.highCommitQC$) and high audit QC ($R.highAuditQC \gets B.highAuditQC$). It inserts $B.i$ in $pendingQCs$ and returns a vote $\langle \textsc{Resp}, B, \allowbreak~Opt<sigs>\rangle$, which optionally contains, if $B$ is signed, an array $sigs$ of signatures on all \textit{signed} batches in $pendingQCs$, up to and including $B$.

 \protocol{\textbf{3: L}: $L$ assembles quorum of \textsc{Resp}{} messages.}
\par The logic of how the leader $L$ assembles a quorum is key to \sys{}'s ability to embed a Byzantine commit protocol (referred to as the audit) inside of a CFT protocol without additional messages.
In \sys{}, the leader tracks the set of \textsc{Resp} messages received for every batch $B$ in $pendingQCs$ and stores all new \textsc{Resp} messages according to their log entry. As batches in \sys{} are \textit{chained} (a batch $B$ uniquely references all of its parents), a vote for $B$ is also a vote for any ancestors that $B$ descends from. $L$ updates $pendingQCs$ accordingly: for all sequence numbers $j$ in $pendingQCs$ with $j \leq \textsc{Resp}.B.i,~ pendingQCs[j][R.id] \gets \textsc{Resp}$.

Next, the leader must determine when it has received sufficiently many votes to form either a commit or an audit quorum.
The leader constructs a commit quorum for $B$ when $pendingQCs$ contains \majority \textsc{Resp} messages for $B$. %
Similarly, the leader can construct an audit quorum when $pendingQCs$ contains at least \supermajority  \textit{signed} \textsc{Resp} messages.
Each time the leader detects that a new commit or audit quorum $QC =\langle Q, B \rangle$  has formed for a higher batch $B$, it updates (respectively) the $highCommitQC$ or $highAuditQC$.

This cumulative, asynchronous quorum formulation logic is powerful. It allows \sys{} to harden commit quorums into audit quorums using either 1) late votes from other replicas that were not needed for the commit quorum or 2) votes from distinct replicas in subsequent batches! In either case, no additional messages are needed. Moreover, this process is asynchronous. The leader never waits on a commit or audit quorum to send the next \textsc{Append-Entry}.

\par \textbf{Confirmation Rules} Recall that \sys{} is a pipelined protocol: \textsc{Append-Entry} messages
contain information about both the new batch and also prior batches that may now have committed.
The commitQCs and auditQCs sent as part of the batch inform replicas of which prefix of the log has formed a QC. Upon updating their local $highCommitQC$ and $highAuditQC$, each replica locally applies \textit{confirmation rules} to  determine the new status of a batch. The logic differs between commit and audit, as they uphold different safety guarantees: committed operations must persist across crashes but may be forgotten when nodes are compromised. Audited operations must instead persist up to $f_{safe}$ machine compromises. This stricter requirement places additional constraints on the Byzantine audit as it must remain robust to 1) spoofed messages 2) equivocation 3) state fabrication.

\par \textit{1. Commit} A batch is committed if it is replicated at a majority of nodes \textit{in the same view}. This follows from traditional crash fault tolerant logic: in the absence of Byzantine faults, any two quorums of $\lfloor\frac{N}{2}\rfloor + 1$ nodes intersect in at least one non-faulty node.
Nodes can thus mark the batch committed as soon as they learn that a commit QC has formed for $B$ or a descendant of $B$. Note that, since commit QCs do not need signed votes, a malicious leader can spoof a commit QC and pretend it was replicated to a majority. This is fine as committed transactions need not remain durable in the presence of such malicious actors.
\par \textit{2. Slow Audit}
Auditing a batch is a two-hop process (\cref{fig:log}). A node considers a batch $B$ audited if, \textit{in the same view}, an audit QC $Q$ forms for $B$ or descendants of $B$ and a second QC forms for the batch that first included Q (or its descendants).  The first QC ensures that no conflicting batch can be proposed as part of the same view (\textit{non-equivocation}): two QCs of \supermajority signed votes must necessarily intersect in at least one correct replica, and no correct replica would vote for conflicting batches in the same view. The second QC instead guarantees that this decision persists across views (\textit{durability}): a correct node with knowledge of $Q$ will be included in all subsequent view changes, allowing the leader to repropose the batch.
\par \textit{3. Fast Audit} In general, considering a transaction audited after a single audit QC is insufficient to ensure that the batch will persist across views. However, we observe that the leader usually, barring node crashes, does \textit{eventually} receive the full set of $N$ votes from distinct replicas. This can happen either directly through signed votes for a given batch $B$, or indirectly, for a batch $B'$ that extends $B$. This single, larger audit quorum of $N$ votes directly guarantees durability across views (any subsequent leader will observe at least $N - u - f_{safe}$ votes for this proposal). Upon seeing that $N$ signed votes have been received for $B$ or a descendant of $B$, the replica can thus immediately consider the transaction audited. Note that, if \sys{} has been configured with $N - u - f_{safe} \leq f_{safe}$  (a configuration that optimizes liveness over safety), fast audit is not possible. Byzantine replicas could spoof a branch, making it look like it was audited by fast path (as $f_{safe}$ Byzantine replicas can manufacture $f_{safe} \geq N - u - f_{safe}$ matching votes).
\par \textbf{Worked Example} Consider the execution described in  \cref{fig:steadystate}.
For simplicity, each replica is its own platform, and all replies are signed.
We set $u = 1$ and $f_{safe} = 1$. This requires $N \geq 4$ nodes. We set $N = 5$ to bring out the differences in quorum sizes. For commits, the quorum size is $\lfloor \frac{N}{2} \rfloor + 1  = 3$, and for slow audits, quorum size is $N - u = 4$. Fast audits require all $N = 5$ votes.
Leader $R_5$ broadcasts batch $b_A$ (\tc{1}).
$R_1$, $R_2$ and $R_5$ vote to accept it (\tc{2}).
Having received three votes (\tc{3}), $R_5$ commits $b_A$ and propagates the commitQC with the next batch $b_B$ (\tc{4}), allowing other replicas to also commit. $R_5$ similarly commits $b_B$ with votes from $R_1$, $R_3$, and $R_5$.
Receiving a vote from $R_3$ on $b_B$ and \textit{its entire history} allows $R_5$ to upgrade $b_A$'s quorum from three to four votes, thus creating an audit QC for $b_A$ (\tc{5}).
$R_5$ includes this auditQC for $b_A$ (and the commitQC for $b_B$) with the next batch $b_C$ .
$R_5$ receives four votes for batch $b_C$ and immediately forms an auditQC.
Note that the auditQC on C is also a QC on the auditQC for A. This suffices for $R_5$ to audit $b_A$ on the slow path (\tc{6}). This information is propagated with the next batch $b_D$.
Finally, $R_5$ receives votes from all five replicas, and can directly fast audit $b_D$ (\tc{7}).

\subsection{View-Change}
A faulty leader may stall progress by failing to commit or audit transactions. As is standard, \sys{} relies on timeouts to detect these situations and elects a new leader through a \textit{view change}. Traditional BFT view changes ~\cite{giridharan2023beegees, gueta2019sbft, hotstuff, jolteon-ditto, PBFT} usually have three steps: \one the view-change trigger, where sufficiently many replicas revolt against the current leader, \two the branch selection logic, where the new leader selects the branch to extend, and \three the dissemination step, where the new leader broadcasts the new view.
\sys{} must, however, handle an additional challenge: the presence of both committed and audited transactions.
Without care, committed-but-not-yet audited transactions can be lost due to a combination of their smaller quorum size and legitimate network asynchrony.
To address this issue, \sys{} adds a fourth step called \textit{View Stabilization}. View stabilization ensures that, in the absence of malicious actors,  new views will always extend a branch containing all committed transactions.

\par \textbf{Protocol Logic}  A replica starts a timer for view $v$ upon receiving a \textsc{new-view} message, announcing the start of the new view. It resets the timer every time an \textsc{append-entry} message includes a new auditQC.

\par \protocol{\textbf{1: R} $\rightarrow$ \textbf{R}: Replica $R$ broadcasts \textsc{view-change}.}

A replica $R$ whose timer expires, broadcasts a message \msg{View-Change}{v_{curr}, b_{curr}}$_R$.
The message contains the highest auditQC seen by the replica as a part of its current branch $b_{curr}$.  $R$ will ignore future messages in view $v_{curr}$.

\par \protocol{\textbf{2: R} $\rightarrow$ \textbf{R}:  $R$ receives \textsc{view-change} and advances view}

Upon receiving $f_{safe} + 1$ \textsc{view-change} messages in view $v$, $R$ concludes that this view-change request is legitimate
(at least one correct node requests it). It stops responding to messages in view $v_{curr}$, and broadcasts its own \textsc{view-change} message. This amplifying step ensures that if one correct replica triggers a view-change, all correct replicas will eventually join the new view.
Upon accepting \supermajority distinct \textsc{view-change} messages for $v$, $R$ advances its local view $R.v_{curr} = v+1$ and starts a timer for $v+1$.
If $L$ is the leader for view $v+1$, $L$ must additionally select a branch to extend. There might, however, be several conflicting branches in the set $\mathcal{V}$ of \textsc{view-change} messages received. $L$ uses a \textit{branch selection rule} to decide on which branch to extend (more detail soon).  The leader then broadcasts a \textsc{new-view} message with the chosen branch and $N - u$ \textsc{view-change} messages as proof. In traditional BFT consensus protocols, this step normally marks the end of the view-change.  The asymmetry between commit and audit quorums in \sys{} requires additional care in \sys{}. If we were to stop there, a new leader could commit blocks on a branch that does not possess a recent auditQC. In future views, another leader, upon seeing a conflicting branch with an older auditQC would have to extend it (that branch may have been audited), causing committed transactions to be lost (\Cref{sec:vs}).

\par \protocol{\textbf{3: L} $\rightarrow$ \textbf{R}: Leader $L$ broadcasts \textsc{New-View} and waits.}

To initiate a view, the leader first replaces its own local branch, if conflicting, with $B_{chosen}$. It then constructs a \textsc{New-View} message, which is a special \textsc{Append-Entry} message that contains a blank signed batch $S$ in view $L.v_{curr}$ that extends $B_{chosen}$. The message additionally contains the set $\mathcal{V}$ of \supermajority \textsc{View-Change} messages that the leader used for branch selection.
The leader then broadcasts \msg{New-View}{S,~\mathcal{V}}  to all other replicas.
Unlike other pipelined \textsc{Append-Entry} messages, this \textsc{New-View} message is blocking: $L$ waits for \supermajority votes for $S$ before proposing any new batch.

\protocol{\textbf{4: R} $\rightarrow$ \textbf{L}: Replica $R$ processes \textsc{new-view} and votes.}

$R$ first checks that \one the leader is for the current view, \two that $S$ extends the correct $B$ given the set $\mathcal{V}$ of \textsc{view-change} messages and the branch selection rule.
 $R$ then updates its local branch $R.b_{curr} \gets S$ and both its high commit and audit QC ($R.highCommitQC \gets S.highCommitQC$, $R.highAuditQC \gets S.highAuditQC$). Finally, it returns a signed \textsc{Resp} message to $L$. While the processing of \textsc{new-view} and \textsc{Append-Entry} are similar, there is one key difference: $L$'s chosen branch may conflict with $b_{curr}$ (due to equivocation or asynchrony).
In that case, $R$ \textit{rolls back} batches from $b_{curr}$ until $b_{curr}$ does not conflict with $B_{chosen}$.
This forced removal of batches is key to \sys{}'s reconciliation capabilities.

\protocol{\textbf{5: L}: Leader $L$ waits for \supermajority \textsc{Resp}{} messages.}

The leader synchronously waits for \supermajority \textsc{Resp} messages and forms an audit quorum on $S$. It updates $L.highCommitQC$ and $L.highAuditQC$ and starts processing batches as normal.
Only at this point does the leader consider the view \textit{stabilized}.
Effectively, \textit{view stabilization phase} blocks a new leader from adding new batches until it has formed an auditQC for the chosen branch. Committed entries will thus always extend the latest auditQC. As the view-change QC  intersects the commit QC in at least one node, the leader is guaranteed to preserve all committed branches by selecting the branch with the highest audit QC (in the absence of malicious replicas).

\par \textbf{Branch Selection Rule} Our discussion so far has not described  how $B_{chosen}$ is chosen. View stabilization and the branch selection rule work closely together to ensure that audited branches are preserved always, and committed branch\-es are preserved in the absence of malicious replicas.  Branch selection consists of four filtering steps, applied in order.

\protocol{\textbf{Rule 1.} Select  branch(es) that could have been audited}

First, the leader selects the branch(es) with $highAuditQC$ in the highest view. Any audited branch will be part of that set. Auditing a branch requires two auditQCs in the same view. If a branch was audited, the view change will thus necessarily see at least one audit QC for it (two sets of $N - u$ nodes intersect in at least one honest node). The view stabilization logic further guarantees that any committed branch will belong to this set: committed branches always extend the latest audit QC.

\protocol{\textbf{Rule 2.} Select branch(es) that could have gone fast path}

Next, the leader must ensure that it extends any branch that was audited on the fast audit path.  As the fast path requires $N$ votes to proceed, any batch that appears $k=N - (u + f_{safe})$ times could have gone fast path ($u$ nodes may not reply, and $f_{safe}$ may lie). The leader thus selects any branch $B$ that appears $k$ times.

\protocol{\textbf{Rule 3.} Select branch with highest view}
\protocol{\textbf{Rule 4.} Select branch with highest sequence number}
~
The first two steps ensure that audited branches are preserved. The last two steps ensures that committed branches are preserved \textit{in the absence of malicious replicas}. The view stabilization process guarantees that, in the absence of malicious nodes, the branch with 1) the highest view 2) the highest sequence number in the view change will always correspond to the latest committed branch. A commit QC (\majority) intersects the view change QC $(N-u)$ in $X \geq \lfloor \frac{f_{safe} + 1}{2} \rfloor + 1$ nodes, where $X$ may be smaller than $f_{safe}$. If the leader sees $X$ votes for committed operations in the view change, it must assume $B$ could have committed and select that branch. Unfortunately, if $f_{safe} \geq X$, there may be equally many conflicting votes for a branch $B'$ that conflicts with $B$, as malicious replicas can simply spoof them. This does not violate safety as committed branches are only guaranteed to persist in the absence of any malicious node compromises. A leader can select any of these conflicting branches.

\subsection{Receipts}\label{sec:receipts}
In addition to providing execution results via the usual $f_{safe}+1$ matching responses, \sys{}  provides cryptographic evidence of committing (and auditing) of transactions to clients through receipts~\cite{ccf, ethereum}.
Receipts act as \textit{transferable} evidence: clients can use these to convince other clients that their transaction was committed or audited.
As soon as a client's transaction is committed (resp. audited), the leader responds back to the client with a \textit{signed} commit (resp. audit) receipt. Commit receipts simply include a Merkle inclusion proof of the transaction in a batch.
Audit receipts additionally contain proof that the batch in which the transaction was included is audited.
This has two components:
(1) one or more \qcs (depending on whether fast or slow path audits have been used) certifying that a batch extending from the transaction's batch has been audited, and
(2) a short segment of the hash-chain connecting the transaction's batch to the audited batch.
Receipts have bounded size: Merkle inclusion proofs scale logarithmically with the (bounded) batch size  and the size of the hash-chain is at most the batch signing interval.

\subsection{Implementation Notes}
\label{subsec:impl-notes}
The protocol in the previous section included inefficiencies for simplicity of exposition. We briefly summarize them here.

\par \textbf{Digest} \sys{} sends the full batch only once as part of the first \textsc{Append-Entry} message. In other cases, it includes a digest of the message only. %

\par \textbf{Materializing CommitQC} We described each batch has having both a commitQC and an auditQC. In reality, it is unnecessary to send the $commitQC_{anc}$ over the network as
(a) it can't be used to trace the origin of votes since the votes may or may not have signatures,
(b) for commits, the leader is assumed to be non-malicious.
Thus, sending only the commit index, i.e., $commitQC_{anc}.B.seq$ with each batch suffices. %

\section{Evaluation}
\label{sec:eval}
In this evaluation, we answer four questions:
\begin{enumerate}[leftmargin=*]
    \item  What is the cost of background auditing in \sys{}?
    \item How does \sys{} perform in a multi-platform setup?
    \item How quickly does \sys{} detect and reconcile logs in the presence of platform compromises?
    \item How can applications make use of asynchronous auditing?
\end{enumerate}

\par \textbf{Implementation}
We implement \sys{} in $\sim$8k lines of Rust.
We use Protobuf for serialization, and Tokio for event-driven programming and networking. %
We use the Ed25519 signatures (dalek-cryptography library) and SHA512 hashes.

\begin{figure*}[!t]
\begin{subfigure}[b]{0.60\linewidth}
    \centering
    \includegraphics[width=0.9\linewidth]{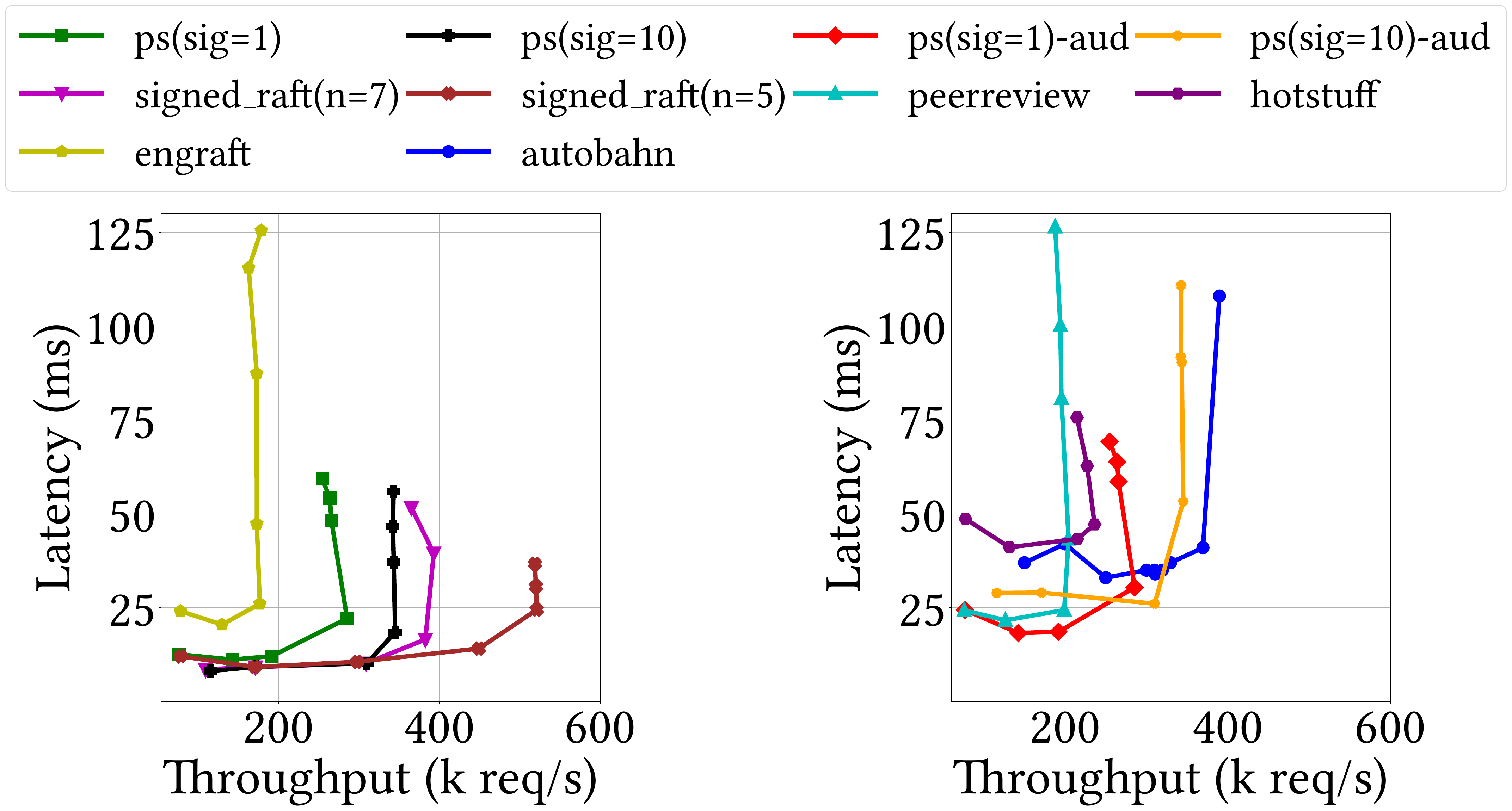}
    \caption{Transaction commit (left) and audit (right) performance }
    \label{fig:lan-experiment}
\end{subfigure}
\begin{subfigure}[b]{0.39\linewidth}
    \centering
    \includegraphics[width=0.9\linewidth]{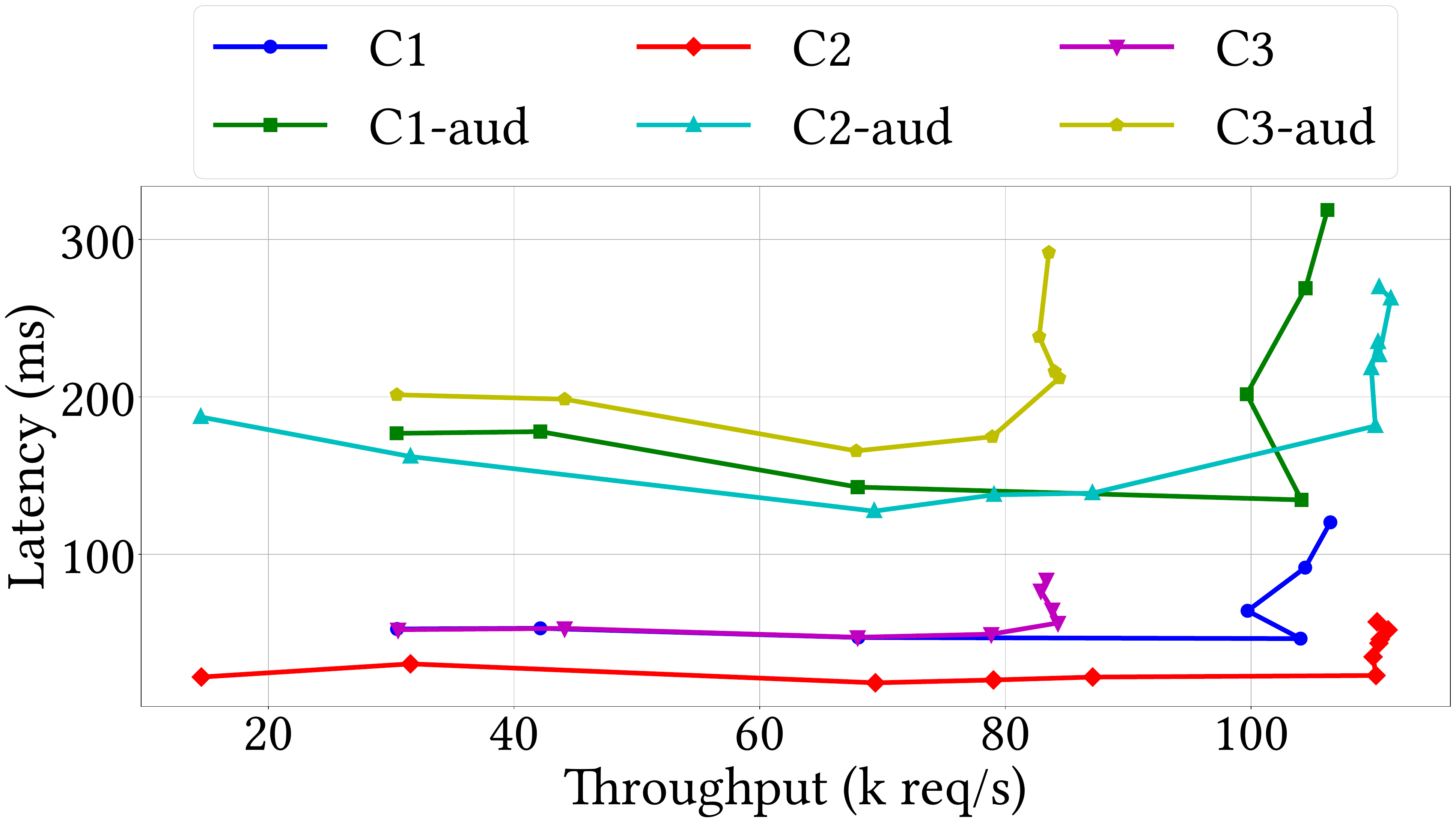}
    \caption{Multi-platform performance}
    \label{fig:wan-experiment}
\end{subfigure}
\caption{Common Case Performance}
\vspace{-0.3cm}
\end{figure*}

\par \textbf{Evaluation Setup} \label{sec:eval_setup}
We evaluate \sys{} on Azure using
16 core DC16ads\_v5 with AMD SEV-SNP, DC16eds\_v5 with Intel TDX, and D16ads\_v5 without any TEE capabilities (all with 64 GB of RAM, P40 data disks with 1000 MB/s burst throughput, 10 Gbps bandwidth for TEE nodes, 12.5 Gbps for non-TEE nodes). We use up to 5 D8ds\_v5 VMs as clients.
We report the average commit and audit latencies recorded by the clients. Unless otherwise stated, experiments use $N=7$ ($u=f_{safe}=2$) in a LAN setup (East US region) with AMD SEV-SNP machines (ping time 0.6ms). The batch size is 1000, the payload is 512B and the default signature interval is 10.
We implement the following baselines:
\begin{itemize}[leftmargin=*]
    \item {\sf signed\_raft}: Raft~\cite{raft} with hash-chaining and periodic signatures, inspired by the steady-state operation of \cite{ccf}.
    \item {\sf engraft}: Raft with an extra 2-phase-commit to store the latest metadata to prevent disk rollback attacks~\cite{Wang22}.
    \item {\sf hotstuff}: State-of-the art chained BFT protocol \cite{hotstuff}.   %
    \item {\sf autobahn}: State-of-the-art DAG BFT protocol \cite{autobahn} that optimizes for low latency and high throughput.
    \item {\sf peer\_review}: Accountability protocol~\cite{peerreview}, which uses signed witnesses to asynchronously detect Byzantine faults.
\end{itemize}
We implement {\sf signed\_raft}, {\sf engraft}, {\sf peer\_review} and {\sf hotstuff} on the \sys{} codebase. We use {\sf autobahn}'s open source code%
\footnote{We validated our {\sf autobahn} and {\sf peer\_review} results with the authors}.
BFT protocols use $3f+1$ nodes, the remaining TEE-augmented protocols use $2f+1$ nodes.

\subsection{Common Case Performance}
\par \textbf{Commit Performance}
Consider first the behavior of committing transactions in \sys{} relative to \sr and \engr. Both offer equivalent guarantees to \sys{}'s commit and ignore TEE compromises. We ask: can \sys{} match their performance despite its additional audit functionality? We summarize results in \cref{fig:lan-experiment}, where {\sf ps(sig=k)} denotes \sys{} with signing interval of $k$ batches.

\psten achieves a peak throughput of $345k$ txn/s and is bandwidth-bottlenecked at the leader.  \sr, running with 5 nodes instead of 7, achieves $520k$ txn/s (also signing every 10 batches). This 35\% degradation is caused by \psten's higher replication factor (necessary to enable auditing). \psten achieves 87\% of the throughput of \sr configured with seven nodes (\srseven). \psten must send auditQCs, which consume additional bandwidth.  %
\psten instead has $\approx 1.9x$ higher throughput than \engr. \engr is a TEE-based Raft modification that tolerates disk rollbacks. In \engr, any write to the local disk is replaced by a 2PC phase. Both the leader and the followers perform a 2PC to store Raft metadata and hash of the last batch before they vote. These two additional round-trips significantly hurt performance.

Next, we investigate how \sys{}'s audit performance compares against state-of-the-art BFT protocols, who make no assumptions about TEEs. We consider the same execution of \sys{}, but this time look at audited transactions. Our audit performance should, ideally, remain competitive with such systems.
\psten achieves 88.6\% of the throughput of \abn but 15\% lower latency. \abn allows parallel proposers to submit hash chains of batches, with a leader performing consensus on
\textit{cuts} of these chains. In our experiments, \abn is bottlenecked on hashing.  While \sys{} does not allow parallel proposers, its pipelining logic allows for the leader to have multiple outstanding batches at once. Unlike \abn, \sys{} carefully incrementalizes hashing such that the majority of the hashing work can be done in parallel. %
When \sys{} signs every batch, its throughput falls to 73\% of \abn. %

Both \psten and \psone achieve higher throughput than \hs (46\% and 20\% respectively).
\hs uses one extra phase to commit and, by design, must sign all batches. \hs's blocking pipelining logic can only have a single outstanding batch in the network. This well-known shortcoming~\cite{autobahn} fundamentally limits its maximum throughput.
\psone has 39\% higher peak throughput than \prv.
\prv requires all messages of the underlying BFT protocol to be signed. Its detection relies on an $O(n^2)$ communication phase to transmit signed witnesses of these messages to respective witness verifier nodes. Transmission of extra messages along with the extra load of auditing witnesses interferes with the common case performance, leading to the throughput drop.

In summary, \sys{} is able to maintain the performance of current TEE-enabled systems while providing additional audit guarantees. \sys{}'s audit performance is only minimally lower than state-of-the-art BFT protocols.
If one places no trust on TEEs, \sys{} can be used interchangeably with state-of-the-art BFT protocols. But, placing cautious trust in TEEs yields up to 2.6x lower commit latencies.

\par \textbf{Platforms} Next, we investigate
how \sys{} scales with the number and type of platforms. We deploy \sys{} on up to five platforms with the following (TEE, region) pairs:
(a) (Intel TDX, East US 2),
(b) (AMD SEV-SNP, East US),
(c) (AMD SEV-SNP, West US),
(d) (Intel TDX, Central US),
(e) (AMD SEV-SNP, West Europe).
Here, we consider that different regions with the same TEE technology are sufficiently unlikely to experience correlated failures to form different platforms (many TEE attacks require physical access).

We fix $\pi_{safe} = 1, \pi_{live} = 0$, optimizing for safety, not liveness~\cite{angel2023nimble,svr3}
and create three different configurations:
\begin{enumerate}[label={$\mathbf{C_{\arabic*}:}$}]
    \item $u = f_{safe} = 1$ with a $(1, 1, 1, 1)$ replica distribution.
    \item $u = f_{safe} = 2$ with a $(2, 2, 2, 1)$ replica distribution.
    \item $u = 3, ~f_{safe} = 2$ with a $(2, 2, 2, 2, 1)$ replica distribution.
\end{enumerate}
In all cases, the leader and the clients are located in the East US2 region.
Fig. \ref{fig:wan-experiment} reports throughput and latency numbers for \sys{}.
\sys{} in configuration $\mathbf{C_2}$ has 51\% and 59\% lower commit latency than respectively $\mathbf{C_1}$ and $\mathbf{C_3}$. The leader needs votes from only two regions to commit in $\mathbf{C_2}$ but needs votes from three regions in $\mathbf{C_1}$ and $\mathbf{C_3}$.
For auditing, votes are required from at least three platforms in all configurations. $\mathbf{C_3}$ only audits on the slow path as the latency of platform (e) is more than twice the latency of all the other platforms. As a consequence, it has the highest latency.

The difference in throughput between configurations stems from bandwidth heterogeneity, from approximately 1.17 Gbps to 9.5 Gbps. \sys{} chooses to rate limit its sending such that sending queues at the leader do not become excessively large. In effect, this ensures that each node receives batches only as fast as the slowest node does. %
$\mathbf{C_1}$ and $\mathbf{C_2}$ use the same set of platforms and thus the same network configuration. They have identical throughput.
$\mathbf{C_3}$ instead includes a more constrained link to West Europe with only 1.17 Gbps of available bandwidth (the slowest link in $\mathbf{C_1}$, is instead 1.5 Gbps). Peak throughput thus drops by 24\% compared to $\mathbf{C_1}$.

\begin{figure}[hbtp]
\vspace{-0.1cm}
\centering
\begin{subfigure}[b]{0.4\linewidth}
    \includegraphics[width=1.0\linewidth]{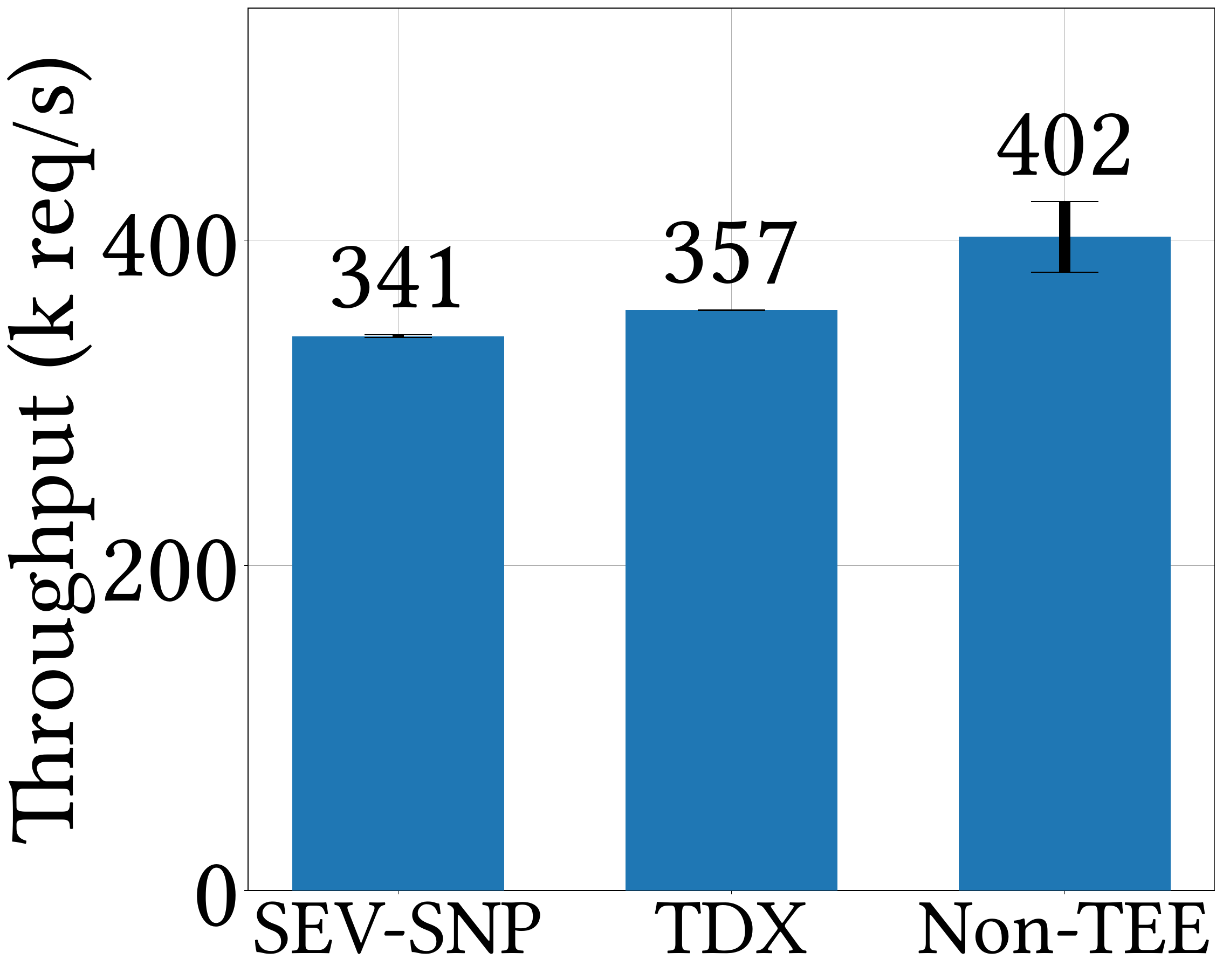}
    \centering
    \caption{TEE Type}%
    \label{fig:storage-ablation}
\end{subfigure}
~~~~~~
\begin{subfigure}[b]{0.4\linewidth}
    \centering
    \includegraphics[width=1.0\linewidth]{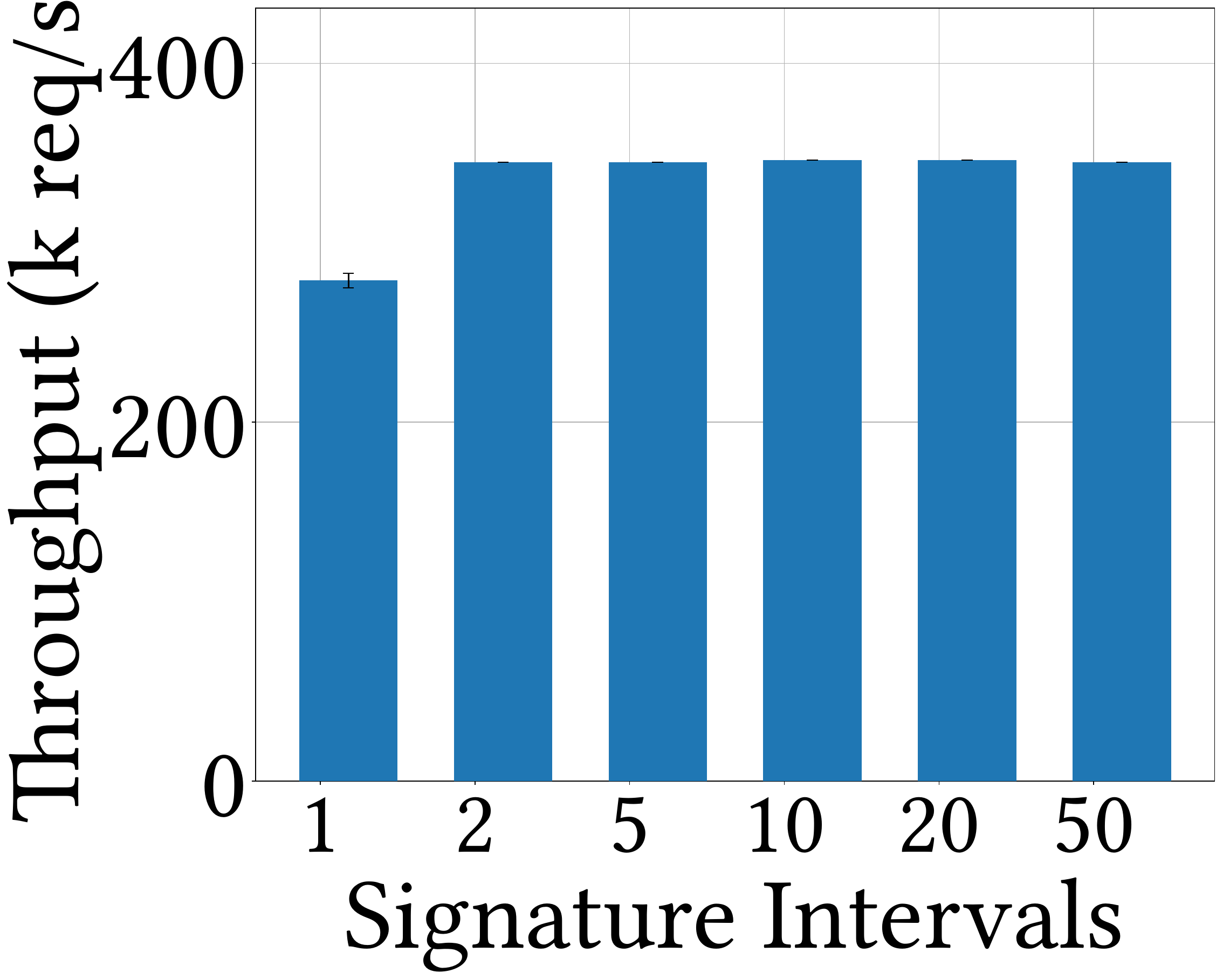}
    \caption{Signing Interval}
    \label{fig:crypto-thread-sensisitivity}
\end{subfigure}
\caption{Microbenchmarks}
\label{fig:microbenchmarks}
\end{figure}

\par \textbf{Overheads of TEEs}
Fig. \ref{fig:storage-ablation} shows
the effect of running \sys{} in different trusted execution environments and without a TEE. We find that the overhead of running inside of Intel TDX and AMD SEV-SNP is similar, both cause an 11--15\% reduction in throughput.

\par \textbf{Periodic Signing}
Fig. \ref{fig:crypto-thread-sensisitivity} quantifies sensitivity to periodic signing. As expected, increasing the signing interval from every batch to ten batches improves throughput by a factor of 1.24  as it amortizes signature cost. This comes with an increase in auditing latency by a factor of 1.84x. Increasing the signing frequency beyond that has no further throughput benefit as the system becomes network-bound.

\subsection{Failure Case Performance}
Next, we consider situations in which a platform has been compromised. We categorize the space of possible Byzantine behaviors into two: (i) failure to respond to a vote or view-change request (\textit{omission attacks}), and (ii) more subtle attacks involving equivocation (\textit{commission attacks})~\cite{clement09upright}.

\begin{figure*}[hbtp]
\begin{subfigure}[b]{\columnwidth}
    \centering
    \includegraphics[width=0.7\linewidth]{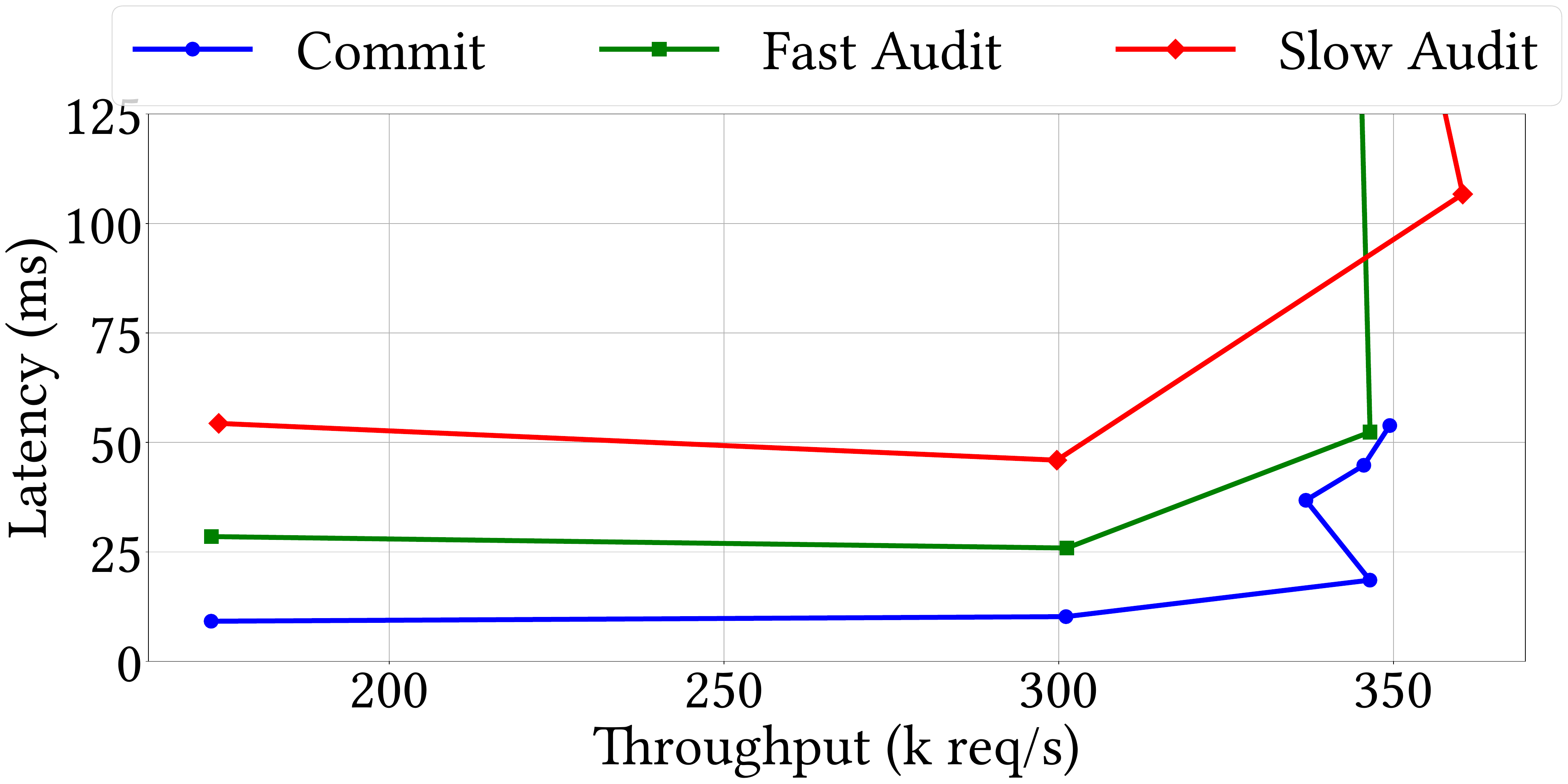}
    \caption{Slow path  vs fast path audit}
    \vspace{-0.2cm}
    \label{fig:fast-path}
\end{subfigure}
\begin{subfigure}[b]{\columnwidth}
    \centering
    \includegraphics[width=0.7\linewidth]{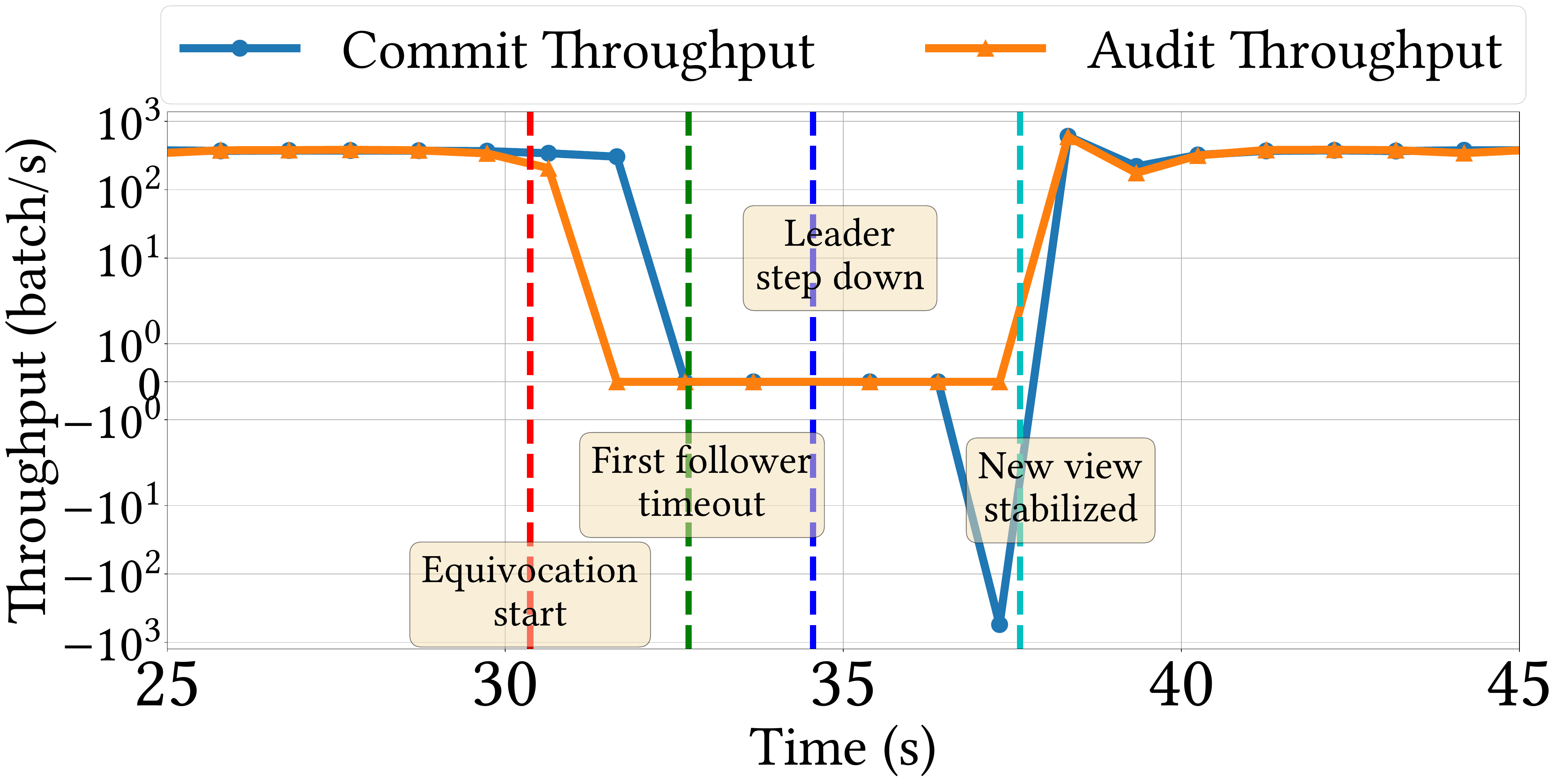}
    \caption{Equivocation}
    \vspace{-0.2cm}
    \label{fig:view-change}
\end{subfigure}
\caption{Effects of failures}
\label{fig:failures}
\vspace{-0.3cm}
\end{figure*}

\par \textbf{Omission Attacks} Quorums in \sys{} are able to tolerate up to $u$ nodes that do not respond. Malicious nodes will thus not prevent quorums from forming, they will at best increase tail latency. They can, however, prevent fast audit quorums from forming, as fast audit quorums require responses from all nodes. We quantify the effect of malicious nodes effectively deactivating audits via fast path in \cref{fig:fast-path}. As expected, auditing transactions on the slow path has 1.9x worse latency than doing so via the fast path.
Fast path audit requires waiting for a single QC, while the slow path audit waits to form a QC over a QC, thus taking twice as long.  Fast path audits occasionally have to wait for straggler nodes, explaining why latency reduces by slightly less than a factor of two. There is no negative effect on throughput as pipelining ensures that no additional messages are being sent, regardless of the path taken in the protocol.

\par \textbf{Commission Attacks}  Next, we consider situations in which a leader in a compromised platform equivocates. We quantify how quickly \sys{}'s audit process can detect malicious leaders that broadcast two conflicting batches to two different partitions in the cluster. We plot the observed commit and audit throughput for a replica in the system in Fig. \ref{fig:view-change}.
In the first phase after equivocation (30.3 -- 32.7s), transactions commit as normal. Commit quorums do not necessarily intersect in an honest node. It is thus possible for two conflicting batches to commit, and for new batches to extend conflicting branches. This is not the case for the larger audit quorums and the auditing throughput thus quickly drops to zero. Followers time out in 4s waiting for an audit  and the system transitions to a new view with a new leader (35s).
The next leader detects equivocation, selects its own branch, and applies it to all replicas. Replicas that had extended conflicting branches (like the one plotted in this graph), thus have to rollback their logs (hence the negative throughput) while the new leader stabilizes the view (37.61s).
The new leader then quickly makes the rolled back replica audit the new branch (the positive spike in audit throughput). The system returns to steady state around 40s.

\subsection{Application Use Cases}
\label{subsec:apps}

\begin{figure*}[hbtp]
\captionsetup[sub]{font=scriptsize,labelfont={bf,sf}}
\begin{subfigure}{0.19\linewidth}
    \centering
    \includegraphics[width=1.0\linewidth]{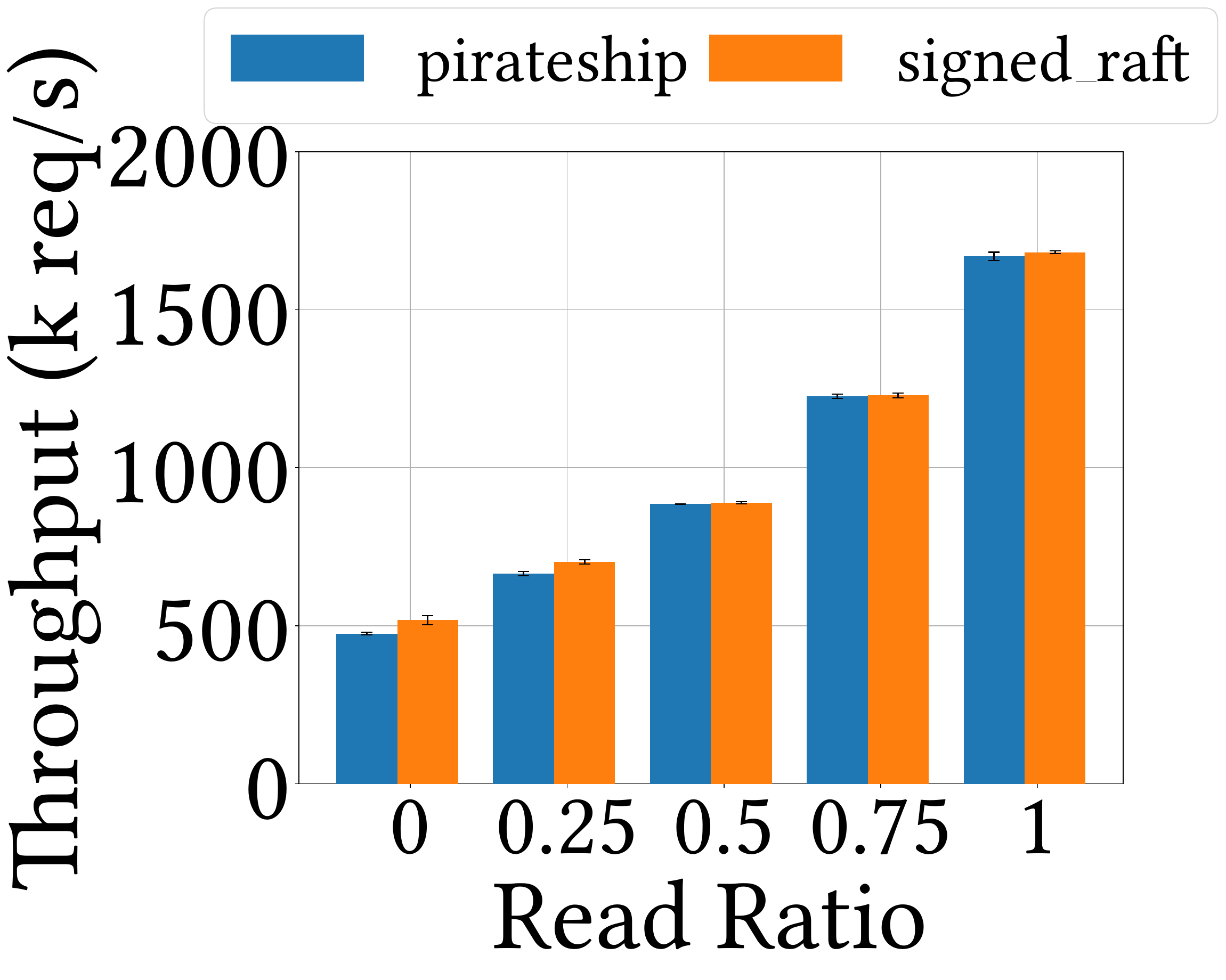}
    \caption{Key-Value store}
    \label{fig:ycsb}
\end{subfigure}
~
\begin{subfigure}{0.19\linewidth}
    \centering
    \includegraphics[width=1.0\linewidth]{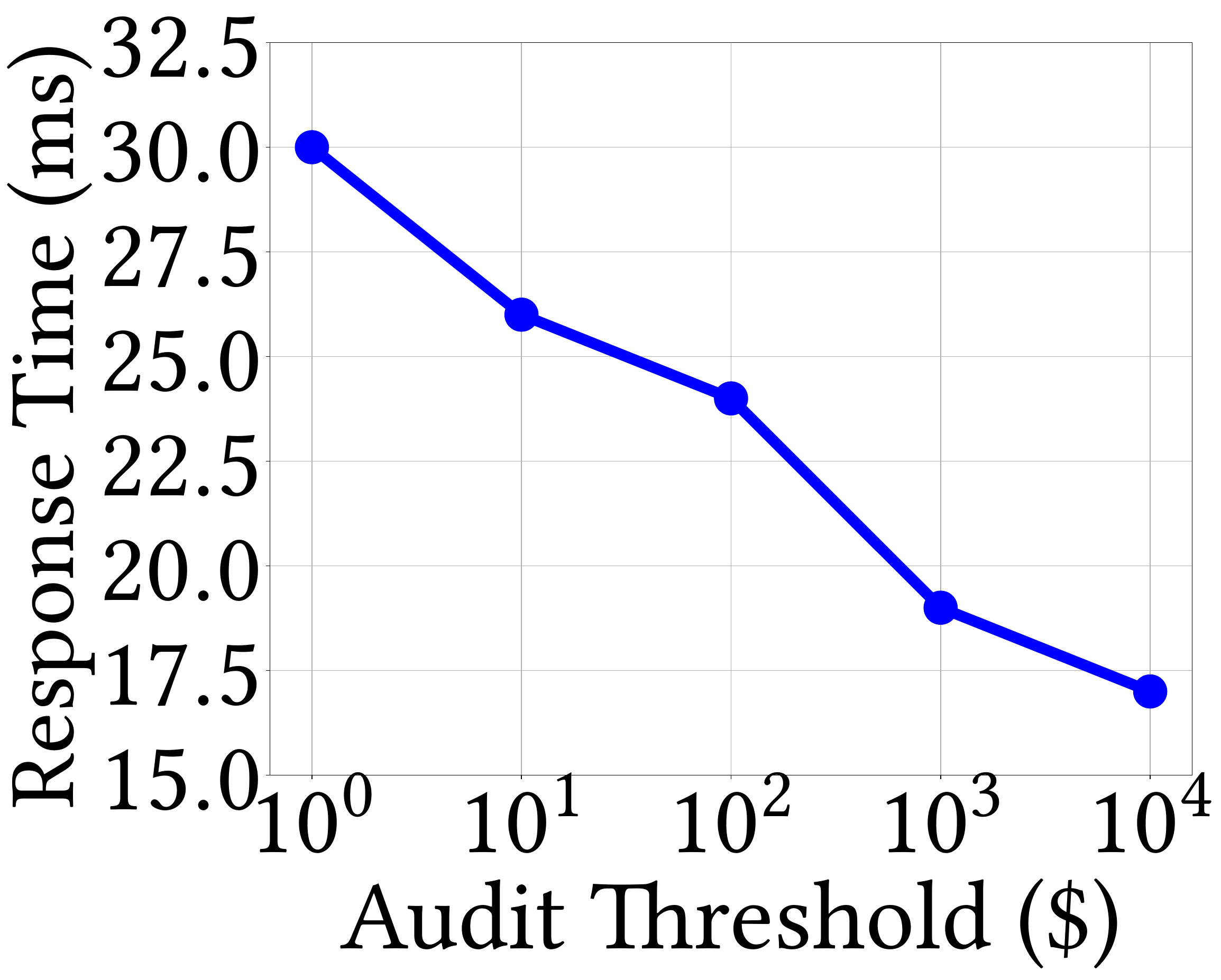}
    \caption{Banking}
    \label{fig:banking}
\end{subfigure}
~
\begin{subfigure}{0.19\linewidth}
    \centering
    \includegraphics[width=1.0\linewidth]{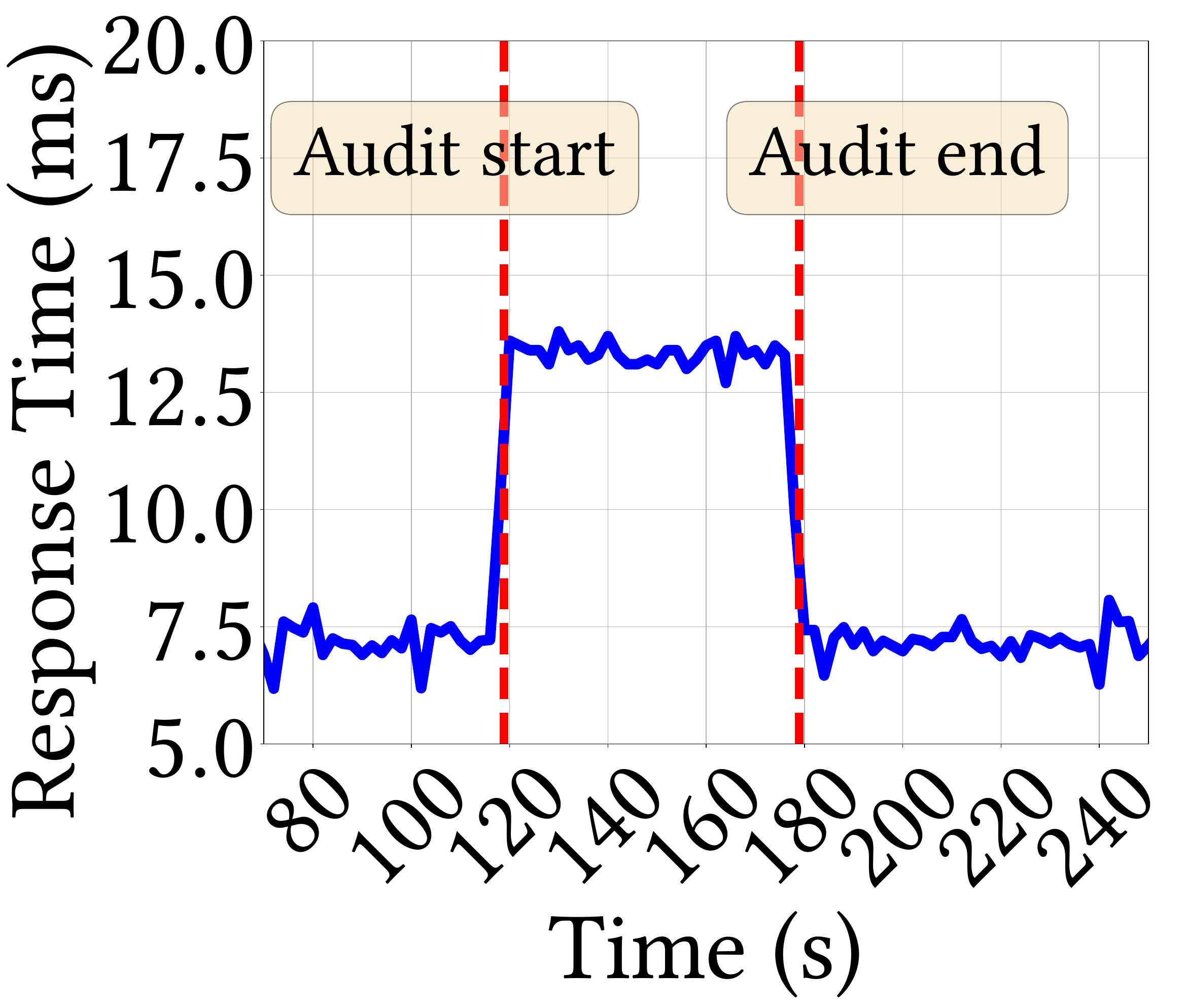}
    \caption{Key Transparency Service}
    \label{fig:kms}
\end{subfigure}
~
\begin{subfigure}{0.19\linewidth}
    \centering
    \includegraphics[width=1.0\linewidth]{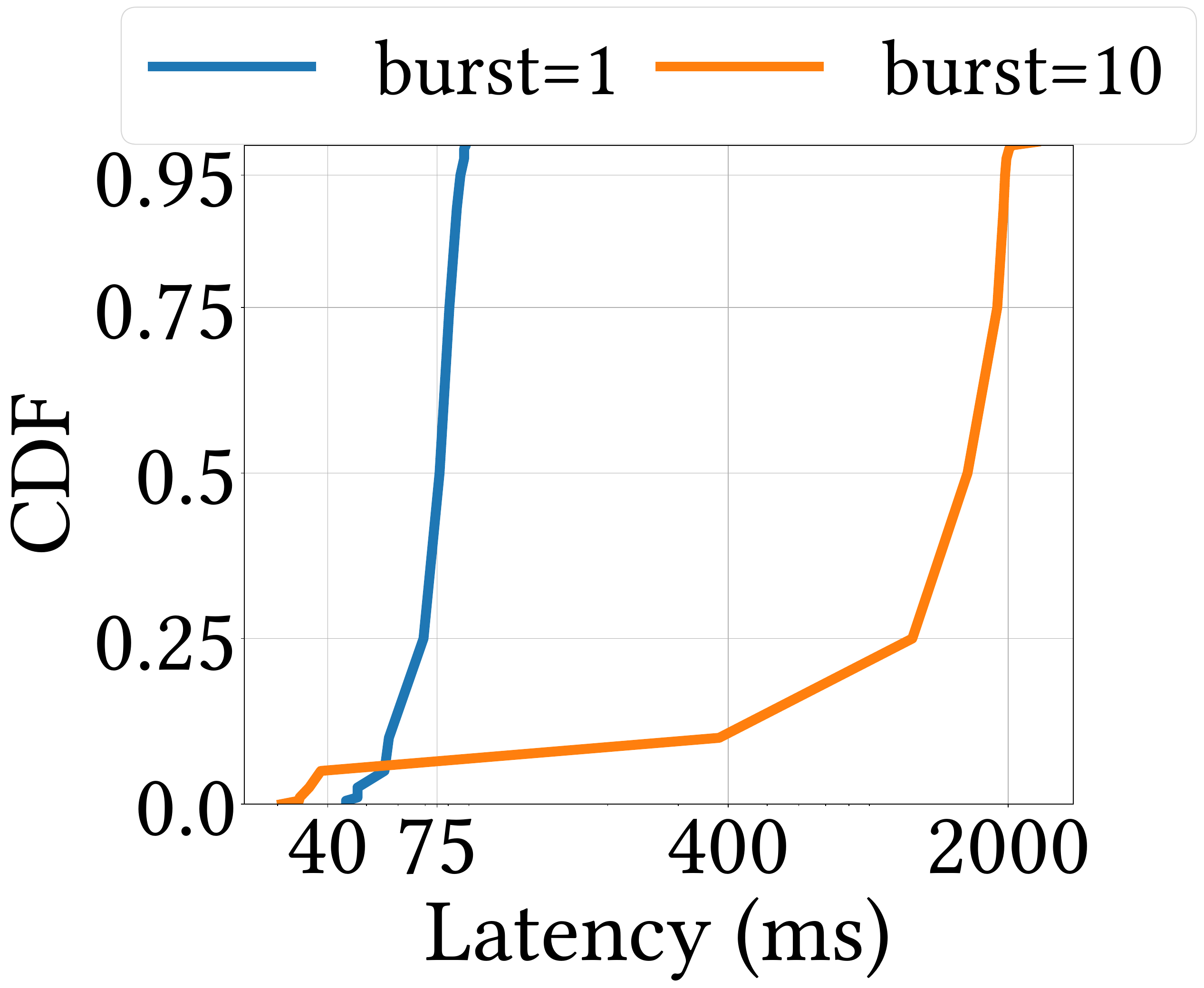}
    \caption{Secret Key Recovery}
    \label{fig:svr3}
\end{subfigure}
~
\begin{subfigure}{0.19\linewidth}
    \centering
    \includegraphics[width=1.0\linewidth]{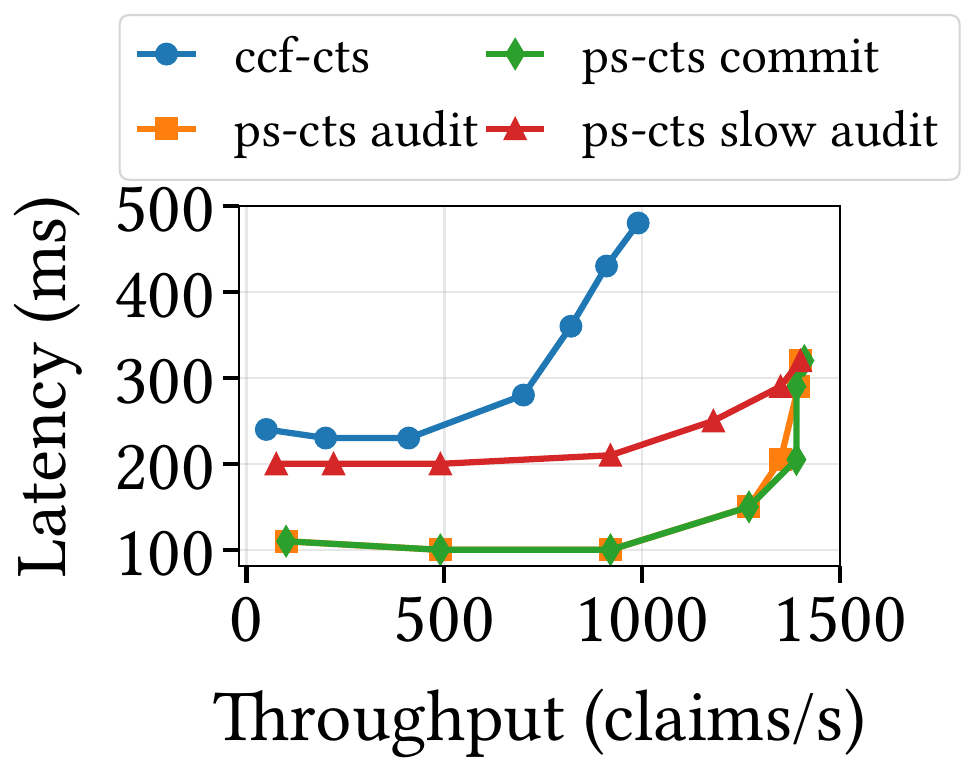}
    \caption{Code Transparency Service}
    \label{fig:cts-experiment}
\end{subfigure}

\caption{Application Performance }
\label{fig:applications}
\vspace{-0.4cm}
\end{figure*}

We next quantify how applications currently using TEE-enabled ledgers can make use of \sys{}'s additional audit guarantees. Applications can choose to wait for audit based on either \one type of transaction, \two suspicion of an attack, \three suspicion of an adversarial client; or \four it can expose the details of commit and audit to clients using receipts.

\par \textbf{Key-Value Store.}
First, we show how to handle rollbacks in a key-value store.
We build the kv-store in two layers.
The first layer is versioned and stores, for each key, write versions along with the batch sequence number.
On a rollback, all operations with version numbers higher than the new rolled back commit index are deleted. Once a batch is audited, the batch will no longer be rolled back. It can be safely moved to the non-versioned second layer.
We measure the performance of this KV store using YCSB \cite{ycsb} ($300k$ objects, 100B each) (Fig~\ref {fig:ycsb}).
\sys{} and \srfive perform similarly as they are both bottlenecked on execution.

\par \textbf{Banking.}
We use the Key-Value store to develop a simple banking application. %
Large transfers above a threshold wait to be audited, whereas small transfers proceed on commit. Transfer values range \$1 and \$10K and follow a  Zipfian distribution.
Fig. \ref{fig:banking} shows latency according to the set threshold.
As expected, average response time grows as an increased fraction of transfers must be audited. Latency for audited transfers is 1.7$\times$ higher than for committed transfers.

\par \textbf{Key Transparency Service.}
A key transparency service stores public keys for users~\cite{optiks, CONIKS}.
We test it using a workload of 50\% queries to a user's public key (uniformly distributed across all users) and 50\% key updates.
We consider (\cref{fig:kms}) a scenario in which a new vulnerability in a TEE is discovered, and the application switches to returning to the client on confirmation of audit rather than commit.
When waiting for audits, the response time immediately jumps by $\approx 1.7x$ and returns to normal once the application resumes to replying on commit (180s).

\par \textbf{Secret Key Recovery.}
We implement Signal’s Secret Key Recovery service~\cite{svr3} by designing an $(N\!-\!u)$-out-of-$N$ password-protected secret-sharing scheme.
A user recovers their key by sending a password guess to $N\!-\!u$ servers to obtain sufficient shares. Each user is allowed a maximum number of guesses, which we  maintain through a per-user global guess counter. We leverage the bound that \sys{} enforces between audit and commit indices: \sys{} forces clients to wait for audit once they have too many pending (committed) password guesses such that they may exceed password limits during a rollback. \cref{fig:svr3} considers two clients: one sending one password guess per second, the other performing a burst of ten guesses at once.
If the application discovers two consecutive committed-but-not-audited requests from the same user, the server throttles the user until all of these requests can be audited.
The client performing one request per second never gets throttled, whereas the bursty client gets throttled often, and as a result has 20x higher median latency.

\par \textbf{Code Transparency Service.}
A Code Transparency Service~\cite{Delignat23, scitt, scitt-ietf} (CTS) provides a root of trust for software supply chains by maintaining a ledger of signed claims describing releases, their provenance, and the policies they satisfy.
Clients and auditors can verify that a release was issued by an authorized publisher, and each accepted claim is recorded together with a receipt that can be checked offline.
Our CTS implementation, \syscts, as is standard, only admits \textit{valid} claims in the ledger.
To validate claims \textit{before} they could be committed, \sys{CTS} adds a validation step across all replicas, that satisfies the following invariants: \one a correct leader validates incoming claims before adding them to a batch, \two correct replicas independently validate every claim in a proposed batch before voting to accept it.
If correct replicas detect policy-violating claims, they simply withhold their vote, preventing the batch from gathering a quorum.
Clients can wait for a commit receipt, which provides the same guarantees as the receipts granted by current TEE-based ledgers, e.g., CCF~\cite{ccf}.
With \sys{}, clients can additionally wait for audit receipts.
More details on implementation can be found in \Cref{subsec:details-cts}.

We evaluated \syscts against a CCF-based CTS~\cite{ccf, scitt} under a workload consisting of continuous claim submissions and receipt retrievals.
Because CCF only supports confidential container instances (C-ACI), we deployed the service on 7 C-ACI containers and used 3 regular containers as clients (16 CPU cores, 64 GB of RAM)
(Fig. \ref{fig:cts-experiment}).
\syscts' stronger guarantees do not come at the cost of performance.
In fact, audit receipts on the fast path achieve latency comparable to commit receipts, as expected, since they require only a single unanimous QC.

\section{Related Work}

Consensus protocols using TEEs for better resilience either \one minimize their TCB, or \two put their entire consensus logic in the TCB. The former places a small amount of logic in the TEE to detect malicious behavior.
A2M-PBFT~\cite{chun2007a2m} stores message digests in trusted append-only logs.
MinBFT~\cite{minbft} and Hybster~\cite{hybster} use trusted counters~\cite{trinc} to detect equivocation.
These protocols can then safely run with $n=2f+1$ or reduce the number of phases~\cite{gupta2023bft}.
While others~\cite{brandenburger2017rollback,angel2023nimble} focus on rollback detection with untrusted storage, by using special client protocols or TEE-endorsed replication.

With the advent of VM-based TEEs~\cite{SEV-SNP,TDX,CCA} that place the whole VM inside their TCBs, the second style of consensus protocols is gaining prominence and \sys{} falls into this category.
This line of work augments CFT protocols with mechanisms to prevent or detect specific Byzantine behaviors.
CCF~\cite{ccf, ccf-ledger} modifies Raft~\cite{raft} to include periodic signature transactions that prove the integrity of all previous entries in the log.
Engraft~\cite{Wang22} adds rollback resistance to the Raft log using a separate 2-phase replication protocol.
Signal's Key Recovery (SVR3)~\cite{svr3} guards against rollbacks of physical memory using extra replicas and one extra phase.
IA-CCF~\cite{shamis2022iaaccf} runs a standard BFT inside enclaves and adds individual accountability for misbehaving nodes using an auditor, even with more than $f$ Byzantine faults.
SplitBFT~\cite{messadi2022splitbft} runs a compartmentalized version of PBFT where each compartment is in a separate enclave.
While conducive to very high performance, it does not capture the correlated nature of TEE failures.

Prior work seeks to consider more fine-grained failure types for better performance~\cite{clement09upright, porto2015visigoth,xft}, distinguishing between omission and commission faults as well as slow and crashed nodes. Other research efforts decouple safety from liveness~\cite{malek2005fault,clement09upright}, while some projects argue that Byzantine replicas are never incentivized to break safety, only liveness~\cite{flexible-bft}.  \sys{} similarly distinguishes between Byzantine and crashed nodes. Correlated TEE failures were previously informally described in \cite{glacier, totalrecall, signal-recovery}.   \sys{} formalizes the concept using platform fault tolerance.

While \sys{} detects and recovers from log divergence caused by TEE compromises, it does not try to blame faulty nodes.  Prior works that characterize detectable faults~\cite{haeberlen09opodis} or assign blame, like
PeerReview~\cite{peerreview} or BFT Protocol Forensics~\cite{bft-forensics}, are orthogonal but complementary to \sys{}.

Finally, \sys{} draws from many existing concepts. Hash-chaining, commit rules and linearizing communication have been used in blockchains~\cite{buchman2016tendermint}, SBFT~\cite{gueta2019sbft}, Hotstuff~\cite{hotstuff}, Jolteon~\cite{jolteon-ditto} and Beegees~\cite{giridharan2023beegees}.
\sys{}'s fast path is inspired by Zyzzyva~\cite{zyzzyva}, SBFT~\cite{gueta2019sbft} and Autobahn~\cite{autobahn}.
Recently, DAG-based BFT protocols~\cite{dag-rider, danezis2022narwhal, Bullshark, autobahn, babel2023mysticeti} have achieved very high throughput by decoupling data dissemination from consensus. \sys{}'s embedding of BFT logic within a CFT protocol achieves a similar effect.

\section{Conclusion}
\label{sec:conclusion}

This paper presents \sys{}, a PFT append-only ledger that uses TEEs to efficiently process requests when all is well. When TEE platform compromises arise, \sys{} efficiently detects and reconciles
conflicts by embedding a BFT audit protocol inside a CFT commit protocol.

\begin{acks}
We thank our shepherd and the reviewers for their helpful feedback in shaping the paper; Eddy Ashton and Amaury Chamayou for their help during earlier versions of the project, Markus Kuppe for his support with its TLA+ modeling and C\'{e}dric Fournet for his feedback on receipts.
This work was supported in part by Azure Research, Microsoft, NSF CAREER 2442542, a Sui Research Award, and gifts from Accenture, Algorithmic SuperIntelligence Labs, Amazon, AMD, Anyscale, Broadcom, cmpnd, Google, IBM, Intel, Intesa Sanpaolo, Lambda, Lightspeed, Mirendil, NVIDIA, Samsung SDS, and VESSL.
\end{acks}

\bibliographystyle{ACM-Reference-Format}
\bibliography{references,reference2,refs3}


\begin{thebibliography}{77}


\ifx \showCODEN    \undefined \def \showCODEN     #1{\unskip}     \fi
\ifx \showDOI      \undefined \def \showDOI       #1{#1}\fi
\ifx \showISBNx    \undefined \def \showISBNx     #1{\unskip}     \fi
\ifx \showISBNxiii \undefined \def \showISBNxiii  #1{\unskip}     \fi
\ifx \showISSN     \undefined \def \showISSN      #1{\unskip}     \fi
\ifx \showLCCN     \undefined \def \showLCCN      #1{\unskip}     \fi
\ifx \shownote     \undefined \def \shownote      #1{#1}          \fi
\ifx \showarticletitle \undefined \def \showarticletitle #1{#1}   \fi
\ifx \showURL      \undefined \def \showURL       {\relax}        \fi
\providecommand\bibfield[2]{#2}
\providecommand\bibinfo[2]{#2}
\providecommand\natexlab[1]{#1}
\providecommand\showeprint[2][]{arXiv:#2}

\bibitem[Abd-El-Malek et~al\mbox{.}(2005)]%
        {malek2005fault}
\bibfield{author}{\bibinfo{person}{Michael Abd-El-Malek},
  \bibinfo{person}{Gregory~R. Ganger}, \bibinfo{person}{Garth~R. Goodson},
  \bibinfo{person}{Michael~K. Reiter}, {and} \bibinfo{person}{Jay~J. Wylie}.}
  \bibinfo{year}{2005}\natexlab{}.
\newblock \showarticletitle{Fault-scalable Byzantine fault-tolerant services}.
  In \bibinfo{booktitle}{\emph{Proceedings of the Twentieth ACM Symposium on
  Operating Systems Principles}} (Brighton, United Kingdom)
  \emph{(\bibinfo{series}{SOSP '05})}. \bibinfo{publisher}{Association for
  Computing Machinery}, \bibinfo{address}{New York, NY, USA},
  \bibinfo{pages}{59–74}.
\newblock
\showISBNx{1595930795}
\urldef\tempurl%
\url{https://doi.org/10.1145/1095810.1095817}
\showDOI{\tempurl}


\bibitem[AMD(2020)]%
        {SEV-SNP}
\bibfield{author}{\bibinfo{person}{AMD}.} \bibinfo{year}{2020}\natexlab{}.
\newblock \bibinfo{title}{{AMD} {SEV}-{SNP}: Strengthening {VM} isolation with
  integrity protection and more}.
\newblock
\newblock
\newblock
\shownote{\url{https://www.amd.com/system/files/TechDocs/SEV-SNP-strengthening-vm-isolation-with-integrity-protection-and-more.pdf}}.


\bibitem[Angel et~al\mbox{.}(2023)]%
        {angel2023nimble}
\bibfield{author}{\bibinfo{person}{Sebastian Angel}, \bibinfo{person}{Aditya
  Basu}, \bibinfo{person}{Weidong Cui}, \bibinfo{person}{Trent Jaeger},
  \bibinfo{person}{Stella Lau}, \bibinfo{person}{Srinath Setty}, {and}
  \bibinfo{person}{Sudheesh Singanamalla}.} \bibinfo{year}{2023}\natexlab{}.
\newblock \showarticletitle{Nimble: Rollback Protection for Confidential Cloud
  Services}. In \bibinfo{booktitle}{\emph{17th USENIX Symposium on Operating
  Systems Design and Implementation (OSDI 23)}}. \bibinfo{publisher}{USENIX
  Association}, \bibinfo{address}{Boston, MA}, \bibinfo{pages}{193--208}.
\newblock
\showISBNx{978-1-939133-34-2}
\urldef\tempurl%
\url{https://www.usenix.org/conference/osdi23/presentation/angel}
\showURL{%
\tempurl}


\bibitem[Apple(2024)]%
        {apple-pcc}
\bibfield{author}{\bibinfo{person}{Apple}.} \bibinfo{year}{2024}\natexlab{}.
\newblock \bibinfo{title}{Private Cloud Compute: A new frontier for AI privacy
  in the cloud}.
\newblock
\newblock
\newblock
\shownote{\url{https://security.apple.com/blog/private-cloud-compute/}}.


\bibitem[ARM(2023)]%
        {CCA}
\bibfield{author}{\bibinfo{person}{ARM}.} \bibinfo{year}{2023}\natexlab{}.
\newblock \bibinfo{title}{Learn the architecture - Introducing Arm Confidential
  Compute Architecture - Issue 2.0}.
\newblock
\newblock
\urldef\tempurl%
\url{https://developer.arm.com/documentation/den0125/latest}
\showURL{%
\tempurl}
\newblock
\shownote{last accessed: 06/10/2023}.


\bibitem[Babel et~al\mbox{.}(2023)]%
        {babel2023mysticeti}
\bibfield{author}{\bibinfo{person}{Kushal Babel}, \bibinfo{person}{Andrey
  Chursin}, \bibinfo{person}{George Danezis}, \bibinfo{person}{Lefteris
  Kokoris-Kogias}, {and} \bibinfo{person}{Alberto Sonnino}.}
  \bibinfo{year}{2023}\natexlab{}.
\newblock \bibinfo{title}{Mysticeti: Low-Latency DAG Consensus with Fast Commit
  Path}.
\newblock
\newblock
\showeprint[arxiv]{2310.14821}~[cs.DC]
\urldef\tempurl%
\url{https://arxiv.org/abs/2310.14821v1}
\showURL{%
\tempurl}


\bibitem[Behl et~al\mbox{.}(2017)]%
        {hybster}
\bibfield{author}{\bibinfo{person}{Johannes Behl}, \bibinfo{person}{Tobias
  Distler}, {and} \bibinfo{person}{R\"{u}diger Kapitza}.}
  \bibinfo{year}{2017}\natexlab{}.
\newblock \showarticletitle{Hybrids on Steroids: SGX-Based High Performance
  BFT}. In \bibinfo{booktitle}{\emph{Proceedings of the Twelfth European
  Conference on Computer Systems}} (Belgrade, Serbia)
  \emph{(\bibinfo{series}{EuroSys '17})}. \bibinfo{publisher}{Association for
  Computing Machinery}, \bibinfo{address}{New York, NY, USA},
  \bibinfo{pages}{222–237}.
\newblock
\showISBNx{9781450349383}
\urldef\tempurl%
\url{https://doi.org/10.1145/3064176.3064213}
\showDOI{\tempurl}


\bibitem[Bhagwan et~al\mbox{.}(2004)]%
        {totalrecall}
\bibfield{author}{\bibinfo{person}{Ranjita Bhagwan}, \bibinfo{person}{Kiran
  Tati}, \bibinfo{person}{Yu-Chung Cheng}, \bibinfo{person}{Stefan Savage},
  {and} \bibinfo{person}{Geoffrey~M. Voelker}.}
  \bibinfo{year}{2004}\natexlab{}.
\newblock \showarticletitle{Total recall: system support for automated
  availability management}. In \bibinfo{booktitle}{\emph{Proceedings of the 1st
  Conference on Symposium on Networked Systems Design and Implementation -
  Volume 1}} (San Francisco, California) \emph{(\bibinfo{series}{NSDI'04})}.
  \bibinfo{publisher}{USENIX Association}, \bibinfo{address}{USA},
  \bibinfo{pages}{25}.
\newblock


\bibitem[Birkholz et~al\mbox{.}(2026)]%
        {scitt-ietf}
\bibfield{author}{\bibinfo{person}{H. Birkholz}, \bibinfo{person}{A.
  Delignat-Lavaud}, \bibinfo{person}{C. Fournet}, \bibinfo{person}{Y.
  Deshpande}, {and} \bibinfo{person}{S. Lasker}.}
  \bibinfo{year}{2026}\natexlab{}.
\newblock \bibinfo{title}{{An Architecture for Trustworthy and Transparent
  Digital Supply Chains}}.
\newblock \bibinfo{howpublished}{RFC 9943}.
\newblock
\urldef\tempurl%
\url{https://doi.org/10.17487/RFC9943}
\showDOI{\tempurl}


\bibitem[Brandenburger et~al\mbox{.}(2017)]%
        {brandenburger2017rollback}
\bibfield{author}{\bibinfo{person}{Marcus Brandenburger},
  \bibinfo{person}{Christian Cachin}, \bibinfo{person}{Matthias Lorenz}, {and}
  \bibinfo{person}{Rudiger Kapitza}.} \bibinfo{year}{2017}\natexlab{}.
\newblock \showarticletitle{{ Rollback and Forking Detection for Trusted
  Execution Environments Using Lightweight Collective Memory }}. In
  \bibinfo{booktitle}{\emph{2017 47th Annual IEEE/IFIP International Conference
  on Dependable Systems and Networks (DSN)}}. \bibinfo{publisher}{IEEE Computer
  Society}, \bibinfo{address}{Los Alamitos, CA, USA},
  \bibinfo{pages}{157--168}.
\newblock
\showISSN{2158-3927}
\urldef\tempurl%
\url{https://doi.org/10.1109/DSN.2017.45}
\showDOI{\tempurl}


\bibitem[Bryant(2023)]%
        {googlesecurityreport}
\bibfield{author}{\bibinfo{person}{Jerry Bryant}.}
  \bibinfo{year}{2023}\natexlab{}.
\newblock \bibinfo{title}{Intel – Google TDX Security Review
  \url{https://community.intel.com/t5/Blogs/Products-and-Solutions/Security/Intel-Google-TDX-Security-Review/post/1471177}}.
\newblock
\newblock


\bibitem[Buchman(2016)]%
        {buchman2016tendermint}
\bibfield{author}{\bibinfo{person}{Ethan Buchman}.}
  \bibinfo{year}{2016}\natexlab{}.
\newblock \emph{\bibinfo{title}{Tendermint: Byzantine fault tolerance in the
  age of blockchains}}.
\newblock \bibinfo{thesistype}{Ph.\,D. Dissertation}.
  \bibinfo{school}{University of Guelph}.
\newblock
\urldef\tempurl%
\url{http://hdl.handle.net/10214/9769}
\showURL{%
\tempurl}


\bibitem[Buhren et~al\mbox{.}(2021)]%
        {one-glitch}
\bibfield{author}{\bibinfo{person}{Robert Buhren}, \bibinfo{person}{Hans-Niklas
  Jacob}, \bibinfo{person}{Thilo Krachenfels}, {and}
  \bibinfo{person}{Jean-Pierre Seifert}.} \bibinfo{year}{2021}\natexlab{}.
\newblock \showarticletitle{One Glitch to Rule Them All: Fault Injection
  Attacks Against AMD's Secure Encrypted Virtualization}. In
  \bibinfo{booktitle}{\emph{Proceedings of the 2021 ACM SIGSAC Conference on
  Computer and Communications Security}} (Virtual Event, Republic of Korea)
  \emph{(\bibinfo{series}{CCS '21})}. \bibinfo{publisher}{Association for
  Computing Machinery}, \bibinfo{address}{New York, NY, USA},
  \bibinfo{pages}{2875–2889}.
\newblock
\showISBNx{9781450384544}
\urldef\tempurl%
\url{https://doi.org/10.1145/3460120.3484779}
\showDOI{\tempurl}


\bibitem[Bulck et~al\mbox{.}(2018)]%
        {foreshadow}
\bibfield{author}{\bibinfo{person}{Jo~Van Bulck}, \bibinfo{person}{Marina
  Minkin}, \bibinfo{person}{Ofir Weisse}, \bibinfo{person}{Daniel Genkin},
  \bibinfo{person}{Baris Kasikci}, \bibinfo{person}{Frank Piessens},
  \bibinfo{person}{Mark Silberstein}, \bibinfo{person}{Thomas~F. Wenisch},
  \bibinfo{person}{Yuval Yarom}, {and} \bibinfo{person}{Raoul Strackx}.}
  \bibinfo{year}{2018}\natexlab{}.
\newblock \showarticletitle{Foreshadow: Extracting the Keys to the Intel {SGX}
  Kingdom with Transient {Out-of-Order} Execution}. In
  \bibinfo{booktitle}{\emph{27th USENIX Security Symposium (USENIX Security
  18)}}. \bibinfo{publisher}{USENIX Association}, \bibinfo{address}{Baltimore,
  MD}, \bibinfo{pages}{991{\textendash}1008}.
\newblock
\showISBNx{978-1-939133-04-5}
\urldef\tempurl%
\url{https://www.usenix.org/conference/usenixsecurity18/presentation/bulck}
\showURL{%
\tempurl}


\bibitem[Castro and Liskov(1999)]%
        {PBFT}
\bibfield{author}{\bibinfo{person}{Miguel Castro} {and}
  \bibinfo{person}{Barbara Liskov}.} \bibinfo{year}{1999}\natexlab{}.
\newblock \showarticletitle{Practical Byzantine Fault Tolerance}. In
  \bibinfo{booktitle}{\emph{3rd Symposium on Operating Systems Design and
  Implementation (OSDI 99)}}. \bibinfo{publisher}{USENIX Association},
  \bibinfo{address}{New Orleans, LA}, \bibinfo{pages}{173--186}.
\newblock
\urldef\tempurl%
\url{https://www.usenix.org/conference/osdi-99/practical-byzantine-fault-tolerance}
\showURL{%
\tempurl}


\bibitem[Chen et~al\mbox{.}(2019)]%
        {chen2019sgxpectre}
\bibfield{author}{\bibinfo{person}{Guoxing Chen}, \bibinfo{person}{Sanchuan
  Chen}, \bibinfo{person}{Yuan Xiao}, \bibinfo{person}{Yinqian Zhang},
  \bibinfo{person}{Zhiqiang Lin}, {and} \bibinfo{person}{Ten~H Lai}.}
  \bibinfo{year}{2019}\natexlab{}.
\newblock \showarticletitle{Sgxpectre: Stealing intel secrets from sgx enclaves
  via speculative execution}. In \bibinfo{booktitle}{\emph{2019 IEEE European
  Symposium on Security and Privacy (EuroS\&P)}}. IEEE,
  \bibinfo{pages}{142--157}.
\newblock


\bibitem[Chun et~al\mbox{.}(2007)]%
        {chun2007a2m}
\bibfield{author}{\bibinfo{person}{Byung-Gon Chun}, \bibinfo{person}{Petros
  Maniatis}, \bibinfo{person}{Scott Shenker}, {and} \bibinfo{person}{John
  Kubiatowicz}.} \bibinfo{year}{2007}\natexlab{}.
\newblock \showarticletitle{Attested append-only memory: making adversaries
  stick to their word}. In \bibinfo{booktitle}{\emph{Proceedings of
  Twenty-First ACM SIGOPS Symposium on Operating Systems Principles}}
  (Stevenson, Washington, USA) \emph{(\bibinfo{series}{SOSP '07})}.
  \bibinfo{publisher}{Association for Computing Machinery},
  \bibinfo{address}{New York, NY, USA}, \bibinfo{pages}{189–204}.
\newblock
\showISBNx{9781595935915}
\urldef\tempurl%
\url{https://doi.org/10.1145/1294261.1294280}
\showDOI{\tempurl}


\bibitem[Clement et~al\mbox{.}(2009a)]%
        {clement09upright}
\bibfield{author}{\bibinfo{person}{Allen Clement}, \bibinfo{person}{Manos
  Kapritsos}, \bibinfo{person}{Sangmin Lee}, \bibinfo{person}{Yang Wang},
  \bibinfo{person}{Lorenzo Alvisi}, \bibinfo{person}{Mike Dahlin}, {and}
  \bibinfo{person}{Taylor Riche}.} \bibinfo{year}{2009}\natexlab{a}.
\newblock \showarticletitle{Upright cluster services}. In
  \bibinfo{booktitle}{\emph{Proceedings of the ACM SIGOPS 22nd Symposium on
  Operating Systems Principles}} (Big Sky, Montana, USA)
  \emph{(\bibinfo{series}{SOSP '09})}. \bibinfo{publisher}{Association for
  Computing Machinery}, \bibinfo{address}{New York, NY, USA},
  \bibinfo{pages}{277–290}.
\newblock
\showISBNx{9781605587523}
\urldef\tempurl%
\url{https://doi.org/10.1145/1629575.1629602}
\showDOI{\tempurl}


\bibitem[Clement et~al\mbox{.}(2009b)]%
        {aardvark}
\bibfield{author}{\bibinfo{person}{Allen Clement}, \bibinfo{person}{Edmund
  Wong}, \bibinfo{person}{Lorenzo Alvisi}, \bibinfo{person}{Mike Dahlin}, {and}
  \bibinfo{person}{Mirco Marchetti}.} \bibinfo{year}{2009}\natexlab{b}.
\newblock \showarticletitle{Making Byzantine Fault Tolerant Systems Tolerate
  Byzantine Faults}. In \bibinfo{booktitle}{\emph{Proceedings of the 6th USENIX
  Symposium on Networked Systems Design and Implementation}} (Boston,
  Massachusetts) \emph{(\bibinfo{series}{NSDI'09})}. \bibinfo{publisher}{USENIX
  Association}, \bibinfo{address}{USA}, \bibinfo{pages}{153–168}.
\newblock


\bibitem[Connell et~al\mbox{.}(2024)]%
        {svr3}
\bibfield{author}{\bibinfo{person}{Graeme Connell}, \bibinfo{person}{Vivian
  Fang}, \bibinfo{person}{Rolfe Schmidt}, \bibinfo{person}{Emma Dauterman},
  {and} \bibinfo{person}{Raluca~Ada Popa}.} \bibinfo{year}{2024}\natexlab{}.
\newblock \showarticletitle{Secret Key Recovery in a {Global-Scale}
  {End-to-End} Encryption System}. In \bibinfo{booktitle}{\emph{18th USENIX
  Symposium on Operating Systems Design and Implementation (OSDI 24)}}.
  \bibinfo{publisher}{USENIX Association}, \bibinfo{address}{Santa Clara, CA},
  \bibinfo{pages}{703--719}.
\newblock
\showISBNx{978-1-939133-40-3}
\urldef\tempurl%
\url{https://www.usenix.org/conference/osdi24/presentation/connell}
\showURL{%
\tempurl}


\bibitem[Cooper et~al\mbox{.}(2010)]%
        {ycsb}
\bibfield{author}{\bibinfo{person}{Brian~F. Cooper}, \bibinfo{person}{Adam
  Silberstein}, \bibinfo{person}{Erwin Tam}, \bibinfo{person}{Raghu
  Ramakrishnan}, {and} \bibinfo{person}{Russell Sears}.}
  \bibinfo{year}{2010}\natexlab{}.
\newblock \showarticletitle{Benchmarking cloud serving systems with YCSB}. In
  \bibinfo{booktitle}{\emph{Proceedings of the 1st ACM Symposium on Cloud
  Computing}} (Indianapolis, Indiana, USA) \emph{(\bibinfo{series}{SoCC '10})}.
  \bibinfo{publisher}{Association for Computing Machinery},
  \bibinfo{address}{New York, NY, USA}, \bibinfo{pages}{143–154}.
\newblock
\showISBNx{9781450300360}
\urldef\tempurl%
\url{https://doi.org/10.1145/1807128.1807152}
\showDOI{\tempurl}


\bibitem[Danezis et~al\mbox{.}(2022)]%
        {danezis2022narwhal}
\bibfield{author}{\bibinfo{person}{George Danezis}, \bibinfo{person}{Lefteris
  Kokoris-Kogias}, \bibinfo{person}{Alberto Sonnino}, {and}
  \bibinfo{person}{Alexander Spiegelman}.} \bibinfo{year}{2022}\natexlab{}.
\newblock \showarticletitle{Narwhal and Tusk: a DAG-based mempool and efficient
  BFT consensus}. In \bibinfo{booktitle}{\emph{Proceedings of the Seventeenth
  European Conference on Computer Systems}}. \bibinfo{pages}{34--50}.
\newblock


\bibitem[De~Meulemeester et~al\mbox{.}(2026)]%
        {batteringramsp26}
\bibfield{author}{\bibinfo{person}{Jesse De~Meulemeester},
  \bibinfo{person}{David Oswald}, \bibinfo{person}{Ingrid Verbauwhede}, {and}
  \bibinfo{person}{Jo Van~Bulck}.} \bibinfo{year}{2026}\natexlab{}.
\newblock \showarticletitle{{Battering RAM}: Low-Cost Interposer Attacks on
  Confidential Computing via Dynamic Memory Aliasing}. In
  \bibinfo{booktitle}{\emph{47th {IEEE} Symposium on Security and Privacy
  ({S\&P})}}.
\newblock


\bibitem[De~Meulemeester et~al\mbox{.}(2025)]%
        {badramsp25}
\bibfield{author}{\bibinfo{person}{Jesse De~Meulemeester},
  \bibinfo{person}{Luca Wilke}, \bibinfo{person}{David Oswald},
  \bibinfo{person}{Thomas Eisenbarth}, \bibinfo{person}{Ingrid Verbauwhede},
  {and} \bibinfo{person}{Jo Van~Bulck}.} \bibinfo{year}{2025}\natexlab{}.
\newblock \showarticletitle{{BadRAM}: Practical Memory Aliasing Attacks on
  Trusted Execution Environments}. In \bibinfo{booktitle}{\emph{46th {IEEE}
  Symposium on Security and Privacy ({S\&P})}}.
\newblock


\bibitem[Delignat-Lavaud et~al\mbox{.}(2023)]%
        {Delignat23}
\bibfield{author}{\bibinfo{person}{Antoine Delignat-Lavaud},
  \bibinfo{person}{C\'{e}dric Fournet}, \bibinfo{person}{Kapil Vaswani},
  \bibinfo{person}{Sylvan Clebsch}, \bibinfo{person}{Maik Riechert},
  \bibinfo{person}{Manuel Costa}, {and} \bibinfo{person}{Mark Russinovich}.}
  \bibinfo{year}{2023}\natexlab{}.
\newblock \showarticletitle{Why Should {I} Trust Your Code? {C}onfidential
  {C}omputing Enables Users to Authenticate Code Running in {TEE}s, but Users
  Also Need Evidence This Code is Trustworthy.}
\newblock \bibinfo{journal}{\emph{Queue}} \bibinfo{volume}{21},
  \bibinfo{number}{4} (\bibinfo{date}{sep} \bibinfo{year}{2023}),
  \bibinfo{pages}{94--122}.
\newblock
\showISSN{1542-7730}
\urldef\tempurl%
\url{https://doi.org/10.1145/3623460}
\showDOI{\tempurl}


\bibitem[Dinis et~al\mbox{.}(2023)]%
        {dinis2023rr}
\bibfield{author}{\bibinfo{person}{Baltasar Dinis}, \bibinfo{person}{Peter
  Druschel}, {and} \bibinfo{person}{Rodrigo Rodrigues}.}
  \bibinfo{year}{2023}\natexlab{}.
\newblock \showarticletitle{{RR}: A fault model for efficient {TEE}
  replication}. In \bibinfo{booktitle}{\emph{The Network and Distributed System
  Security Symposium}}. Internet Society, \bibinfo{publisher}{Internet
  Society}, \bibinfo{address}{San Diego, CA, USA},
  \bibinfo{numpages}{18}~pages.
\newblock
\urldef\tempurl%
\url{https://doi.org/10.14722/ndss.2023.24001}
\showDOI{\tempurl}


\bibitem[Duan et~al\mbox{.}(2015)]%
        {hbft}
\bibfield{author}{\bibinfo{person}{Sisi Duan}, \bibinfo{person}{Sean Peisert},
  {and} \bibinfo{person}{Karl~N Levitt}.} \bibinfo{year}{2015}\natexlab{}.
\newblock \showarticletitle{hBFT: speculative Byzantine fault tolerance with
  minimum cost}.
\newblock \bibinfo{journal}{\emph{IEEE Transactions on Dependable and Secure
  Computing}} \bibinfo{volume}{12}, \bibinfo{number}{1} (\bibinfo{year}{2015}),
  \bibinfo{pages}{58--70}.
\newblock
\urldef\tempurl%
\url{https://doi.org/10.1109/TDSC.2014.2312331}
\showDOI{\tempurl}


\bibitem[Duan and Zhang(2022)]%
        {duan2022dyno}
\bibfield{author}{\bibinfo{person}{Sisi Duan} {and} \bibinfo{person}{Haibin
  Zhang}.} \bibinfo{year}{2022}\natexlab{}.
\newblock \showarticletitle{Foundations of Dynamic BFT}. In
  \bibinfo{booktitle}{\emph{2022 IEEE Symposium on Security and Privacy (SP)}}.
  \bibinfo{pages}{1317--1334}.
\newblock
\urldef\tempurl%
\url{https://doi.org/10.1109/SP46214.2022.9833787}
\showDOI{\tempurl}


\bibitem[Dutta et~al\mbox{.}(2005)]%
        {Dutta2005BestCaseCO}
\bibfield{author}{\bibinfo{person}{Partha Dutta}, \bibinfo{person}{Rachid
  Guerraoui}, {and} \bibinfo{person}{Marko Vukolic}.}
  \bibinfo{year}{2005}\natexlab{}.
\newblock \bibinfo{booktitle}{\emph{Best-Case Complexity of Asynchronous
  Byzantine Consensus}}.
\newblock \bibinfo{type}{{T}echnical {R}eport} EPFL/IC/200499.
  \bibinfo{institution}{EPFL}, \bibinfo{address}{Lausanne, Switzerland}.
\newblock
\urldef\tempurl%
\url{https://lpdwww.epfl.ch/upload/documents/publications/567931850DGV-feb-05.pdf}
\showURL{%
\tempurl}


\bibitem[Fei et~al\mbox{.}(2021)]%
        {fei2021security}
\bibfield{author}{\bibinfo{person}{Shufan Fei}, \bibinfo{person}{Zheng Yan},
  \bibinfo{person}{Wenxiu Ding}, {and} \bibinfo{person}{Haomeng Xie}.}
  \bibinfo{year}{2021}\natexlab{}.
\newblock \showarticletitle{Security vulnerabilities of SGX and
  countermeasures: A survey}.
\newblock \bibinfo{journal}{\emph{ACM Computing Surveys (CSUR)}}
  \bibinfo{volume}{54}, \bibinfo{number}{6} (\bibinfo{year}{2021}),
  \bibinfo{pages}{1--36}.
\newblock


\bibitem[Gelashvili et~al\mbox{.}(2022)]%
        {jolteon-ditto}
\bibfield{author}{\bibinfo{person}{Rati Gelashvili}, \bibinfo{person}{Lefteris
  Kokoris-Kogias}, \bibinfo{person}{Alberto Sonnino},
  \bibinfo{person}{Alexander Spiegelman}, {and} \bibinfo{person}{Zhuolun
  Xiang}.} \bibinfo{year}{2022}\natexlab{}.
\newblock \showarticletitle{Jolteon and Ditto: Network-Adaptive Efficient
  Consensus with Asynchronous Fallback}. In \bibinfo{booktitle}{\emph{Financial
  Cryptography and Data Security: 26th International Conference, FC 2022,
  Grenada, May 2–6, 2022, Revised Selected Papers}} (Grenada, Grenada).
  \bibinfo{publisher}{Springer-Verlag}, \bibinfo{address}{Berlin, Heidelberg},
  \bibinfo{pages}{296–315}.
\newblock
\showISBNx{978-3-031-18282-2}
\urldef\tempurl%
\url{https://doi.org/10.1007/978-3-031-18283-9_14}
\showDOI{\tempurl}


\bibitem[Giridharan et~al\mbox{.}(2024)]%
        {autobahn}
\bibfield{author}{\bibinfo{person}{Neil Giridharan}, \bibinfo{person}{Florian
  Suri-Payer}, \bibinfo{person}{Ittai Abraham}, \bibinfo{person}{Lorenzo
  Alvisi}, {and} \bibinfo{person}{Natacha Crooks}.}
  \bibinfo{year}{2024}\natexlab{}.
\newblock \showarticletitle{Autobahn: Seamless high speed BFT}. In
  \bibinfo{booktitle}{\emph{Proceedings of the ACM SIGOPS 30th Symposium on
  Operating Systems Principles}} (Austin, TX, USA) \emph{(\bibinfo{series}{SOSP
  '24})}. \bibinfo{publisher}{Association for Computing Machinery},
  \bibinfo{address}{New York, NY, USA}, \bibinfo{pages}{1–23}.
\newblock
\showISBNx{9798400712517}
\urldef\tempurl%
\url{https://doi.org/10.1145/3694715.3695942}
\showDOI{\tempurl}


\bibitem[Giridharan et~al\mbox{.}(2023)]%
        {giridharan2023beegees}
\bibfield{author}{\bibinfo{person}{Neil Giridharan}, \bibinfo{person}{Florian
  Suri-Payer}, \bibinfo{person}{Matthew Ding}, \bibinfo{person}{Heidi Howard},
  \bibinfo{person}{Ittai Abraham}, {and} \bibinfo{person}{Natacha Crooks}.}
  \bibinfo{year}{2023}\natexlab{}.
\newblock \showarticletitle{BeeGees: stayin'alive in chained BFT}. In
  \bibinfo{booktitle}{\emph{Proceedings of the 2023 ACM Symposium on Principles
  of Distributed Computing}}. \bibinfo{pages}{233--243}.
\newblock


\bibitem[Golan~Gueta et~al\mbox{.}(2019)]%
        {gueta2019sbft}
\bibfield{author}{\bibinfo{person}{Guy Golan~Gueta}, \bibinfo{person}{Ittai
  Abraham}, \bibinfo{person}{Shelly Grossman}, \bibinfo{person}{Dahlia Malkhi},
  \bibinfo{person}{Benny Pinkas}, \bibinfo{person}{Michael Reiter},
  \bibinfo{person}{Dragos-Adrian Seredinschi}, \bibinfo{person}{Orr Tamir},
  {and} \bibinfo{person}{Alin Tomescu}.} \bibinfo{year}{2019}\natexlab{}.
\newblock \showarticletitle{SBFT: A Scalable and Decentralized Trust
  Infrastructure}. In \bibinfo{booktitle}{\emph{2019 49th Annual IEEE/IFIP
  International Conference on Dependable Systems and Networks (DSN)}}.
  \bibinfo{publisher}{IEEE}, \bibinfo{address}{USA}, \bibinfo{pages}{568--580}.
\newblock
\urldef\tempurl%
\url{https://doi.org/10.1109/DSN.2019.00063}
\showDOI{\tempurl}


\bibitem[Gupta et~al\mbox{.}(2023)]%
        {gupta2023bft}
\bibfield{author}{\bibinfo{person}{Suyash Gupta}, \bibinfo{person}{Sajjad
  Rahnama}, \bibinfo{person}{Shubham Pandey}, \bibinfo{person}{Natacha Crooks},
  {and} \bibinfo{person}{Mohammad Sadoghi}.} \bibinfo{year}{2023}\natexlab{}.
\newblock \showarticletitle{Dissecting BFT Consensus: In Trusted Components we
  Trust!}. In \bibinfo{booktitle}{\emph{Proceedings of the Eighteenth European
  Conference on Computer Systems}} (Rome, Italy)
  \emph{(\bibinfo{series}{EuroSys '23})}. \bibinfo{publisher}{Association for
  Computing Machinery}, \bibinfo{address}{New York, NY, USA},
  \bibinfo{pages}{521–539}.
\newblock
\showISBNx{9781450394871}
\urldef\tempurl%
\url{https://doi.org/10.1145/3552326.3587455}
\showDOI{\tempurl}


\bibitem[Haeberlen et~al\mbox{.}(2007)]%
        {peerreview}
\bibfield{author}{\bibinfo{person}{Andreas Haeberlen}, \bibinfo{person}{Petr
  Kouznetsov}, {and} \bibinfo{person}{Peter Druschel}.}
  \bibinfo{year}{2007}\natexlab{}.
\newblock \showarticletitle{PeerReview: practical accountability for
  distributed systems}. In \bibinfo{booktitle}{\emph{Proceedings of
  Twenty-First ACM SIGOPS Symposium on Operating Systems Principles}}
  (Stevenson, Washington, USA) \emph{(\bibinfo{series}{SOSP '07})}.
  \bibinfo{publisher}{Association for Computing Machinery},
  \bibinfo{address}{New York, NY, USA}, \bibinfo{pages}{175–188}.
\newblock
\showISBNx{9781595935915}
\urldef\tempurl%
\url{https://doi.org/10.1145/1294261.1294279}
\showDOI{\tempurl}


\bibitem[Haeberlen and Kuznetsov(2009)]%
        {haeberlen09opodis}
\bibfield{author}{\bibinfo{person}{Andreas Haeberlen} {and}
  \bibinfo{person}{Petr Kuznetsov}.} \bibinfo{year}{2009}\natexlab{}.
\newblock \showarticletitle{The Fault Detection Problem}. In
  \bibinfo{booktitle}{\emph{Proceedings of the 13th International Conference on
  Principles of Distributed Systems}} (N\^{\i}mes, France)
  \emph{(\bibinfo{series}{OPODIS '09})}. \bibinfo{publisher}{Springer-Verlag},
  \bibinfo{address}{Berlin, Heidelberg}, \bibinfo{pages}{99–114}.
\newblock
\showISBNx{9783642108761}
\urldef\tempurl%
\url{https://doi.org/10.1007/978-3-642-10877-8_10}
\showDOI{\tempurl}


\bibitem[Haeberlen et~al\mbox{.}(2005)]%
        {glacier}
\bibfield{author}{\bibinfo{person}{Andreas Haeberlen}, \bibinfo{person}{Alan
  Mislove}, {and} \bibinfo{person}{Peter Druschel}.}
  \bibinfo{year}{2005}\natexlab{}.
\newblock \showarticletitle{Glacier: highly durable, decentralized storage
  despite massive correlated failures}. In
  \bibinfo{booktitle}{\emph{Proceedings of the 2nd Conference on Symposium on
  Networked Systems Design \& Implementation - Volume 2}}
  \emph{(\bibinfo{series}{NSDI'05})}. \bibinfo{publisher}{USENIX Association},
  \bibinfo{address}{USA}, \bibinfo{pages}{143–158}.
\newblock


\bibitem[Howard et~al\mbox{.}(2023)]%
        {ccf}
\bibfield{author}{\bibinfo{person}{Heidi Howard}, \bibinfo{person}{Fritz
  Alder}, \bibinfo{person}{Edward Ashton}, \bibinfo{person}{Amaury Chamayou},
  \bibinfo{person}{Sylvan Clebsch}, \bibinfo{person}{Manuel Costa},
  \bibinfo{person}{Antoine Delignat-Lavaud}, \bibinfo{person}{C\'{e}dric
  Fournet}, \bibinfo{person}{Andrew Jeffery}, \bibinfo{person}{Matthew Kerner},
  \bibinfo{person}{Fotios Kounelis}, \bibinfo{person}{Markus~A. Kuppe},
  \bibinfo{person}{Julien Maffre}, \bibinfo{person}{Mark Russinovich}, {and}
  \bibinfo{person}{Christoph~M. Wintersteiger}.}
  \bibinfo{year}{2023}\natexlab{}.
\newblock \showarticletitle{Confidential Consortium Framework: Secure
  Multiparty Applications with Confidentiality, Integrity, and High
  Availability}.
\newblock \bibinfo{journal}{\emph{Proc. VLDB Endow.}} \bibinfo{volume}{17},
  \bibinfo{number}{2} (\bibinfo{date}{oct} \bibinfo{year}{2023}),
  \bibinfo{pages}{225–240}.
\newblock
\showISSN{2150-8097}
\urldef\tempurl%
\url{https://doi.org/10.14778/3626292.3626304}
\showDOI{\tempurl}


\bibitem[Intel(2021)]%
        {TDX}
\bibfield{author}{\bibinfo{person}{Intel}.} \bibinfo{year}{2021}\natexlab{}.
\newblock \bibinfo{title}{Intel Trust Domain Extensions - White Paper}.
\newblock
\newblock
\newblock
\shownote{\url{https://cdrdv2.intel.com/v1/dl/getContent/690419}}.


\bibitem[Keidar et~al\mbox{.}(2021)]%
        {dag-rider}
\bibfield{author}{\bibinfo{person}{Idit Keidar}, \bibinfo{person}{Eleftherios
  Kokoris-Kogias}, \bibinfo{person}{Oded Naor}, {and}
  \bibinfo{person}{Alexander Spiegelman}.} \bibinfo{year}{2021}\natexlab{}.
\newblock \showarticletitle{All You Need is DAG}. In
  \bibinfo{booktitle}{\emph{Proceedings of the 2021 ACM Symposium on Principles
  of Distributed Computing}} (Virtual Event, Italy)
  \emph{(\bibinfo{series}{PODC'21})}. \bibinfo{publisher}{Association for
  Computing Machinery}, \bibinfo{address}{New York, NY, USA},
  \bibinfo{pages}{165–175}.
\newblock
\showISBNx{9781450385480}
\urldef\tempurl%
\url{https://doi.org/10.1145/3465084.3467905}
\showDOI{\tempurl}


\bibitem[Kotla et~al\mbox{.}(2010)]%
        {zyzzyva}
\bibfield{author}{\bibinfo{person}{Ramakrishna Kotla}, \bibinfo{person}{Lorenzo
  Alvisi}, \bibinfo{person}{Mike Dahlin}, \bibinfo{person}{Allen Clement},
  {and} \bibinfo{person}{Edmund Wong}.} \bibinfo{year}{2010}\natexlab{}.
\newblock \showarticletitle{Zyzzyva: Speculative {B}yzantine {F}ault
  {T}olerance}.
\newblock \bibinfo{journal}{\emph{ACM Transactions on Computer Systems (TOCS)}}
  \bibinfo{volume}{27}, \bibinfo{number}{4}, Article \bibinfo{articleno}{7}
  (\bibinfo{date}{Jan.} \bibinfo{year}{2010}), \bibinfo{numpages}{39}~pages.
\newblock
\showISSN{0734-2071}
\urldef\tempurl%
\url{https://doi.org/10.1145/1658357.1658358}
\showDOI{\tempurl}


\bibitem[Labs(2022)]%
        {PDO}
\bibfield{author}{\bibinfo{person}{Hyperledger Labs}.}
  \bibinfo{year}{2022}\natexlab{}.
\newblock \bibinfo{title}{Hyperledger Private Data Objects}.
\newblock
\newblock
\newblock
\shownote{\url{https://github.com/hyperledger-labs/private-data-objects}}.


\bibitem[Lamport(2006)]%
        {lamport2006lower}
\bibfield{author}{\bibinfo{person}{Leslie Lamport}.}
  \bibinfo{year}{2006}\natexlab{}.
\newblock \showarticletitle{Lower bounds for asynchronous consensus}.
\newblock \bibinfo{journal}{\emph{Distrib. Comput.}} \bibinfo{volume}{19},
  \bibinfo{number}{2} (\bibinfo{date}{Oct.} \bibinfo{year}{2006}),
  \bibinfo{pages}{104–125}.
\newblock
\showISSN{0178-2770}
\urldef\tempurl%
\url{https://doi.org/10.1007/s00446-006-0155-x}
\showDOI{\tempurl}


\bibitem[Len et~al\mbox{.}(2024)]%
        {optiks}
\bibfield{author}{\bibinfo{person}{Julia Len}, \bibinfo{person}{Melissa Chase},
  \bibinfo{person}{Esha Ghosh}, \bibinfo{person}{Kim Laine}, {and}
  \bibinfo{person}{Radames~Cruz Moreno}.} \bibinfo{year}{2024}\natexlab{}.
\newblock \showarticletitle{{OPTIKS}: An Optimized Key Transparency System}. In
  \bibinfo{booktitle}{\emph{33rd USENIX Security Symposium (USENIX Security
  24)}}. \bibinfo{publisher}{USENIX Association},
  \bibinfo{address}{Philadelphia, PA}, \bibinfo{pages}{4355--4372}.
\newblock
\showISBNx{978-1-939133-44-1}
\urldef\tempurl%
\url{https://www.usenix.org/conference/usenixsecurity24/presentation/len}
\showURL{%
\tempurl}


\bibitem[Levin et~al\mbox{.}(2009)]%
        {trinc}
\bibfield{author}{\bibinfo{person}{Dave Levin}, \bibinfo{person}{John~R.
  Douceur}, \bibinfo{person}{Jacob~R. Lorch}, {and} \bibinfo{person}{Thomas
  Moscibroda}.} \bibinfo{year}{2009}\natexlab{}.
\newblock \showarticletitle{TrInc: small trusted hardware for large distributed
  systems}. In \bibinfo{booktitle}{\emph{Proceedings of the 6th USENIX
  Symposium on Networked Systems Design and Implementation}} (Boston,
  Massachusetts) \emph{(\bibinfo{series}{NSDI'09})}. \bibinfo{publisher}{USENIX
  Association}, \bibinfo{address}{USA}, \bibinfo{pages}{1–14}.
\newblock


\bibitem[Li et~al\mbox{.}(2022)]%
        {li2022systematic}
\bibfield{author}{\bibinfo{person}{Mengyuan Li}, \bibinfo{person}{Luca Wilke},
  \bibinfo{person}{Jan Wichelmann}, \bibinfo{person}{Thomas Eisenbarth},
  \bibinfo{person}{Radu Teodorescu}, {and} \bibinfo{person}{Yinqian Zhang}.}
  \bibinfo{year}{2022}\natexlab{}.
\newblock \showarticletitle{A systematic look at ciphertext side channels on
  AMD SEV-SNP}. In \bibinfo{booktitle}{\emph{2022 IEEE Symposium on Security
  and Privacy (SP)}}. IEEE, \bibinfo{pages}{337--351}.
\newblock


\bibitem[Li et~al\mbox{.}(2021)]%
        {li2021cipherleaks}
\bibfield{author}{\bibinfo{person}{Mengyuan Li}, \bibinfo{person}{Yinqian
  Zhang}, \bibinfo{person}{Huibo Wang}, \bibinfo{person}{Kang Li}, {and}
  \bibinfo{person}{Yueqiang Cheng}.} \bibinfo{year}{2021}\natexlab{}.
\newblock \showarticletitle{{CIPHERLEAKS}: Breaking Constant-time Cryptography
  on {AMD SEV} via the Ciphertext Side Channel}. In
  \bibinfo{booktitle}{\emph{30th USENIX Security Symposium (USENIX Security
  21)}}. \bibinfo{publisher}{USENIX Association}, \bibinfo{pages}{717--732}.
\newblock
\urldef\tempurl%
\url{https://www.usenix.org/conference/usenixsecurity21/presentation/li-mengyuan}
\showURL{%
\tempurl}


\bibitem[Liu et~al\mbox{.}(2016)]%
        {xft}
\bibfield{author}{\bibinfo{person}{Shengyun Liu}, \bibinfo{person}{Paolo
  Viotti}, \bibinfo{person}{Christian Cachin}, \bibinfo{person}{Vivien
  Qu\'{e}ma}, {and} \bibinfo{person}{Marko Vukolic}.}
  \bibinfo{year}{2016}\natexlab{}.
\newblock \showarticletitle{XFT: Practical Fault Tolerance beyond Crashes}. In
  \bibinfo{booktitle}{\emph{Proceedings of the 12th USENIX Conference on
  Operating Systems Design and Implementation}} (Savannah, GA, USA)
  \emph{(\bibinfo{series}{OSDI'16})}. \bibinfo{publisher}{USENIX Association},
  \bibinfo{address}{USA}, \bibinfo{pages}{485–500}.
\newblock
\showISBNx{9781931971331}


\bibitem[Lund(2019)]%
        {signal-recovery}
\bibfield{author}{\bibinfo{person}{Joshua Lund}.}
  \bibinfo{year}{2019}\natexlab{}.
\newblock \bibinfo{title}{Technology Preview for secure value recovery}.
\newblock
\newblock
\newblock
\shownote{\url{https://signal.org/blog/secure-value-recovery/}}.


\bibitem[Malkhi et~al\mbox{.}(2019)]%
        {flexible-bft}
\bibfield{author}{\bibinfo{person}{Dahlia Malkhi}, \bibinfo{person}{Kartik
  Nayak}, {and} \bibinfo{person}{Ling Ren}.} \bibinfo{year}{2019}\natexlab{}.
\newblock \showarticletitle{Flexible Byzantine Fault Tolerance}. In
  \bibinfo{booktitle}{\emph{Proceedings of the 2019 ACM SIGSAC Conference on
  Computer and Communications Security}} (London, United Kingdom)
  \emph{(\bibinfo{series}{CCS '19})}. \bibinfo{publisher}{Association for
  Computing Machinery}, \bibinfo{address}{New York, NY, USA},
  \bibinfo{pages}{1041–1053}.
\newblock
\showISBNx{9781450367479}
\urldef\tempurl%
\url{https://doi.org/10.1145/3319535.3354225}
\showDOI{\tempurl}


\bibitem[Melara et~al\mbox{.}(2015)]%
        {CONIKS}
\bibfield{author}{\bibinfo{person}{Marcela~S Melara}, \bibinfo{person}{Aaron
  Blankstein}, \bibinfo{person}{Joseph Bonneau}, \bibinfo{person}{Edward~W
  Felten}, {and} \bibinfo{person}{Michael~J Freedman}.}
  \bibinfo{year}{2015}\natexlab{}.
\newblock \showarticletitle{{CONIKS}: Bringing Key Transparency to End Users}.
  In \bibinfo{booktitle}{\emph{24th USENIX Security Symposium (USENIX Security
  15)}}. \bibinfo{publisher}{USENIX Association}, \bibinfo{address}{Washington,
  D.C.}, \bibinfo{pages}{383--398}.
\newblock
\urldef\tempurl%
\url{https://www.usenix.org/conference/usenixsecurity15/technical-sessions/presentation/melara}
\showURL{%
\tempurl}


\bibitem[Messadi et~al\mbox{.}(2022)]%
        {messadi2022splitbft}
\bibfield{author}{\bibinfo{person}{Ines Messadi}, \bibinfo{person}{Markus~Horst
  Becker}, \bibinfo{person}{Kai Bleeke}, \bibinfo{person}{Leander Jehl},
  \bibinfo{person}{Sonia~Ben Mokhtar}, {and} \bibinfo{person}{R\"{u}diger
  Kapitza}.} \bibinfo{year}{2022}\natexlab{}.
\newblock \showarticletitle{SplitBFT: Improving Byzantine Fault Tolerance
  Safety Using Trusted Compartments}. In \bibinfo{booktitle}{\emph{Proceedings
  of the 23rd ACM/IFIP International Middleware Conference}} (Quebec, QC,
  Canada) \emph{(\bibinfo{series}{Middleware '22})}.
  \bibinfo{publisher}{Association for Computing Machinery},
  \bibinfo{address}{New York, NY, USA}, \bibinfo{pages}{56–68}.
\newblock
\showISBNx{9781450393409}
\urldef\tempurl%
\url{https://doi.org/10.1145/3528535.3531516}
\showDOI{\tempurl}


\bibitem[Microsoft(2022a)]%
        {scitt}
\bibfield{author}{\bibinfo{person}{Microsoft}.}
  \bibinfo{year}{2022}\natexlab{a}.
\newblock \bibinfo{title}{scitt-ccf-ledger}.
\newblock
\newblock
\newblock
\shownote{\url{https://github.com/microsoft/scitt-ccf-ledger}}.


\bibitem[Microsoft(2022b)]%
        {did-ccf}
\bibfield{author}{\bibinfo{person}{Microsoft}.}
  \bibinfo{year}{2022}\natexlab{b}.
\newblock \bibinfo{title}{{W3C} {DID} for {Confidential Consortium Framework}}.
\newblock
\newblock
\newblock
\shownote{\url{https://github.com/microsoft/did-ccf}}.


\bibitem[Microsoft(2023)]%
        {privacy-sandbox}
\bibfield{author}{\bibinfo{person}{Microsoft}.}
  \bibinfo{year}{2023}\natexlab{}.
\newblock \bibinfo{title}{Privacy Sandbox Key Management System (KMS) for
  Azure}.
\newblock
\newblock
\newblock
\shownote{\url{https://github.com/microsoft/azure-privacy-sandbox-kms}}.


\bibitem[Microsoft(2024a)]%
        {ccf-ledger}
\bibfield{author}{\bibinfo{person}{Microsoft}.}
  \bibinfo{year}{2024}\natexlab{a}.
\newblock \bibinfo{title}{{Azure Confidential Ledger}}.
\newblock
\newblock
\newblock
\shownote{\url{https://azure.microsoft.com/en-gb/products/azure-confidential-ledger},
  last accessed on 04/10/24}.


\bibitem[Microsoft(2024b)]%
        {microsoft-ccf}
\bibfield{author}{\bibinfo{person}{Microsoft}.}
  \bibinfo{year}{2024}\natexlab{b}.
\newblock \bibinfo{title}{{Confidential Consortium Framework, Microsoft}}.
\newblock
\newblock
\newblock
\shownote{\url{https://ccf.microsoft.com/}, last accessed on 04/10/24}.


\bibitem[Mishra et~al\mbox{.}(2024)]%
        {fault-in-bus}
\bibfield{author}{\bibinfo{person}{Nimish Mishra}, \bibinfo{person}{Anirban
  Chakraborty}, {and} \bibinfo{person}{Debdeep Mukhopadhyay}.}
  \bibinfo{year}{2024}\natexlab{}.
\newblock \showarticletitle{Faults in Our Bus: Novel Bus Fault Attack to Break
  {ARM} {TrustZone}}. In \bibinfo{booktitle}{\emph{Proceedings 2024 Network and
  Distributed System Security Symposium}} (San Diego, {CA}, {USA}).
  \bibinfo{publisher}{Internet Society}.
\newblock
\showISBNx{978-1-891562-93-8}
\urldef\tempurl%
\url{https://doi.org/10.14722/ndss.2024.24499}
\showDOI{\tempurl}


\bibitem[Morbitzer et~al\mbox{.}(2018)]%
        {severed}
\bibfield{author}{\bibinfo{person}{Mathias Morbitzer}, \bibinfo{person}{Manuel
  Huber}, \bibinfo{person}{Julian Horsch}, {and} \bibinfo{person}{Sascha
  Wessel}.} \bibinfo{year}{2018}\natexlab{}.
\newblock \showarticletitle{SEVered: Subverting AMD's Virtual Machine
  Encryption}. In \bibinfo{booktitle}{\emph{Proceedings of the 11th European
  Workshop on Systems Security}} (Porto, Portugal)
  \emph{(\bibinfo{series}{EuroSec'18})}. \bibinfo{publisher}{Association for
  Computing Machinery}, \bibinfo{address}{New York, NY, USA}, Article
  \bibinfo{articleno}{1}, \bibinfo{numpages}{6}~pages.
\newblock
\showISBNx{9781450356527}
\urldef\tempurl%
\url{https://doi.org/10.1145/3193111.3193112}
\showDOI{\tempurl}


\bibitem[Niu et~al\mbox{.}(2025)]%
        {Achilles}
\bibfield{author}{\bibinfo{person}{Jianyu Niu}, \bibinfo{person}{Xiaoqing Wen},
  \bibinfo{person}{Guanlong Wu}, \bibinfo{person}{Shengqi Liu},
  \bibinfo{person}{Jiangshan Yu}, {and} \bibinfo{person}{Yinqian Zhang}.}
  \bibinfo{year}{2025}\natexlab{}.
\newblock \showarticletitle{Achilles: Efficient TEE-Assisted BFT Consensus via
  Rollback Resilient Recovery}. In \bibinfo{booktitle}{\emph{Proceedings of the
  Twentieth European Conference on Computer Systems}} (Rotterdam, Netherlands)
  \emph{(\bibinfo{series}{EuroSys '25})}. \bibinfo{publisher}{Association for
  Computing Machinery}, \bibinfo{address}{New York, NY, USA},
  \bibinfo{pages}{193–210}.
\newblock
\showISBNx{9798400711961}
\urldef\tempurl%
\url{https://doi.org/10.1145/3689031.3717457}
\showDOI{\tempurl}


\bibitem[Oasis(2024)]%
        {oasis}
\bibfield{author}{\bibinfo{person}{Oasis}.} \bibinfo{year}{2024}\natexlab{}.
\newblock \bibinfo{title}{Security \& {TEE}s}.
\newblock
\newblock
\newblock
\shownote{\url{https://oasisprotocol.org/security-and-tees}}.


\bibitem[Ongaro and Ousterhout(2014)]%
        {raft}
\bibfield{author}{\bibinfo{person}{Diego Ongaro} {and} \bibinfo{person}{John
  Ousterhout}.} \bibinfo{year}{2014}\natexlab{}.
\newblock \showarticletitle{In search of an understandable consensus
  algorithm}. In \bibinfo{booktitle}{\emph{Proceedings of the 2014 USENIX
  Conference on USENIX Annual Technical Conference}} (Philadelphia, PA)
  \emph{(\bibinfo{series}{USENIX ATC'14})}. \bibinfo{publisher}{USENIX
  Association}, \bibinfo{address}{USA}, \bibinfo{pages}{305–320}.
\newblock
\showISBNx{9781931971102}


\bibitem[Porto et~al\mbox{.}(2015)]%
        {porto2015visigoth}
\bibfield{author}{\bibinfo{person}{Daniel Porto}, \bibinfo{person}{Jo\~{a}o
  Leit\~{a}o}, \bibinfo{person}{Cheng Li}, \bibinfo{person}{Allen Clement},
  \bibinfo{person}{Aniket Kate}, \bibinfo{person}{Flavio Junqueira}, {and}
  \bibinfo{person}{Rodrigo Rodrigues}.} \bibinfo{year}{2015}\natexlab{}.
\newblock \showarticletitle{Visigoth fault tolerance}. In
  \bibinfo{booktitle}{\emph{Proceedings of the Tenth European Conference on
  Computer Systems}} (Bordeaux, France) \emph{(\bibinfo{series}{EuroSys '15})}.
  \bibinfo{publisher}{Association for Computing Machinery},
  \bibinfo{address}{New York, NY, USA}, Article \bibinfo{articleno}{8},
  \bibinfo{numpages}{14}~pages.
\newblock
\showISBNx{9781450332385}
\urldef\tempurl%
\url{https://doi.org/10.1145/2741948.2741979}
\showDOI{\tempurl}


\bibitem[Qiu et~al\mbox{.}(2019)]%
        {qiu2019voltjockey}
\bibfield{author}{\bibinfo{person}{Pengfei Qiu}, \bibinfo{person}{Dongsheng
  Wang}, \bibinfo{person}{Yongqiang Lyu}, {and} \bibinfo{person}{Gang Qu}.}
  \bibinfo{year}{2019}\natexlab{}.
\newblock \showarticletitle{VoltJockey: Breaking SGX by software-controlled
  voltage-induced hardware faults}. In \bibinfo{booktitle}{\emph{2019 Asian
  Hardware Oriented Security and Trust Symposium (AsianHOST)}}. IEEE,
  \bibinfo{pages}{1--6}.
\newblock


\bibitem[Russinovich(2024)]%
        {azure-cai}
\bibfield{author}{\bibinfo{person}{Mark Russinovich}.}
  \bibinfo{year}{2024}\natexlab{}.
\newblock \bibinfo{title}{Azure AI Confidential Inferencing: Technical
  Deep-Dive}.
\newblock
\newblock
\newblock
\shownote{\url{https://techcommunity.microsoft.com/blog/azureconfidentialcomputingblog/azure-ai-confidential-inferencing-technical-deep-dive/4253150}}.


\bibitem[Russinovich(2025)]%
        {cts-announcement}
\bibfield{author}{\bibinfo{person}{Mark Russinovich}.}
  \bibinfo{year}{2025}\natexlab{}.
\newblock \bibinfo{title}{Enhancing software supply chain security with
  Microsoft’s Signing Transparency}.
\newblock
\newblock
\newblock
\shownote{\url{https://azure.microsoft.com/en-us/blog/enhancing-software-supply-chain-security-with-microsofts-signing-transparency/}}.


\bibitem[Schl{\"u}ter et~al\mbox{.}(2024)]%
        {heckler}
\bibfield{author}{\bibinfo{person}{Benedict Schl{\"u}ter},
  \bibinfo{person}{Supraja Sridhara}, \bibinfo{person}{Mark Kuhne},
  \bibinfo{person}{Andrin Bertschi}, {and} \bibinfo{person}{Shweta Shinde}.}
  \bibinfo{year}{2024}\natexlab{}.
\newblock \showarticletitle{{HECKLER}: Breaking Confidential {VMs} with
  Malicious Interrupts}. In \bibinfo{booktitle}{\emph{33rd USENIX Security
  Symposium (USENIX Security 24)}}. \bibinfo{publisher}{USENIX Association},
  \bibinfo{address}{Philadelphia, PA}, \bibinfo{pages}{3459--3476}.
\newblock
\showISBNx{978-1-939133-44-1}
\urldef\tempurl%
\url{https://www.usenix.org/conference/usenixsecurity24/presentation/schl{\"u}ter}
\showURL{%
\tempurl}


\bibitem[Schlüter et~al\mbox{.}(2024)]%
        {malicious-vc}
\bibfield{author}{\bibinfo{person}{Benedict Schlüter},
  \bibinfo{person}{Supraja Sridhara}, \bibinfo{person}{Andrin Bertschi}, {and}
  \bibinfo{person}{Shweta Shinde}.} \bibinfo{year}{2024}\natexlab{}.
\newblock \bibinfo{title}{WeSee: Using Malicious \#VC Interrupts to Break AMD
  SEV-SNP}.
\newblock
\newblock
\showeprint[arxiv]{2404.03526}~[cs.CR]
\urldef\tempurl%
\url{https://arxiv.org/abs/2404.03526}
\showURL{%
\tempurl}


\bibitem[Shamis et~al\mbox{.}(2022)]%
        {shamis2022iaaccf}
\bibfield{author}{\bibinfo{person}{Alex Shamis}, \bibinfo{person}{Peter
  Pietzuch}, \bibinfo{person}{Burcu Canakci}, \bibinfo{person}{Miguel Castro},
  \bibinfo{person}{Cedric Fournet}, \bibinfo{person}{Edward Ashton},
  \bibinfo{person}{Amaury Chamayou}, \bibinfo{person}{Sylvan Clebsch},
  \bibinfo{person}{Antoine Delignat-Lavaud}, \bibinfo{person}{Matthew Kerner},
  \bibinfo{person}{Julien Maffre}, \bibinfo{person}{Olga Vrousgou},
  \bibinfo{person}{Christoph~M. Wintersteiger}, \bibinfo{person}{Manuel Costa},
  {and} \bibinfo{person}{Mark Russinovich}.} \bibinfo{year}{2022}\natexlab{}.
\newblock \showarticletitle{{IA-CCF}: Individual Accountability for
  Permissioned Ledgers}. In \bibinfo{booktitle}{\emph{19th USENIX Symposium on
  Networked Systems Design and Implementation (NSDI 22)}}.
  \bibinfo{publisher}{USENIX Association}, \bibinfo{address}{Renton, WA},
  \bibinfo{pages}{467--491}.
\newblock
\showISBNx{978-1-939133-27-4}
\urldef\tempurl%
\url{https://www.usenix.org/conference/nsdi22/presentation/shamis}
\showURL{%
\tempurl}


\bibitem[Sheng et~al\mbox{.}(2021)]%
        {bft-forensics}
\bibfield{author}{\bibinfo{person}{Peiyao Sheng}, \bibinfo{person}{Gerui Wang},
  \bibinfo{person}{Kartik Nayak}, \bibinfo{person}{Sreeram Kannan}, {and}
  \bibinfo{person}{Pramod Viswanath}.} \bibinfo{year}{2021}\natexlab{}.
\newblock \showarticletitle{BFT Protocol Forensics}. In
  \bibinfo{booktitle}{\emph{Proceedings of the 2021 ACM SIGSAC Conference on
  Computer and Communications Security}} (Virtual Event, Republic of Korea)
  \emph{(\bibinfo{series}{CCS '21})}. \bibinfo{publisher}{Association for
  Computing Machinery}, \bibinfo{address}{New York, NY, USA},
  \bibinfo{pages}{1722–1743}.
\newblock
\showISBNx{9781450384544}
\urldef\tempurl%
\url{https://doi.org/10.1145/3460120.3484566}
\showDOI{\tempurl}


\bibitem[Spiegelman et~al\mbox{.}(2022)]%
        {Bullshark}
\bibfield{author}{\bibinfo{person}{Alexander Spiegelman}, \bibinfo{person}{Neil
  Giridharan}, \bibinfo{person}{Alberto Sonnino}, {and}
  \bibinfo{person}{Lefteris Kokoris-Kogias}.} \bibinfo{year}{2022}\natexlab{}.
\newblock \showarticletitle{Bullshark: DAG BFT Protocols Made Practical}. In
  \bibinfo{booktitle}{\emph{Proceedings of the 2022 ACM SIGSAC Conference on
  Computer and Communications Security}} (Los Angeles, CA, USA)
  \emph{(\bibinfo{series}{CCS '22})}. \bibinfo{publisher}{Association for
  Computing Machinery}, \bibinfo{address}{New York, NY, USA},
  \bibinfo{pages}{2705–2718}.
\newblock
\showISBNx{9781450394505}
\urldef\tempurl%
\url{https://doi.org/10.1145/3548606.3559361}
\showDOI{\tempurl}


\bibitem[Van~Bulck et~al\mbox{.}(2017)]%
        {sgx-step}
\bibfield{author}{\bibinfo{person}{Jo Van~Bulck}, \bibinfo{person}{Frank
  Piessens}, {and} \bibinfo{person}{Raoul Strackx}.}
  \bibinfo{year}{2017}\natexlab{}.
\newblock \showarticletitle{SGX-Step: A Practical Attack Framework for Precise
  Enclave Execution Control}. In \bibinfo{booktitle}{\emph{Proceedings of the
  2nd Workshop on System Software for Trusted Execution}} (Shanghai, China)
  \emph{(\bibinfo{series}{SysTEX'17})}. \bibinfo{publisher}{Association for
  Computing Machinery}, \bibinfo{address}{New York, NY, USA}, Article
  \bibinfo{articleno}{4}, \bibinfo{numpages}{6}~pages.
\newblock
\showISBNx{9781450350976}
\urldef\tempurl%
\url{https://doi.org/10.1145/3152701.3152706}
\showDOI{\tempurl}


\bibitem[Veronese et~al\mbox{.}(2013)]%
        {minbft}
\bibfield{author}{\bibinfo{person}{Giuliana~Santos Veronese},
  \bibinfo{person}{Miguel Correia}, \bibinfo{person}{Alysson~Neves Bessani},
  \bibinfo{person}{Lau~Cheuk Lung}, {and} \bibinfo{person}{Paulo Verissimo}.}
  \bibinfo{year}{2013}\natexlab{}.
\newblock \showarticletitle{Efficient Byzantine Fault-Tolerance}.
\newblock \bibinfo{journal}{\emph{IEEE Trans. Comput.}} \bibinfo{volume}{62},
  \bibinfo{number}{1} (\bibinfo{date}{Jan.} \bibinfo{year}{2013}),
  \bibinfo{pages}{16–30}.
\newblock
\showISSN{0018-9340}
\urldef\tempurl%
\url{https://doi.org/10.1109/TC.2011.221}
\showDOI{\tempurl}


\bibitem[Wang et~al\mbox{.}(2022)]%
        {Wang22}
\bibfield{author}{\bibinfo{person}{Weili Wang}, \bibinfo{person}{Sen Deng},
  \bibinfo{person}{Jianyu Niu}, \bibinfo{person}{Michael~K. Reiter}, {and}
  \bibinfo{person}{Yinqian Zhang}.} \bibinfo{year}{2022}\natexlab{}.
\newblock \showarticletitle{ENGRAFT: Enclave-Guarded Raft on Byzantine Faulty
  Nodes}. In \bibinfo{booktitle}{\emph{Proceedings of the 2022 ACM SIGSAC
  Conference on Computer and Communications Security}} (Los Angeles, CA, USA)
  \emph{(\bibinfo{series}{CCS '22})}. \bibinfo{publisher}{Association for
  Computing Machinery}, \bibinfo{address}{New York, NY, USA},
  \bibinfo{pages}{2841--2855}.
\newblock
\showISBNx{9781450394505}
\urldef\tempurl%
\url{https://doi.org/10.1145/3548606.3560639}
\showDOI{\tempurl}


\bibitem[Wood(2025)]%
        {ethereum}
\bibfield{author}{\bibinfo{person}{Gavin Wood}.}
  \bibinfo{year}{2025}\natexlab{}.
\newblock \bibinfo{title}{Ethereum: A Secure Decentralised Generalised
  Transaction Ledger}.
\newblock
\newblock
\urldef\tempurl%
\url{https://ethereum.github.io/yellowpaper/paper.pdf}
\showURL{%
\tempurl}
\newblock
\shownote{Shanghai Version, efc5f9a}.


\bibitem[Yin et~al\mbox{.}(2019)]%
        {hotstuff}
\bibfield{author}{\bibinfo{person}{Maofan Yin}, \bibinfo{person}{Dahlia
  Malkhi}, \bibinfo{person}{Michael~K. Reiter}, \bibinfo{person}{Guy~Golan
  Gueta}, {and} \bibinfo{person}{Ittai Abraham}.}
  \bibinfo{year}{2019}\natexlab{}.
\newblock \showarticletitle{HotStuff: BFT Consensus with Linearity and
  Responsiveness}. In \bibinfo{booktitle}{\emph{Proceedings of the 2019 ACM
  Symposium on Principles of Distributed Computing}} (Toronto ON, Canada)
  \emph{(\bibinfo{series}{PODC '19})}. \bibinfo{publisher}{Association for
  Computing Machinery}, \bibinfo{address}{New York, NY, USA},
  \bibinfo{pages}{347–356}.
\newblock
\showISBNx{9781450362177}
\urldef\tempurl%
\url{https://doi.org/10.1145/3293611.3331591}
\showDOI{\tempurl}


\end{thebibliography}

\pagebreak

\appendix

\section{Safety Proof}
\label{sec:safety}

We will prove:
\begin{itemize}
    \item Safety of slow audit (\S\ref{subsec:audit-safety-without-fast-path}),
    \item Safety of fast audit (\S\ref{subsec:audit-safety-with-fast-path}),
    \item Commit safety (\S\ref{subsec:commit-safety})
\end{itemize}

\subsection{Notation}
We (re-)introduce some shorthand notations:
\begin{itemize}[leftmargin=*]
    \item For two batches $B$ and $B'$, if $B$ is the parent of $B'$, ie, $B'.parent = B$, we write $B \ianc B'$.
    \item Two batches $B_1$ and $B_2$ are said to be \textbf{conflicting} if neither $B_1 \anc B_2$ nor $B_2 \anc B_1$.
    \item The ancestor relation is the transitive closure of the parent relation, as is notated as $B \anc B'$. We also say $B'$ \textit{extends} $B$. As a convention, the relation is reflexive, ie, $B \anc B$ for all batches $B$.
    \item An \qc formed of $N - u$ signed votes in $v$ on the batch $b$ is notated as $qc^v_b$.
    \item A fast \qc formed of $N$ signed votes is denoted by $fqc^v_b$.
    \item To explicitly state that a batch $B$ embeds the \qc $qc^v_b$ (ie, $B.auditQC_{anc} = qc^v_b$), we shall write the batch $B$ as $(B \cat qc^v_b)$.
    The view of a batch and its embedded \qc is the same ($B.v = v$).
    \item We say a batch $B'$ extends an \qc $qc^v_b$, iff $(B \cat qc^v_b) \anc B'$ where $B.auditQC_{anc} = qc^v_b$.
\end{itemize}

\par \textbf{Commit and Audit Conditions} \sys{} is a chained protocol. A batch at sequence number $n$ is cryptographically linked to all previous batches with sequence numbers $n-1$ or lower. It is associated with a unique history. Hence, committing/auditing a batch $B$ indirectly commits/audits all its ancestors.

A batch $B$ is audited by slow audit logic if the following holds:

$$ B \anc B_1 \anc (B_2 \cat qc_{B_1}^v) \anc B_3 \anc (B_4 \cat qc_{B_3}^v) $$

Note that, due to hash-chaining both $B$ and $B_1$ are audited.
As stated above, $B_1.v ~=~ B_2.v ~=~ B_3.v ~=~ B_4.v ~=~ v$, the view in which both the \qcs were proposed.

A batch $B$ is audited through the fast audit logic if the following holds:

$$ B \anc B_1 \anc (B_2 \cat fqc_{B_1}^v) $$

The two rules above constitute the most general conditions for auditing.
If the batch $B$ is signed and the auditing QC(s) form directly on $B$, $B$ is considered audited without the need for a distinct batch $B_1$. (Note that the $\anc$ relation is reflexive, so notationally $B_1 = B$ here.)
However, $B$ is indirectly audited through a distinct extending batch $B_1$ if: \one $B$ is not signed or \two $B$ is proposed in a lower view than $B_1$.

\subsection{Safety of slow audit}
\label{subsec:audit-safety-without-fast-path}

We structure the proof as a sequence of lemmas as follows:

\begin{enumerate}[label={[\hyperref[lemma:\arabic*]{L\ref*{lemma:\arabic*}}]}]
    \item In the same view, a leader cannot create two \qcs for two conflicting batches.
    \item Except for the \textsc{NewView} message at the start of a view $v$, no \qc can be formed on a batch $B$ proposed in view $v$, if $B$ does not extend $qc_{NV^v}^v$.
    \item $qc_{NV^v}^v$ can't be formed if $NV^v$ extends a branch that disobeys the branch choice rules. For all branches that qualify the branch choice rules, $qc_{NV^v}^v$ extends at most one such branch.
    \item If $v$ is the first view where a correct node considers a batch $B$ audited, branches collected in view $v + 1$ from any subset of $(N - u)$ replicas will have at least one that includes $qc_{B_1}^v$ where $B \anc B_1$.
    \item If $v$ is the first view where a correct node considers a batch $B$ audited, view $v$ must be stable.
    \item If $v$ is the first view where a correct node considers a batch $B$ audited, branches collected in any greater view from any subset of $(N - u)$ replicas will have at least one branch with the highest \qc extending $qc_B^v$, and that branch will be selected by the branch choice rules.
    \item If two batches $B_1, B_2$ are considered audited by correct nodes, either $B_1 \anc B_2$ or $B_2 \anc B_1$ (ie, they are non-conflicting).
\end{enumerate}

\begin{lemma}
\label{lemma:1}
In the same view, a leader cannot create two \qcs for two conflicting batches.
\end{lemma}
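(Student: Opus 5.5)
We argue by contradiction with a standard quorum-intersection argument, using the PFT node bound of \Cref{theorem:node-count}. Suppose that in some view $v$ two audit QCs $qc^v_{b}$ and $qc^v_{b'}$ are formed, where $b$ and $b'$ conflict. Each QC consists of $N-u$ signed \textsc{Resp} votes from distinct replicas, all cast in view $v$. Here a vote that certifies $b$ may be a vote on $b$ itself or on a descendant of $b$; by hash chaining, such a vote is a signed endorsement of a branch containing $b$. The same holds for $b'$.

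\textbf{Step 1: the quorums share a correct replica.} Let $Q$ and $Q'$ be the two voter sets. Then
\[ |Q\cap Q'| \geq 2(N-u)-N = N-2u \geq f_{safe}+1, \]
where the last inequality uses $N \geq 2u+f_{safe}+1$. At most $f_{safe}$ replicas are Byzantine, since we are within $\pi_{safe}$ platform compromises. Signatures cannot be forged, so signed votes attributed to correct replicas are genuine. Hence some correct replica $R$ signed a vote in view $v$ for a branch extending $b$, and another vote in view $v$ for a branch extending $b'$.

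\textbf{Step 2: a correct replica never endorses conflicting branches within one view.} This step carries the main burden, and we argue it from the steady-state rules. A correct $R$ votes at most once per sequence number per view. It votes for a proposal $B$ only if $B$ extends its current $b_{curr}$: it fetches missing ancestors, and if $B$ conflicts with $b_{curr}$ it calls a view change and stops participating in $v$. After voting it sets $b_{curr}\gets B$. Within a view, therefore, the sequence of batches $R$ votes for forms a single chain, each extending the previous one. The \textsc{New-View} handling is the only place rollback occurs, and it happens once at the start of $v$, before any other vote in $v$. So all batches $R$ votes for in view $v$ lie on one branch.

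Now take the two votes of $R$, one on $B_1 \succeq b$ and one on $B_2 \succeq b'$. Both lie on this single branch, so one extends the other; say $B_1 \anc B_2$. Then $b$ and $b'$ are both ancestors of $B_2$. Two ancestors of the same batch are comparable under $\anc$, because the parent pointers form a chain. This contradicts the assumption that $b$ and $b'$ conflict.

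The main obstacle is making Step 2 precise. We must check that every vote path a correct replica can take, including the \textsc{New-View} vote and the retrieval of intermediate batches, preserves the invariant that ``$b_{curr}$ only grows by extension within a view.'' We must also check that the timing of ignoring messages after a view-change request cannot allow a second, conflicting vote in the same view. We would state this invariant as an auxiliary claim and verify it case by case against the handlers for \textsc{Append-Entry} and \textsc{New-View}. The same argument covers fast QCs, since $N \geq N-u$.
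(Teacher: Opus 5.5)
Your proof is correct and follows the paper's own argument. Two audit quorums of size $N-u$ overlap in at least $N-2u \geq f_{safe}+1$ replicas, so some correct replica voted for both conflicting branches in view $v$; this is impossible because a correct replica never overwrites its local branch within a view. Your explicit handling of votes cast on descendants, which reduces to two ancestors of one batch being comparable, is a slightly more careful version of the step the paper states in a single line.
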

\begin{proof}
We prove this by contradiction.
Assume two \qcs are formed for conflicting batches $B_1$ and $B_2$.
The quorums that formed these \qcs intersect in at least $(f_{safe} + 1)$ nodes.
So, at least one correct node voted for both $B_1$ and $B_2$ in the same view.
But neither $B_1 \anc B_2$ nor $B_2 \anc B_1$, which means the correct node must have overwritten its local branch within the same view.
This is disallowed by the checks done before an \textsc{Append-Entry} is processed: within a view, a correct replica never overwrites its local branch.
So, it can't overwrite its local branch, thereby reaching a contradiction.
\end{proof}

\begin{lemma}
\label{lemma:2}
Except for the \textsc{NewView} message at the start of a view $v$,
no \qc can be formed on a batch $B$ proposed in view $v$, if $B$ does not extend $qc_{NV^v}^v$.
\end{lemma}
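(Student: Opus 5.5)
The argument mirrors the proof of Lemma~\ref{lemma:1}: it is a quorum-intersection argument combined with the rule that correct replicas never overwrite their local branch within a view. It proceeds by contradiction. Suppose some batch $B$ proposed in view $v$ (other than $NV^v$ itself) obtains an \qc $qc^v_B$, yet $B$ does not extend $qc^v_{NV^v}$, i.e.\ $(NV^v \cat qc^v_{NV^v})$ is not an ancestor of $B$.

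\textbf{Step 1: the view must be stabilized.} By the protocol, the leader of view $v$ may not propose any batch other than the \textsc{NewView} batch $S=NV^v$ until it has collected $N-u$ signed votes on $S$. If $qc^v_{NV^v}$ never forms, then no correct replica ever receives an \textsc{Append-Entry} in view $v$ beyond $S$ from a correct leader. A Byzantine leader could still send one, so this step is not really needed; we rely instead on the replicas' local state, as in Step~2.

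\textbf{Step 2: intersection.} The quorum of $qc^v_B$ and the quorum of $qc^v_{NV^v}$ each contain $N-u$ signed votes in view $v$. Since $N \ge 2u+f_{safe}+1$, they intersect in at least $f_{safe}+1$ nodes, hence in at least one correct replica $R$.

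\textbf{Step 3: $R$'s local branch in view $v$.} At the start of view $v$, $R$ processed the \textsc{NewView} and set $b_{curr}\gets S$. Within the view, $R$ only ever appends to $b_{curr}$, and it calls a view change rather than overwrite on conflict. Thus every batch $R$ votes for after $S$ must extend $S$, and $R$ cannot have voted for $B$ before $S$, because the \textsc{NewView} is the first message $R$ accepts in $v$.

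\textbf{Step 4: the embedded QC.} It remains to show that $B$ extends specifically the batch embedding $qc^v_{NV^v}$, not merely $S$. Here we use that leaders only place freshly formed QCs in new batches, and that $R$ updates $highAuditQC$ monotonically. The obstacle is that a malicious leader could propose children of $S$ that embed an older auditQC, never including $qc^v_{NV^v}$. To close this gap, I would first try to argue that the replica-side check requires the first non-\textsc{NewView} batch in view $v$ to carry an auditQC of view $v$ on $S$, i.e.\ its $auditQC_{anc}$ has view $v$. If that check is not explicit, I would weaken the lemma's reading: since $B$ extends $S$ and the embedded QC in any view-$v$ batch has the batch's view, the only view-$v$ auditQC that can certify $S$ is $qc^v_{NV^v}$. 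By Lemma~\ref{lemma:1}, no conflicting view-$v$ QC exists, so extending $S$ with a view-$v$ QC is equivalent to extending $qc^v_{NV^v}$. Deriving the contradiction from $R$'s vote is then immediate.
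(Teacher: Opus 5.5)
There is a genuine gap, and it sits where you hedge. Your Step~2 intersects the quorum of $qc^v_B$ with the quorum of $qc^v_{NV^v}$, which presupposes that the stabilizing auditQC exists. The main content of the lemma, however, is the case where it does not exist. If view $v$ is never stabilized, the lemma says that no batch of view $v$ other than $NV^v$ can obtain an auditQC at all. That is exactly how it is used later: in Lemmas~\ref{lemma:5} and~\ref{lemma:6}, and in the corollary that a Byzantine leader cannot audit without first stabilizing. In that case there is no second quorum to intersect with, and Steps~2--3 say nothing.

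Even when $qc^v_{NV^v}$ does exist, Steps~2--3 only show that $B$ extends $S=NV^v$. The lemma instead requires $B$ to extend the batch that embeds $qc^v_{NV^v}$. Nothing you establish prevents correct replicas from voting for a child of $S$ that embeds no view-$v$ auditQC, for example one sent by a Byzantine leader before stabilization. So your Step~4 fallback of ``weakening the reading'' proves a different, strictly weaker statement. That weaker statement no longer implies that $qc^v_{NV^v}$ exists, which is precisely what Lemma~\ref{lemma:5} needs.

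The missing ingredient is a replica-side voting rule, which the paper takes directly from the protocol logic: apart from $NV^v$ itself, a correct replica never votes for a batch of view $v$ unless it extends $qc^v_{NV^v}$. Your option~(a) is essentially this rule. Once you commit to it, the proof needs no quorum intersection:
\begin{itemize}
\item an auditQC on $B$ contains $N-u$ signed votes, at most $f_{safe}$ of them from compromised nodes;
\item since $N\ge 2u+f_{safe}+1$, we have $N-u-f_{safe}\ge u+1\ge 1$, so some correct replica voted for $B$;
\item that vote contradicts the rule.
\end{itemize}
This also handles the unstabilized case for free: if $qc^v_{NV^v}$ never forms, no batch extends it, so no correct replica votes beyond $S$. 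It further removes the need for your Step~3 assumption that $NV^v$ is the first message a replica accepts in view $v$. That assumption is not automatic anyway, because replicas may enter view $v$ upon $N-u$ \textsc{View-Change} messages, before any \textsc{New-View} arrives.
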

\begin{proof}
We prove this by contradiction.
Assume that there exists such a batch $B$ proposed in view $v$ that does not extend from $qc_{NV^v}^v$.
If an \qc is formed on this batch, at least $N - u$ signed votes must have been collected on batch $B$.
Out of which, at most $f_{safe}$ votes can be from compromised nodes. So, at least $N - u - f_{safe}$ votes must be from \textit{correct} nodes.
Since $N \geq 2u + f_{safe} + 1 \implies N - u - f_{safe} \geq u + 1 \geq 1$, there must have been at least $1$ correct node that voted for such a batch $B$.
However, by protocol logic, except for the \textsc{NewView} message at the start of a view $v$, a correct node never votes for a batch proposed in view $v$, if it doesn't extend $qc_{NV^v}^v$. Thereby, we arrive at a contradiction.
\end{proof}

A corollary to this lemma is that no Byzantine leader can make progress in auditing without stabilizing its view first.

\begin{lemma}
\label{lemma:3}
$qc_{NV^v}^v$ can't be formed if $NV^v$ extends a branch that disobeys the branch choice rules. For all branches that qualify the branch choice rules, $qc_{NV^v}^v$ extends at most one such branch.
\end{lemma}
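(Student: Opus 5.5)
Both parts of Lemma~\ref{lemma:3} follow the quorum-intersection pattern of Lemmas~\ref{lemma:1} and~\ref{lemma:2}: first find a correct replica that voted for $NV^v$, then use the checks a correct replica performs when it processes a \textsc{New-View} message.

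\textbf{Part 1.} Suppose $qc_{NV^v}^v$ forms even though $NV^v$ extends a branch that violates the branch choice rules with respect to the attached set $\mathcal{V}$. The \qc contains $N-u$ signed votes, and at most $f_{safe}$ of them come from compromised nodes. Since $N \geq 2u + f_{safe} + 1$, at least $N-u-f_{safe} \geq u+1 \geq 1$ votes come from correct replicas, exactly as in Lemma~\ref{lemma:2}. A correct replica, before voting on a \textsc{New-View}, checks two things. It checks that $\mathcal{V}$ contains $N-u$ distinct, validly signed \textsc{View-Change} messages for view $v-1$. It also checks that $S$ extends the branch obtained by applying Rules 1--4 to $\mathcal{V}$. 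The view-change messages are signed, so a malicious leader cannot fabricate $\mathcal{V}$. The rules are a deterministic function of $\mathcal{V}$. Hence the correct voter would have rejected $NV^v$, which is a contradiction.

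\textbf{Part 2.} Here we fix one admissible $\mathcal{V}$. If Rules 3 and 4 leave a unique candidate, then every correct voter computes the same chosen branch, so $NV^v$ can extend only that one. The harder case is a tie, which the paper permits: a leader ``can select any of these conflicting branches''. In that case the leader could attach different, equally valid $\mathcal{V}$ sets, or break ties differently, and send two distinct \textsc{New-View} batches $NV^v$ and $NV'^v$ extending different admissible branches.

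To rule this out, I will argue that $qc_{NV^v}^v$ is a \qc in view $v$ on $NV^v$. If some $qc_{NV'^v}^v$ also formed on a \textsc{New-View} batch extending a different (conflicting) admissible branch, the two batches conflict. Lemma~\ref{lemma:1} then applies directly, since its proof uses only the fact that a correct replica votes for at most one \textsc{New-View} per view and never overwrites its branch within a view. Thus at most one \textsc{New-View} \qc exists per view, and the single batch $NV^v$ extends at most one of two mutually conflicting admissible branches.

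\textbf{Main obstacle.} The delicate point is what ``extends at most one such branch'' means when admissible branches are nested rather than conflicting. If one admissible branch is a prefix of another, extending the longer one also extends the shorter one. I expect to resolve this by reading the claim as being about pairwise conflicting candidates, or about the maximal branch $B_{chosen}$. With that reading, the chain structure gives the result immediately, because a single batch cannot extend two conflicting batches.
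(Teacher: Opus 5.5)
Your proposal is correct and follows the paper's own proof. The first part rests on a correct replica refusing to vote for a \textsc{New-View} batch that does not extend the branch selected by the rules applied to the attached view-change set; the second rests on the fact that, even if a Byzantine leader equivocates with two qualifying \textsc{New-View} batches, at most one of them can gather an auditQC in view $v$, and your count of $N-u-f_{safe}\geq u+1$ correct voters and your appeal to Lemma~\ref{lemma:1} simply make explicit what the paper's terse argument leaves implicit.
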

\begin{proof}
A correct node doesn't vote for a $NV^v$ batch unless it extends a branch deterministically chosen by the branch Choice Rule applied to the $(N - u)$ branches embedded in $NV^v$.

If two branches qualify the branch choice rule (especially if the leader has collected more than $N-u$ \textsc{ViewChange} messages, a Byzantine leader can equivocate and create two qualifying $NV_1^v$ and $NV_2^v$, but a \qc can be formed in at most one of $NV_1^v$ and $NV_2^v$.
The new leader cannot get a \qc on any batch that extends the other equivocated $NV^v$.
\end{proof}

The rest of the lemmas therefore would prove two things:
(1) The $(N - u)$ branches embedded in an $NV^v$ always contain one branch that has all those batches that may be considered audited by a correct node, in views $\leq (v - 1)$.
(2) The branch choice rules will always choose that branch.

\begin{lemma}
\label{lemma:4}
If $v$ is the first view where a correct node considers a batch $B$ audited, branches collected in view $v + 1$ from any subset of $(N - u)$ replicas will have at least one that includes $qc_{B_1}^v$ where $B \anc B_1$.
\end{lemma}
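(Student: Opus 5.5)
We argue by quorum intersection, splitting on which audit rule the correct node used in view $v$. In both cases the aim is a set of at least $u+1$ correct replicas that hold, on their local branch in view $v$, a batch embedding $qc^v_{B_1}$ for some $B_1$ with $B \anc B_1$. Any $(N-u)$-subset of \textsc{View-Change} senders in view $v+1$ misses at most $u$ of them, so at least one such replica must report that branch.

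\emph{Slow audit.} The correct node observed $B \anc B_1 \anc (B_2 \cat qc^v_{B_1}) \anc B_3 \anc (B_4 \cat qc^v_{B_3})$, with all batches in view $v$.
\begin{itemize}
\item The certificate $qc^v_{B_3}$ carries $N-u$ signed votes on $B_3$, or on a descendant of $B_3$ counted toward it through $pendingQCs$. At most $f_{safe}$ of these come from compromised nodes.
\item Hence at least $N-u-f_{safe} \geq u+1$ correct replicas voted, in view $v$, for a batch extending $B_3$. This bound follows from $N \geq 2u+f_{safe}+1$, as in Lemma~\ref{lemma:2}.
\item By the voting step, such a correct replica first appended every ancestor of the voted batch to $b_{curr}$. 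That includes $(B_2 \cat qc^v_{B_1})$, since $B_2 \anc B_3$.
\item Within view $v$, a correct replica never overwrites its local branch (the argument of Lemma~\ref{lemma:1}). Its $b_{curr}$ only grows until it sends \textsc{View-Change} and stops processing view-$v$ messages.
\item Therefore the branch it reports in its \textsc{View-Change} for view $v+1$ still contains $qc^v_{B_1}$.
\end{itemize}
There are at least $u+1$ such replicas and at most $u$ of them are excluded from any $(N-u)$-subset, so at least one reporting branch contains $qc^v_{B_1}$.

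\emph{Fast audit.} The correct node observed $B \anc B_1 \anc (B_2 \cat fqc^v_{B_1})$.
\begin{itemize}
\item Here $fqc^v_{B_1}$ has $N$ signed votes, so every correct replica voted for a batch extending $B_1$ in view $v$. At least $N-f_{safe} \geq u+1$ replicas are correct.
\item The difficulty is that the statement asks for a reported branch containing the certificate itself, not only $B_1$. Correct voters on $B_1$ need not have seen $B_2$, which is the batch that carries $fqc^v_{B_1}$.
\item A reported branch that merely extends $B_1$ is not enough. The branch selection rules act on the highest embedded auditQC (Rule~1), and a fast-path batch is instead protected by Rule~2, which counts branches rather than certificates.
\item My first attempt uses the fact that $fqc^v_{B_1}$, having $N \geq N-u$ signed votes, already qualifies as an ordinary $qc^v_{B_1}$. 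I would then use the same-view monotonicity of correct branches to argue that some correct voter's reported branch extends $B_1$ in view $v$.
\item If that yields only the weaker conclusion "some reported branch extends $B_1$", I would state the fast-path case of this lemma in that weaker form: at least $N-u-f_{safe}$ collected branches extend $B_1$, the count Rule~2 looks for. The full-strength case is then deferred to the fast-audit safety section (\S\ref{subsec:audit-safety-with-fast-path}).
\end{itemize}

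\emph{Main obstacle.} The delicate step is justifying that a correct replica's reported $b_{curr}$ still contains the relevant certificate when it sends \textsc{View-Change}. This rests on two facts: rollbacks happen only while processing \textsc{New-View}, and a replica in view $v$ ignores view-$v$ messages once it has sent \textsc{View-Change}. Together these imply the branch at \textsc{View-Change} time extends everything the replica voted for in $v$. The fast-path asymmetry above is the secondary concern.
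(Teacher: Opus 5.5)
Your slow-audit argument is correct and is essentially the paper's proof. The paper intersects the $N-u$ view-change set with the $N-u$ quorum that voted for $B_3$, obtaining $f_{safe}+1$ common nodes and hence one correct one; this is the same arithmetic as your $N-u-f_{safe}\geq u+1$ correct voters of whom at most $u$ are excluded, and your remark that a correct replica's branch only grows within view $v$ makes explicit a step the paper leaves implicit. The fast-path case is outside this lemma's scope, since the lemma sits in the slow-audit subsection and the paper handles fast audits separately via Rule~2 and Lemma~\ref{lemma:10}. Your suspicion there is nonetheless right: $fqc^v_{B_1}$ may exist only in $B_2$ at a leader that crashes before disseminating it, and your weakened count of $N-u-f_{safe}$ branches extending $B_1$ is exactly what that later lemma uses.
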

\begin{proof}
If a correct node considers batch $B$ audited in view $v$, it must have seen:

$$ B \anc B_1 \anc (B_2 \cat qc_{B_1}^v) \anc B_3 \anc (B_4 \cat qc_{B_3}^v) $$
A quorum of $(N - u)$ replicas voted for $B_3$ that extends $qc_{B_1}^v$ in view $v$.
Any subset of $(N - u)$ replicas intersects with this quorum in at least $(f_{safe} + 1)$ nodes.
So at least one correct node's branch will have $qc_{B_1}^v$.
\end{proof}

\begin{lemma}
\label{lemma:5}
If $v$ is the first view where a correct node considers a batch $B$ audited, view $v$ must be stable.
\end{lemma}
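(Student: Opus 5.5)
The plan is to unfold the definition of a stable view and obtain stability as a consequence of the audit itself. A view $v$ is \emph{stabilized} once an \qc has formed on the \textsc{NewView} batch $NV^v$, that is, once $qc_{NV^v}^v$ exists. If a correct node considers $B$ audited in view $v$, then by the slow-audit condition it has observed
$$ B \anc B_1 \anc (B_2 \cat qc_{B_1}^v) \anc B_3 \anc (B_4 \cat qc_{B_3}^v), $$
with every batch and \qc belonging to view $v$. The fast-audit case works the same way with $fqc_{B_1}^v$; an $N$-vote certificate contains an $(N-u)$-vote one, so the argument below covers it.

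The first step is to take the audit \qc $qc_{B_1}^v$, which was formed in view $v$ on a batch $B_1$ proposed in view $v$. Either $B_1$ is itself $NV^v$ or it is not.
\begin{itemize}
\item If $B_1 = NV^v$, then $qc_{NV^v}^v$ exists and the view is stable by definition.
\item Otherwise, \Cref{lemma:2} says no \qc can form on a non-\textsc{NewView} batch of view $v$ unless that batch extends $qc_{NV^v}^v$. So $B_1$ extends $(S \cat qc_{NV^v}^v)$ for some $S$, and in particular $qc_{NV^v}^v$ was formed.
\end{itemize}
To make the dependence on \Cref{lemma:2} concrete, I would restate its core: $N-u$ signed votes include at least $N-u-f_{safe}\ge u+1\ge 1$ votes from correct replicas. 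A correct replica only votes for a non-\textsc{NewView} batch of view $v$ after it has processed $NV^v$ and seen the leader's stabilizing \qc.

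A subtle point is the definition of stability. The paper's protocol text says the leader "considers the view stabilized" after gathering $N-u$ \textsc{Resp} messages for $S$. A Byzantine leader might skip this step, so stability has to be defined as a global fact, namely that $qc_{NV^v}^v$ exists, rather than as the leader's local belief. \Cref{lemma:2} is exactly what forces that fact even when the leader is malicious. I would also use \Cref{lemma:3} to note that this $qc_{NV^v}^v$ is unique and extends a branch chosen by the branch choice rules. That property is needed later for \Cref{lemma:6}, though stability alone does not require it.

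The main obstacle I expect is the boundary case in which $B$, or $B_1$, is proposed in an earlier view than $v$ while its audit \qcs form in $v$. There I would apply the argument to the view-$v$ batch $B_1$ that carries the \qc (the paper notes that indirect auditing goes through such a batch), not to $B$ itself. The risk is that $B_1$ in that case must be $NV^v$ or a descendant of it, and the proof should check that the protocol's correct-replica voting rule covers this case.
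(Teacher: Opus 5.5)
Your argument is correct and takes the same route as the paper: both apply \Cref{lemma:2} to a view-$v$ batch carrying an audit QC, which forces $qc_{NV^v}^v$ to exist. You split on whether $B_1 = NV^v$, whereas the paper splits on whether $B$ was proposed in view $v$; yours is slightly cleaner, since the audit condition already gives $B_1.v = v$. So the boundary case you flag at the end cannot arise, and the paper's appeal to branch selection for an older $B$ is unnecessary.
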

\begin{proof}
Recall that a view $v$ is stable if $NV^v \ianc (B \cat qc_{NV^v}^v)$ is formed. A correct considers a view $v$ stable once it receives a branch containing $NV^v \ianc (B \cat qc_{NV^v}^v)$.

To prove this lemma, consider two cases: \one $B$ is proposed in view $v$ itself, \two $B$ is proposed in some view $\leq v - 1$.

For case \one, $B$ can't get a \qc if it doesn't extend $qc_{NV^v}^v$ by \Cref{lemma:2}. But the existence of $qc_{NV^v}^v$ convinces a correct node that the view is stable.

For case \two, note that $B$ must have been chosen by the branch selection rules. So, $B \anc NV^v$ by construction.
If a correct replica moved to view $v$, it is not going to vote on any proposal in lower views and it is not going to vote for any batch proposed in view $v$ if it doesn't extend $NV^v$.
Since $v$ is the first view in which $B$ was considered audited, the \qcs that commit $B$ must be in view $v$.
This implies that there are \qcs extending $NV^v$.
The first such \qc must be $qc_{NV^v}^v$ (again by \Cref{lemma:2}).

\end{proof}

\begin{lemma}
\label{lemma:6}
If $v$ is the first view in which a correct node considers a batch $B$ audited, branches collected in any greater view from any subset of $(N - u)$ replicas will have at least one with the highest \qc extending $qc_B^v$, and therefore, will be selected by the branch choice rules.
\end{lemma}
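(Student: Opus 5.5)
I will prove the statement by induction on the view $w > v$ in which branches are collected. The invariant I carry is: for every view $w > v$, (i) any $N-u$ \textsc{View-Change} messages collected for $w$ contain a branch whose highest \qc extends $qc_B^v$, or more precisely some $qc_{B_1}^{v'}$ with $v' \geq v$ and $B \anc B_1$; (ii) that branch survives every filtering step of the branch choice rules; and (iii) no \qc in any view $v'' \geq v$ is formed on a batch conflicting with $B$. The induction runs over these clauses jointly.

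\textbf{Base case $w = v+1$.} By \Cref{lemma:4}, any $N-u$ collected branches include one containing $qc_{B_1}^v$ with $B \anc B_1$. For clause (iii) within view $v$ itself, I use \Cref{lemma:1}: since $qc_{B_1}^v$ and $qc_{B_3}^v$ exist, no \qc in $v$ can sit on a batch conflicting with $B_1$, and hence none on a batch conflicting with $B$. So every branch whose highest \qc lies in view $v$ carries a \qc on a batch that does not conflict with $B$. A branch whose \qc is on an ancestor of $B$ is dominated, so I need to argue that the tie-break picks one extending $B$. Since \Cref{lemma:5} makes view $v$ stable, I would argue that all \qcs in $v$ after stabilization lie on a single chain extending $NV^v$. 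Rule 1 therefore keeps only branches whose highest \qc is in view $v$; among them the \qcs are totally ordered. I would refine Rule 1 to compare \qcs by (view, height), so that the selected branch extends $B_1$. Rule 2 must not wrongly select a conflicting branch. I handle this by noting that a fast-path candidate appearing $N-(u+f_{safe})$ times must include a correct node, which by \Cref{lemma:1} holds a non-conflicting batch.

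\textbf{Inductive step.} Assume the invariant for all views in $(v, w)$. By \Cref{lemma:3}, any $qc_{NV^{w'}}^{w'}$ for $v < w' < w$ extends the branch chosen by the rules, which by hypothesis extends $B$. By \Cref{lemma:2}, every \qc in $w'$ then extends $NV^{w'}$ and so extends $B$; this gives clause (iii) for $w'$. Now consider the $N-u$ messages in view $w$. The correct node identified by \Cref{lemma:4} (from the intersection with the quorum on $B_3$) never rolls back past $qc_{B_1}^v$. It only adopts branches from \textsc{New-View} messages that extend $B$ and only ever accepts higher \qcs. Hence its branch still has a highest \qc extending $B$. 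Any branch with a higher-view \qc carries, by clause (iii), a \qc on a batch extending $B$. So the branch retained by Rule 1 extends $B$, and Rules 3 and 4 only choose among branches that already extend $B$.

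\textbf{Main obstacle.} The hardest step is Rule 2, the fast-path rule. A Byzantine-manufactured set of $N-(u+f_{safe})$ matching votes on a branch conflicting with $B$ could override Rule 1. I would first try to show this set must contain a correct vote on a conflicting batch in a view $\geq v$, which contradicts clause (iii). If that is too weak because the vote lies in a view below $v$, I would instead argue that Rule 2 is applied only within the Rule 1-filtered set, so older conflicting branches are already excluded.
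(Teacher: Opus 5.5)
Your skeleton matches the paper's. You induct over the views after $v$, use \Cref{lemma:2} and \Cref{lemma:3} to show that every audit QC formed after $v$ extends $qc_{B_1}^v$, and intersect each view-change quorum with the $N-u$ voters of $B_3$ to get a correct witness still carrying $qc_{B_1}^v$.

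The step that fails is your handling of Rule~2. \Cref{lemma:1} constrains only batches that obtained an audit QC; it says nothing about which batches correct nodes \emph{hold}. Take $B_1$ proposed in $v$. After $qc_{NV^v}^v$ forms, a Byzantine leader of $v$ can send $u$ correct nodes a chain conflicting with $B_1$. They vote for it, since they never voted at those sequence numbers in $v$. Their branches then have highest audit QC $qc_{NV^v}^v$, in view $v$, and no QC on the conflicting batches. A view-change quorum of these $u$ nodes, the $f_{safe}$ Byzantine ones and $N-2u-f_{safe}\geq 1$ correct witnesses is legitimate. In it the conflicting batches appear $u+f_{safe}$ times. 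For $N=2u+f_{safe}+1$ this already meets the Rule~2 threshold $N-(u+f_{safe})=u+1$, and Rule~4 favours that chain anyway if the leader made it longer. So a Rule~2 candidate need not be non-conflicting. Your first fallback fails for the same reason: these are correct \emph{votes} without a QC, so they do not contradict clause (iii).

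Your second fallback closes the proof, together with your remark on Rule~1. That remark, however, must be the \emph{reading} of Rule~1, not a modification of the protocol: Rule~1 keeps only the branches carrying the highest audit QC, ordered by view and then by position of the certified batch. The paper's proof relies on exactly this, without saying so, when it concludes that ``the branch with the highest QC view extends $qc_B^v$''. With a purely view-based comparison, the scenario above breaks the lemma.

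Under this reading the argument is short:
\begin{itemize}
\item If the quorum contains no QC from a view above $v$, \Cref{lemma:1} puts all view-$v$ QCs on one chain through $B_1$. The witness's branch carries $qc_{B_1}^v$, so the maximal QC certifies $B_1$ or a descendant. Hence every Rule~1 survivor extends $B$.
\item Otherwise, your induction gives the same conclusion.
\end{itemize}
Rules~2--4 then only filter branches that already extend $B$, so they need no separate argument.

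Two smaller fixes. Drop clause (ii): the witness's own branch need not survive once higher QCs exist. Also note that clause (iii) as stated (``does not conflict'') is weaker than the ``extends'' you use in the inductive step. For views above $v$ you get ``extends'' from \Cref{lemma:2}. In view $v$ itself, it is the height comparison that upgrades one to the other.
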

\begin{proof}
If a correct node considers a batch $B$ audited, it must have at least seen $N - u$ votes on $qc_B^v$.
So, at least $N - u - f_{safe}$ branches during the view change to the next view will extend from $qc_B^v$.
The branch choice rules always prefer a branch with the highest \qc view.
We shall prove, by induction, that all \qcs in views $> v$ extend $qc_B^v$.

Clearly, if all views $> v$ are unstable, no \qc with higher view exists and thus the branch with $qc_B^v$ is always selected. (\Cref{lemma:1} guarantees that there will not be a \qc conflicting with $qc_B^v$ in the same view $v$.)

Recall that if a \qc is formed in view $v' > v$, view $v'$ must be stable (from \Cref{lemma:2}).
For the base case of induction, assume that $v'$ is the first stable view after $v$.
During the view change to $v'$, the leader in view $v'$ will observe at least $N - u - f_{safe}$ branches extending from $qc_B^v$.
While proposing $NV^{v'}$, there isn't any \qc with a higher view than $v$, so the leader must choose to extend from $qc_B^v$ as well.
Thus, the base case holds.
Note that, as $v'$ is stable, $(B_1 \cat qc_B^v) \anc NV^{v'} \ianc (B_2 \cat qc_{NV^{v'}}^{v'})$ holds.
In other words, any \qc proposed in view $v'$ extends $qc_B^v$.
However, there is no guarantee that $qc_{NV^{v'}}^{v'}$ will be seen in \textsc{ViewChange} messages for higher views: the leader in view $v'$ may simply crash before disseminating it.

For induction, assume $\langle v, v_1, v_2, \ldots, v_m, v'\rangle$ is an increasing sequence of stable views, and our proposition holds for all views in this sequence up to $v_m$.
During the view change to $v'$, if the leader observes any branch with a \qc with view $> v$, the induction assumption compels the \qc to extend $qc_B^v$.
If no such \qc exists (i.e., all leaders in view $> v$ crashed/timed out before disseminating their stabilizing \qcs), by quorum intersection, we should still see at least $N - u - f_{safe}$ branches extending from $qc_B^v$.
In both cases, the branch with the highest \qc view extends $qc_B^v$, and therefore gets selected.

\end{proof}

\begin{lemma}
\label{lemma:7}
If two batches $B_1, B_2$ are considered audited by correct nodes, either $B_1 \anc B_2$ or $B_2 \anc B_1$ (ie, they are non-conflicting).
\end{lemma}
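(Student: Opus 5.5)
We argue by contradiction, combining \Cref{lemma:1} for the same-view case and \Cref{lemma:6} for the cross-view case. Suppose correct nodes consider $B_1$ and $B_2$ audited with $B_1$ and $B_2$ conflicting. Let $v_1$ be the first view in which some correct node considers $B_1$ audited, and $v_2$ the first such view for $B_2$. Without loss of generality $v_1 \leq v_2$. We split into the cases $v_1 = v_2$ and $v_1 < v_2$.

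\textbf{Case $v_1 = v_2 = v$.} By the slow audit condition, there are batches $B_1 \anc B_1'$ and $B_2 \anc B_2'$ with \qcs $qc^v_{B_1'}$ and $qc^v_{B_2'}$ both formed in view $v$. Since $B_1$ and $B_2$ conflict and the ancestor relation is tree-like (each batch has a unique parent chain), $B_1'$ and $B_2'$ must also conflict. If instead $B_1' \anc B_2'$, then both $B_1$ and $B_2$ would be ancestors of $B_2'$. They would then lie on the single chain of $B_2'$ and be comparable. This contradicts \Cref{lemma:1}.

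\textbf{Case $v_1 < v_2$.} Since $B_2$ is first audited in $v_2$, there is a \qc $qc^{v_2}_{B_2'}$ formed in view $v_2$ with $B_2 \anc B_2'$. By \Cref{lemma:5}, view $v_2$ is stable. By \Cref{lemma:2}, $B_2'$ extends $qc^{v_2}_{NV^{v_2}}$, hence extends $NV^{v_2}$. By \Cref{lemma:3}, $NV^{v_2}$ extends the branch chosen by the branch choice rules. By \Cref{lemma:6}, that branch extends $qc^{v_1}_{B_1'}$ for some $B_1 \anc B_1'$.

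This yields two chains through $B_2'$: one gives $B_1 \anc B_1' \anc NV^{v_2} \anc B_2'$, and the other gives $B_2 \anc B_2'$. So $B_1$ and $B_2$ are both ancestors of $B_2'$ and lie on one chain, making them comparable. This contradicts the assumption that they conflict.

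The main obstacle is the cross-view case, specifically justifying that the \qc witnessing $B_2$'s audit really extends the chosen $NV^{v_2}$. This requires $B_2'$ to be proposed in $v_2$, or, if it was proposed earlier, that the new-view branch nonetheless contains it. I would first handle this as in case \two of \Cref{lemma:5}: if $B_2'$ was proposed in a view below $v_2$, then $B_2' \anc NV^{v_2}$ because it was chosen by branch selection. Either way $B_2'$ and $NV^{v_2}$ lie on one chain, and the chosen branch contains $B_1$. I would also need \Cref{lemma:6} to apply for all $v > v_1$ and not just $v_1 + 1$, which its inductive statement provides. Fast-path audits are deferred to \S\ref{subsec:audit-safety-with-fast-path}, so here I would treat only slow audits.
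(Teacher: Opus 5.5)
Your proof is correct and follows the paper's argument. The same-view case reduces to Lemma~\ref{lemma:1}, and the cross-view case uses Lemma~\ref{lemma:6} to show that the branch stabilized by $NV^{v_2}$ contains the audit QC for $B_1$, making $B_1$ and $B_2$ comparable. Your explicit appeal to Lemmas~\ref{lemma:5}, \ref{lemma:2} and \ref{lemma:3}, and concluding via a common descendant rather than asserting that $B_2$ extends $NV^{v+v'}$, make rigorous a step the paper's proof compresses.
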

\begin{proof}
If both batches are considered audited first in the same view $v$, this clearly follows from \Cref{lemma:1} since the \qc in view $v$ for $B_1$ and $B_2$ cannot conflict.

Without loss of generality, assume $B_1$ was first audited in view $v$ and $B_2$ in view $v + v'$.
Since $B_1$ was audited first in view $v$ by some correct node, \Cref{lemma:6} mandates that during the view change to view $v + v'$, a branch containing $qc_{B_1}^v$ must be present and selected by the branch choice rules.
Thus, $B_1 \anc (B_1' \cat qc_{B_1}^v) \anc NV^{v+v'} \anc B_2$ holds.
Therefore, since $B_2$ is audited in some view $v + v'$ by some (other) correct node, it must extend from $qc_{B_1}^v$.
\end{proof}

This concludes the proof for the safety of slow audits.

\subsection{Safety of fast audit}
\label{subsec:audit-safety-with-fast-path}

To recap, a batch is considered to be audited by fast path if a \qc forms with all $N$ votes.
During view change, the new leader picks a branch that contains the highest batch appearing in $\geq N - u - f_{safe}$ \textsc{ViewChange} messages.
The stabilizing \qc, ie, $qc_{NV^v}^v$, is always proposed with $N - u$ votes, and is not used for auditing by fast path.

Additionally, fast path is only enabled when $N-u > 2f_{safe}$. To see why, consider the following:
Since auditing via fast path doesn't need a chain of \qcs, as necessitated by the slow path, during view change, a fast-path-audited batch can present itself without any \qc on it.
Since the view change quorum size is $N - u$, ideally, a fast-path-audited batch must appear in at least $N - u - f_{safe}$ \textsc{ViewChange} messages.
Out of these $N - u$ \textsc{ViewChange} messages, $f_{safe}$ could exhibit a conflicting batch, simply due to Byzantine nodes spoofing.
We need the fast-path-audited batch to "win" over these spoofed batches.
Hence $N - u - f_{safe} > f_{safe} \implies N-u > 2f_{safe}$.
With $N = 2u + f_{safe} + 1$, this inequality boils down to $u \geq r$.

\begin{lemma}
\label{lemma:9}
During view change, if two conflicting batches proposed in the same view have $N - u - f_{safe}$ votes each, neither could have been audited by the fast path.
\end{lemma}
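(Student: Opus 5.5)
We argue by contradiction, counting the correct nodes in the view-change set. Let $\mathcal{V}$ be the $N-u$ \textsc{ViewChange} messages the new leader collects. Suppose $B$ and $B'$ are conflicting batches proposed in the same view $v$, each appearing in at least $N-u-f_{safe}$ messages of $\mathcal{V}$. Assume, without loss of generality, that $B$ was audited on the fast path, so some $fqc^v_{B_1}$ with $B \anc B_1$ exists. That quorum consists of all $N$ nodes, and in particular every correct node. Hence every correct node voted in view $v$ for $B_1$, which extends $B$.

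The first key step is to show that no correct node in view $v$ can have voted for, or hold as its branch, a batch that conflicts with $B$. By the \textsc{Append-Entry} checks (the same fact used in \Cref{lemma:1}), a correct node never overwrites its local branch within a view. Once it holds $B_1$, it can therefore only extend $B_1$. Before reaching $B_1$ it may have held only ancestors of $B_1$, none of which conflict with $B$. So every branch reported in $\mathcal{V}$ by a correct node whose latest view-$v$ content is relevant either extends $B$ or is a prefix compatible with $B$; none contains $B'$. Here we use that $B'$ conflicts with $B$: any branch containing $B'$ conflicts with $B_1$.

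The second step is the counting. Every message in $\mathcal{V}$ showing $B'$ must come from a Byzantine node, so there are at most $f_{safe}$ such messages. Fast path is enabled only when $N-u>2f_{safe}$, which gives $N-u-f_{safe}>f_{safe}$. Thus $B'$ cannot appear $N-u-f_{safe}$ times, a contradiction. By symmetry the same holds if $B'$ were the fast-audited one, so neither could have been fast audited.

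The main obstacle is the first step. We must make precise that a correct node's reported branch cannot contain $B'$, given rollbacks at \textsc{NewView}. We also need ``proposed in the same view'' to be what excludes an earlier-view branch containing $B'$. I would handle this by noting that a correct node's branch within view $v$ grows monotonically, and that its view-change message for leaving view $v$ reflects that branch. It is also conceivable that the new leader is collecting for a later view, in which case correct nodes could have rolled back. I would restrict the argument to the view change immediately after $v$, or invoke the stabilization argument of \Cref{lemma:6} to say that later branches still extend $B$.
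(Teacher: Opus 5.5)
Your proof is correct and is essentially the paper's argument: both use that the fast quorum contains every correct node, that a correct node never holds conflicting branches within a view, and that fast path is enabled only when $N-u>2f_{safe}$. The paper counts at least $N-u-2f_{safe}\geq 1$ correct nodes among the reporters of the conflicting batch, which is the same count as your bound of at most $f_{safe}$ Byzantine reporters. Your extra care in tying a correct node's view-change branch to its view-$v$ vote, and in restricting to the view change right after the fast-audit view, makes explicit a step the paper leaves implicit (the paper treats later views separately in \Cref{lemma:10}).
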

\begin{proof}
We shall prove this by contradiction. Assume one of the batches audited by fast path, gathering votes from all $N$ replicas.
Out of these, $N - f_{safe}$ votes are from correct replicas.
This set of replicas intersect the other set of $N - u - f_{safe}$ nodes in $N - u - 2f_{safe}$ \textit{correct} nodes.
Since we only enable fast path when $N - u > 2f_{safe}$, at least one correct node must have voted for the conflicting batch, which is a contradiction.
\end{proof}

\begin{lemma}
\label{lemma:10}
If a branch is audited by fast path in view $v$, it is always selected by the branch choice rules for views $> v$.
\end{lemma}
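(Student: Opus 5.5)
The plan is to mirror the inductive structure of \Cref{lemma:6}, replacing the two-hop slow-audit chain by the single unanimous quorum $fqc_{B_1}^v$. Suppose $B \anc B_1$ and $fqc_{B_1}^v$ formed in view $v$. First, I will show that in the view change to $v+1$, any set $\mathcal{V}$ of $N-u$ \textsc{ViewChange} messages contains at least $N-u-f_{safe}$ messages from correct nodes whose branches extend $B_1$. All $N$ nodes voted for $B_1$, at most $u$ are missing from $\mathcal{V}$, and at most $f_{safe}$ may lie. Correct voters never overwrite their branch within view $v$ (as in \Cref{lemma:1}), and they only roll back on a \textsc{New-View} of a higher view, so their reported branches still extend $B_1$.

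Next, I will go through the branch selection rules in order.

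\textbf{Rule 1.} I need that no branch conflicting with $B_1$ carries an \qc in a view $\geq v$. In view $v$ this follows from \Cref{lemma:1}, because every correct node voted for $B_1$; any $N-u$ quorum for a conflicting batch would contain a correct node that voted for two conflicting batches in the same view. For views $> v$, I will induct over stable views exactly as in \Cref{lemma:6}, using \Cref{lemma:2} and \Cref{lemma:3}: any \qc in a later stable view extends $NV^{v'}$, and by the induction hypothesis $NV^{v'}$ extends $B_1$. So every branch surviving Rule~1 with view $\geq v$ either extends $B_1$ or has an older \qc. The branches extending $B_1$ reported by correct nodes carry an \qc of view at least that of $B_1$'s ancestors, so they survive Rule~1.

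\textbf{Rule 2.} The batch $B_1$, or a descendant of it, appears in at least $N-u-f_{safe}$ messages. By \Cref{lemma:9} and the condition $N-u>2f_{safe}$, no conflicting batch from the same view can also reach $N-u-f_{safe}$ appearances. Conflicting spoofed branches reach at most $f_{safe} < N-u-f_{safe}$. Hence Rule~2 keeps only branches extending $B_1$, and Rules~3 and~4 then pick among those.

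For views beyond $v+1$, the inductive step follows \Cref{lemma:6}. If some intervening stable view $v'$ produced an \qc, by induction that \qc extends $B_1$, so Rule~1 selects a branch extending $B_1$. If no such \qc was disseminated, correct nodes have not rolled back: a correct node only rolls back upon accepting a \textsc{New-View} whose chosen branch, by the induction hypothesis, extends $B_1$. So the $N-u-f_{safe}$ count persists and the Rule~2 argument applies again.

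The main obstacle is the interaction between Rule~1 and Rule~2. A branch conflicting with $B_1$ could carry a higher-view \qc than the \qc on $B_1$'s ancestors and be kept by Rule~1 before Rule~2 is ever applied. I expect to resolve this with the claim that any \qc in view $\geq v$ extends $B_1$, which needs care about \qcs formed in view $v$ on ancestors versus descendants of $B_1$. I will handle that by noting such \qcs are on non-conflicting batches by \Cref{lemma:1}, and that the highest-view \qc set therefore consists only of branches comparable with $B_1$.
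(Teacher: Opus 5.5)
There is a genuine gap in your Rule~1 step. Rule~1 compares the \emph{views} of the highest auditQCs carried by the reported branches. It is therefore not enough to show that branches conflicting with $B_1$ carry only auditQCs of view $<v$. You must also show that the correct branches extending $B_1$ carry an auditQC of view at least $v$. Your sentence that they ``carry an auditQC of view at least that of $B_1$'s ancestors'' does not establish this. As written, nothing excludes the following: the highest auditQC on $B_1$'s branch comes from some view $w'<v$, while a Byzantine replica reports a conflicting branch that embeds a genuine auditQC from a view $w$ with $w'<w<v$, never surfaced in the view change to $v$ (the hidden-QC situation of \Cref{sec:vs}). Rule~1 would then discard every branch extending $B_1$, and your Rule~2 count would never be consulted.

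The missing ingredient is view stabilization. The quorum $fqc^v_{B_1}$ consists of at least $N-u$ signed votes and is not the stabilizing QC, so \Cref{lemma:2} gives $(B_0 \cat qc^v_{NV^v}) \anc B_1$. This is the content of \Cref{lemma:5}, which is exactly how the paper's proof opens. Consequently every correct branch extending $B_1$ carries a view-$v$ auditQC, including any branch a correct node is later rolled back to under your induction hypothesis. Combined with your induction that every auditQC of view $>v$ certifies a batch extending $B_1$, this is what guarantees these branches survive Rule~1 in every later view change.

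Separately, your proposed resolution of the ``main obstacle'' is false as stated. \Cref{lemma:1} restricts which batch a view-$v$ auditQC \emph{certifies}, not which branches \emph{embed} it. A Byzantine leader of view $v$ can place $qc^v_{NV^v}$ into a batch $Y$ that forks off below $B_1$ (for instance a sibling of $B_0$), and a Byzantine replica can report $Y$. So neither ``no branch conflicting with $B_1$ carries an auditQC of view $\geq v$'' nor ``the highest-view set consists only of branches comparable with $B_1$'' holds. This is harmless, but only because Rule~2 removes $Y$: it appears at most $f_{safe} < N-u-f_{safe}$ times. The correct claim is that a tie at view $v$ in Rule~1 is broken by Rule~2, while any surviving branch whose auditQC has view $>v$ extends $B_1$.

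With these two repairs, the rest of your argument matches the paper's: the $N-u-f_{safe}$ count persists through unstable views, and you induct over stable views. It then treats uniformly the two cases the paper separates. These are $B$ proposed before $v$, which the paper reduces to \Cref{lemma:6} applied to $qc^v_{NV^v}$, and $B$ proposed in $v$.
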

\begin{proof}
Without loss of generality, assume $v$ is the first view where a batch $B$ is audited by fast path.
By \Cref{lemma:5}, view $v$ must be stable.
If $B$ was proposed before view $v$, all $N$ replicas must have voted for the branch $B \anc NV^v \anc (B_0 || qc_{NV^v}^v)$.
If $B$ was proposed in view $v$, all $N$ replicas must have voted for the branch $NV^v \anc (B_0 \cat qc_{NV^v}^v) \anc B$.

In the former case, \Cref{lemma:6} guarantees that in all future views a branch extending $qc_{NV^v}^v$ is found and selected.

In the latter case, assume that all views from $v + 1$ to $v + k$ are unstable. If $v + k + 1$ becomes stable, the branch made stable will always be chosen in all future views.
So, it suffices to prove that the branch chosen in view $v + k + 1$ extends $B$.
The fast path audited batch $B$ appears at least $N - u - f_{safe}$ times in the \textsc{ViewChange} quorum for view $v + 1$.
Hence, the branch choice rules dictate that a branch extending $B$ must be chosen.
Importantly, the number of replicas with $B$ in their local branch doesn't decrease.
For each subsequent view, it is clear to see how the branch choice rules preserve this invariant.
Hence even during the view change for view $v + k + 1$, there are at least $N - u - f_{safe}$ branches with $B$.
\end{proof}

\subsection{Proofs for commit safety}
\label{subsec:commit-safety}

Commit safety assumes no replica is malicious.
For a replica to consider a batch committed in view $v$, the view must be stabilized.

Within a view, establishing commit safety is trivial: the leader never proposes conflicting batches.
Since the leader for a given view is already known, there will be at most one proposal for any given log position in a given view.
We shall prove that a committed batch is not lost across views.

\begin{lemma}
\label{lemma:11}
If in all views $v$ and onwards, all replicas behave non-maliciously, and a correct node considers a batch $B$ committed in view $v$, $B$ appears in the chosen branch by the leader in views $> v$.
\end{lemma}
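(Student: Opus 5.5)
The plan is an induction on views $v' > v$, maintaining a stronger invariant than the statement. The invariant $\mathcal{I}(v')$ says: the branch chosen by the leader of $v'$ extends $B$, and every batch for which a commit QC or audit QC forms in $v'$ extends $B$. Since no replica is malicious, every vote is honest and every leader follows the branch selection rule. Quorum intersection then only needs one common node, because $\lfloor \frac{N}{2}\rfloor+1$ and $N-u$ together exceed $N$ (this follows from $N \geq 2u+f_{safe}+1$).

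First I would pin down what the leader of view $v$ knows when $B$ commits. By the protocol remark opening this subsection, view $v$ is stabilized, so $qc_{NV^v}^v$ exists, and by \Cref{lemma:2} every batch proposed after it in $v$ extends it. A commit QC for $B$, or for a descendant $B'$ of $B$, means a majority $Q_c$ of replicas hold $B'$ in their local branch in view $v$. Each such branch also contains $qc_{NV^v}^v$, because $B'$ extends $(B_0 \cat qc_{NV^v}^v)$. Correct replicas never overwrite their branch within a view (the same argument as in \Cref{lemma:1}). So when they send \textsc{View-Change}, their branch still extends $B'$, and their highest audit QC is in view $v$.

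Base case $v+1$ (or the first view after $v$ whose view change completes). The \textsc{View-Change} quorum of $N-u$ intersects $Q_c$ in some replica, and that replica reports a branch extending $B$ with audit QC view $v$. No replica can report an audit QC of view $>v$, since no QC has formed beyond $v$ yet. Rule 1 therefore keeps exactly the branches whose $highAuditQC$ is in view $v$. By \Cref{lemma:1} and \Cref{lemma:2}, all such branches are in the single chain extending $qc_{NV^v}^v$ in view $v$, because the leader of $v$ is honest and never forks. Rule 2 cannot select a conflicting branch, since with no malicious replicas there are no conflicting batches within view $v$. Rules 3 and 4 then pick the longest view-$v$ branch, which extends $B'$: all view-$v$ branches are prefixes of one honest chain, and the intersecting replica holds $B'$. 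Hence $NV^{v+1}$ extends $B$.

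Inductive step. Suppose $\mathcal{I}$ holds for all views in $(v, v')$. If some intermediate view $v''$ stabilized, then $qc_{NV^{v''}}^{v''}$ extends $B$. Any commit in $v''$ happens after stabilization, so every branch carrying a highest audit QC of view $>v$ extends $B$ by the hypothesis. If no intermediate view stabilized, no audit QC above $v$ exists. Honest replicas only roll back to a chosen branch that extends $B$, so the replicas of $Q_c$ still extend $B$, and the base-case argument repeats. Either way the highest-view branch kept by Rule 1 extends $B$. Any branch kept with the same highest audit-QC view lies in that view's single honest chain, and Rules 3 and 4 select its longest member. That member extends $B$ provided some reporting replica's branch extends $B$, and quorum intersection with $Q_c$, or with the stabilizing or commit quorum of $v''$, supplies one.

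The main obstacle is the subtle case where a later view $v''$ stabilized (forming $qc_{NV^{v''}}^{v''}$ on a branch extending $B$) but its leader crashed before disseminating that QC. Then the view change for $v'$ may see only branches whose highest audit QC is $\le v$, some of which are stale. I would first try to argue that no commit or audit QC beyond the $NV$ batch can exist in such a view, since commits require stabilization. Any replica that voted for $NV^{v''}$ has a branch extending $B$, its audit QC is still at view $v$, and its branch is at least as long, so Rules 3 and 4 still favor it. If that length comparison fails across views, I would instead strengthen the invariant to say that every correct replica's local branch with highest audit-QC view $\ge v$ either extends $B$ or is a proper prefix of a branch that does.
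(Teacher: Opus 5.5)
Your proposal is correct and follows essentially the paper's argument. Intersect the majority commit quorum with the $N-u$ view-change quorum, then induct over the views after $v$: no audit QC above view $v$ exists until some later view stabilizes, and once one does, its stabilizing QC (and hence every later QC) extends $B$, so Rule~1 plus the tie-breaking rules keep selecting a branch through $B$. The hedge in your last paragraph is unnecessary, because Rule~3 (the view of the branch's last batch), not length, does the work. A reported branch whose last batch lies in a view above $v$ either ends in a \textsc{New-View} batch on a chosen branch or carries an audit QC of that view, so it extends $B$ by your inductive hypothesis. Any surviving branch that does not extend $B$ is a proper prefix of $B$ ending in view $v$, and it loses under Rule~3 or Rule~4.
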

\begin{proof}
If $B$ was proposed in a view earlier than $v$, then the commit condition dictates that $B \anc NV^v \anc (B_0 \cat qc_{NV^v}^v)$ holds.
If $B$ was proposed in view $v$, $B$ must extend $qc_{NV^v}^v$, i.e., $NV^v \anc (B_0 \cat qc_{NV^v}^v) \anc B$ holds. (Here $B$ and $B_0$ could be the same batch)

Batch $B$ was committed by a correct node implies that $B$ was replicated to a quorum of \majority nodes and the leader in view $v$ collected \majority votes on $B$ or any of its descendants.
The \majority quorum intersects the $N - u$ view change quorum in more than one (correct) nodes.

In the latter case above, if all views from $v + 1$ to $v + k$ are unstable, we can show by induction that $(B_0 \cat qc_{NV^v}^v) \anc B$ is part of the longest branch with possibly only \textsc{NewView} messages from views $v + 1$ to $v + k$ in the suffix.
Since $v$ is the highest view in which a \qc was formed, the branch containing $(B_0 \cat qc_{NV^v}^v) \anc B$ is always chosen.

Therefore, in a future view, if a leader forms a stabilizing \qc, it must extend $(B_0 \cat qc_{NV^v}^v) \anc B$. So, all future \qcs extend from $B$.
Hence, the branch choice rules always ensure a branch with $B$ is chosen.

The proof for the former case directly follows from the proof above.
Note that in the former case, $B \anc NV^v \anc (B_0 \cat qc_{NV^v}^v)$ holds and to commit the majority must have been collected on a batch that extends $(B_0 \cat qc_{NV^v}^v)$.
So, $B_0$ must also have been committed, and thus, by the proof above, must be preserved across views, along with all of its ancestors, including $B$.

This concludes the proof for commit safety.
\end{proof}

Note that the proof doesn't demand that in all views before view $v$, the replicas have to be non-malicious.
The process of view stabilization ensures that any conflicting branch proposed in earlier views will never win against a branch with committed entries in view $v$.
We call this property \textbf{Influence Freedom}.
This ensures that a system with broken TEEs can be patched in-place without going through a costly reconfiguration protocol by simply triggering a view change after the patching is done.
This reduces the downtime needed for patching.

Nonetheless, we provide a reconfiguration protocol in \Cref{sec:reconfiguration} to handle membership changes.

\section{Liveness Proof}
\label{sec:liveness}
\sys{} guarantees liveness only after Global Stabilization Time (GST).
Being a partially synchronous protocol, it relies on timeouts for progress.
Let $\Delta$ be the maximum message transmission delay after GST, and $s$ be the interval at which a leader signs batches.
We set the timeout, $T_{out} > (4s + 5) \Delta$.

To recap, a correct node in view $v$ waits up to $T_{out}$ time to hear about new audited batches.
If it times out, it stops receiving messages in view $v$ and broadcasts a \textsc{ViewChange} message for view $v + 1$.
If any node receives $f_{safe} + 1$ \textsc{ViewChange} messages for some view $v + k$, it stops accepting messages in view $v$ and broadcasts its own \textsc{ViewChange} message for view $v + k$.
If any node receives $N - u$ \textsc{ViewChange} messages (including itself) for view $v + k$, or it receives the new view message $NV^{v+k}$ from the leader in view $v + k$, it increments its view to $v + k$, updates its local fork and votes for $NV^{v+k}$ (given $NV^{v+k}$ is valid).

It is important for a node to not increase its view before it has received $N - u$ \textsc{ViewChange} messages. Otherwise, nodes in different views can always keep jumping to higher and higher views without ever having enough nodes in the same view to make progress.
If a node times out waiting for $N - u$ \textsc{ViewChange} messages for view $v + 1$, it simply re-broadcasts its \textsc{ViewChange} message for view $v + 1$.

Once GST has arrived, we shall prove the following lemmas hold:
\begin{enumerate}[leftmargin=*]
    \item It takes at most $2\Delta$ time to synchronize views.
    \item An honest leader stabilizes its view in at most $2\Delta$ time.
    \item After view stabilization, it takes $3\Delta$ to propagate the knowledge of committing a batch.
    \item After view stabilization, it takes $(4s + 1)\Delta$ to propagate the knowledge of auditing a batch.
\end{enumerate}

\begin{lemma}
    Let $v$ be the first view after GST with an honest leader, and $t$ be the time at which the first correct node enters view $v$. All correct nodes enter view $v$ before $t + 2\Delta$.
\end{lemma}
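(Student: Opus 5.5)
The argument follows the standard Bracha-style amplification used for view synchronization, applied to the view-change rules recapped above. The first step is to ask why a correct node enters view $v$ at time $t$. By the protocol, a node advances to view $v$ only in one of two ways. Either it has collected $N-u$ \textsc{ViewChange} messages for $v$, or it has received a valid $NV^{v}$ from the leader of $v$. A valid $NV^{v}$ itself carries $N-u$ \textsc{ViewChange} messages as proof. So in both cases, at time $t$ there exist $N-u$ \textsc{ViewChange} messages for view $v$ from distinct nodes. At most $f_{safe}$ of these come from compromised nodes, so at least $N-u-f_{safe}\geq u+1\geq f_{safe}+1$ of them were sent by correct nodes. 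The last inequality uses $N \geq 2u+f_{safe}+1$, and $u\ge f_{safe}$ in the configurations where this matters. Otherwise I would instead argue directly with the $N-u-f_{safe}\ge 1$ correct senders together with the amplification threshold.

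The second step handles the first hop. Since we are after GST, every \textsc{ViewChange} message that a correct node sent by time $t$ is delivered to all correct nodes by $t+\Delta$. Hence by $t+\Delta$, every correct node has received at least $f_{safe}+1$ \textsc{ViewChange} messages for view $v$. By the amplification rule, each such node stops accepting messages in its old view and broadcasts its own \textsc{ViewChange} for $v$ by time $t+\Delta$. A node that was already waiting on view $v$ has broadcast, or re-broadcasts, as well.

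The third step handles the second hop. All correct nodes, of which there are at least $N-f_{safe}-c \geq N-u$ live ones, have broadcast \textsc{ViewChange} for $v$ by $t+\Delta$. These messages reach every correct node by $t+2\Delta$, so every correct node then holds $N-u$ \textsc{ViewChange} messages for $v$ and enters view $v$. Alternatively, the first node that entered $v$ could simply forward the $N-u$ messages it holds, which gives the bound in a single $\Delta$. The two-hop argument avoids relying on a forwarding step that the protocol description does not state explicitly.

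The main obstacle is the count in the first step, namely ensuring that enough correct senders exist to trigger amplification at every correct node. A second issue is that nodes may be in different views at $t$ or may have stopped due to higher-view messages. I would handle this by choosing $v$ as the first view after GST with an honest leader, so that no correct node has already moved beyond $v$ with a quorum. I would also rely on the rebroadcast-on-timeout rule to cover messages that were sent before GST and lost.
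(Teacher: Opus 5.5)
Your two-hop structure matches the paper's: enough \textsc{ViewChange} messages reach every correct node by $t+\Delta$ to trigger amplification, and then every correct node holds $N-u$ of them by $t+2\Delta$. Folding the $NV^v$ case into the \textsc{ViewChange} case, via the certificate $NV^v$ carries, is fine. The gap is in the counting: you bound the non-correct senders by the safety threshold $f_{safe}$.

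In step one this forces the extra hypothesis $u\ge f_{safe}$, which $N\ge 2u+f_{safe}+1$ does not give. Consider the deployment suggested in the platform lower-bound appendix: $2c+1$ nodes in each of $\pi_{safe}+1$ platforms, with $\pi_{live}=0$. It has $f_{safe}=(2c+1)\pi_{safe}>c=u$. For instance, with $N=6$, $u=1$, $f_{safe}=3$, your count guarantees only $N-u-f_{safe}=2$ correct senders, against an amplification threshold of $f_{safe}+1=4$. Your fallback does not help. Fewer than $f_{safe}+1$ correct senders cannot guarantee that amplification fires at every correct node, since compromised nodes need not send to them.

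In step three, the inequality $N-f_{safe}-c\ge N-u$ is equivalent to $f_{safe}\le f_{live}$. That fails whenever $f_{safe}>f_{live}$, for example in the multi-platform configurations of the evaluation, which fix $\pi_{live}=0$. Also, in step one you discount only compromised senders. A crashed node may have delivered its \textsc{ViewChange} to the first node but not to the others.

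The missing observation is that this is a liveness lemma, so the operative fault assumption is the liveness one. There are at most $f_{live}$ compromised and $c$ crashed nodes, i.e.\ at most $u$ nodes that are not correct. With that bound:
\begin{itemize}
\item Of the $N-u$ senders, at least $(N-u)-u=N-2u\ge f_{safe}+1$ are correct. This follows directly from $N\ge 2u+f_{safe}+1$, with no relation between $u$ and $f_{safe}$ needed.
\item Their messages reach every correct node by $t+\Delta$ and trigger amplification.
\item Since at least $N-u$ nodes are correct, every correct node holds $N-u$ \textsc{ViewChange} messages by $t+2\Delta$.
\end{itemize}
This is exactly the paper's $(N-u)-u$ count.
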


\begin{proof}
    If the first correct node enters view $v$ in time $t$, it must have seen either \one $N - u$ \textsc{ViewChange} messages for view $v$, or \two a valid $NV^v$ from leader in view $v$.
    In the second case, all correct nodes with receive $NV^v$ before $t + \Delta$.

    In the first case, all other nodes must have received at least $(N - u) - u$ \textsc{ViewChange} messages by $t + \Delta$.
    Since $N \geq 2u + f_{safe} + 1 \implies N - 2u \geq f_{safe} + 1$, on receiving these \textsc{ViewChange} message, all of these correct nodes must have broadcast their own \textsc{ViewChange} message for view $v$, which causes all correct nodes to receive $N - u$ \textsc{ViewChange} within $\Delta$ more time. Hence, each correct node enters view $v$ within $t + 2\Delta$.
\end{proof}

\begin{lemma}
    Let $t$ be the time after GST at which an honest leader in view $v$ receives $N - u$ \textsc{ViewChange} message. The leader forms the stabilizing \qc, $qc_{NV^v}^v$ within $t + 2\Delta$.
\end{lemma}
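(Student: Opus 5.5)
The argument is a direct timing count over the two message delays of view stabilization. We work after GST, where every message between correct nodes arrives within $\Delta$, and the leader $L$ of view $v$ is honest.

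\emph{Step 1: the leader broadcasts \textsc{New-View} at time $t$.} At time $t$, $L$ holds $N-u$ \textsc{ViewChange} messages. It applies the branch selection rule locally, replaces its branch with $B_{chosen}$ if needed, and builds the signed blank batch $S$ extending $B_{chosen}$. This is local computation, which we treat as instantaneous. So $L$ broadcasts \msg{New-View}{S,~\mathcal{V}} at time $t$, and every correct node receives it by $t+\Delta$.

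\emph{Step 2: every correct node votes by $t+\Delta$.} A correct replica $R$ accepts $NV^v$ even if it has not yet gathered $N-u$ \textsc{ViewChange} messages itself. By the view-synchronization rule recalled above, receiving a valid $NV^v$ from the leader of view $v$ suffices to enter $v$. The checks then pass:
\begin{itemize}
\item $L$ is the leader of $v$;
\item $S$ extends the branch that the deterministic branch selection rule picks on the embedded set $\mathcal{V}$ (because $L$ is honest);
\item $\mathcal{V}$ contains $N-u$ valid, signed \textsc{ViewChange} messages for $v$.
\end{itemize}
$R$ has not voted in view $v$ before, since $NV^v$ is the first batch of the view. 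If $S$ conflicts with its local branch, $R$ rolls back; otherwise it fetches any missing ancestors. It then returns a signed \textsc{Resp}.

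\emph{Step 3: the leader collects enough votes by $t+2\Delta$.} At most $u$ nodes are crashed or Byzantine and may stay silent, so at least $N-u$ correct nodes (counting $L$ itself) send signed votes on $S$. These reach $L$ by $t+2\Delta$. Together they form $qc_{NV^v}^v$, the audit quorum of $N-u$ signed votes that stabilizes the view.

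\emph{Main obstacle.} Steps 2 and 3 hide two subtleties.
\begin{itemize}
\item \emph{Retrieving missing batches.} A correct replica whose branch lacks batches of $B_{chosen}$ must fetch them from $L$, which looks like an extra round trip beyond $2\Delta$. I would rule this out by having the \textsc{New-View} message carry, or be pipelined with, the batches missing from the branches reported in $\mathcal{V}$. Since $L$ knows each reported branch, it can send each replica exactly what it lacks. Failing that, I would state the assumption that the branch suffix is delivered with $NV^v$.
\item \emph{Replicas still in an older view.} A correct node may not yet be in view $v$ when $NV^v$ arrives, or may have ignored messages from earlier views. The protocol lets $NV^v$ itself move a node into view $v$, which removes this issue; no node is in a higher view, because $v$ is the view being entered.
\end{itemize}
With these two points settled, the bound is exactly two message delays.
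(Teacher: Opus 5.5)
Your proof is correct and is the paper's argument. Both count $\Delta$ for $NV^v$ to reach every correct node, which enters view $v$ on receipt and votes at once, and another $\Delta$ for the votes of at least $N-u$ responsive nodes to reach the leader. Your remarks on fetching missing ancestors and on lagging replicas go beyond the paper's one-line proof but agree with its implicit model; in the paper's TLA+ specification, for instance, \textsc{NewView} carries the entire log.
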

\begin{proof}
    This lemma holds since the nodes on receiving the new view message $NV^v$ immediately jump to view $v$ and vote on $NV^v$. It takes $\Delta$ to broadcast $NV^v$ and another $\Delta$ to collect all the votes.
\end{proof}

\begin{lemma}
\label{lemma:commit-liveness}
    Let $t$ be the time after GST at which an honest leader in view $v$ forms $qc_{NV^v}^v$, all correct nodes consider batches proposed at $t' > t$ committed within $t' + 3\Delta$.
\end{lemma}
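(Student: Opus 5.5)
We will argue by tracing a single batch through one round of the pipelined steady state after the view is stabilized, counting message delays. Fix a batch $B$ proposed by the honest leader $L$ at time $t' > t$. Since $qc_{NV^v}^v$ has already formed at time $t$, $L$ is no longer blocked by view stabilization. By the steady-state logic, $B$ extends $qc_{NV^v}^v$, so by \Cref{lemma:2} no correct replica will refuse to vote on it for failing that check.

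\textbf{Step 1 (dissemination, $\Delta$).} $L$ broadcasts \textsc{Append-Entry} for $B$ at $t'$, so every correct replica receives it by $t'+\Delta$. Because the view is synchronized (by the view-synchronization lemma, all correct nodes are in view $v$) and $L$ is honest, $B$ extends each correct replica's $b_{curr}$. If a replica is missing intermediate batches, the fetch of ancestors from $L$ happens in the same exchange. We will assume that catch-up traffic is pipelined and does not add extra delays; this is the one point we need to justify carefully.

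\textbf{Step 2 (votes, $\Delta$).} Each correct replica votes immediately. Since at most $u$ nodes are faulty or unresponsive and $N-u \geq \lfloor N/2\rfloor+1$ (which follows from $N \geq 2u+f_{safe}+1$), $L$ receives at least \majority \textsc{Resp} messages for $B$ (or a descendant) by $t'+2\Delta$. The incremental $pendingQCs$ bookkeeping then lets $L$ form a commitQC on $B$, and at that point $L$ itself considers $B$ committed.

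\textbf{Step 3 (propagation, $\Delta$).} This is the step we expect to be the main obstacle. Commit knowledge reaches followers only piggybacked on the next \textsc{Append-Entry} (as $highCommitQC$, or as the commit index per the implementation notes). We must ensure that $L$ sends such a message promptly after forming the QC. We would argue that an honest leader under load keeps proposing batches, because pipelining never waits on a quorum. If no client traffic is pending, we would assume the leader emits a (possibly empty) batch or heartbeat carrying the updated commitQC immediately. Under this assumption, the next message leaves $L$ by $t'+2\Delta$ and arrives at every correct replica by $t'+3\Delta$.

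On receipt, each correct replica updates $highCommitQC$ and applies the commit confirmation rule, so it marks $B$ and all of its ancestors committed by $t'+3\Delta$, as claimed. The hash chaining guarantees that a commitQC on any descendant also commits $B$. Finally, we would note that this bound holds independently of signatures: unlike audits, the commit rule needs no signing interval $s$, which is why $s$ does not appear in this bound.
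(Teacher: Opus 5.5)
Your argument is the same as the paper's: one $\Delta$ to disseminate $B$, one $\Delta$ for the votes to reach the leader, and one $\Delta$ for the next \textsc{Append-Entry} carrying the commitQC. Your explicit assumptions (the leader promptly sends a next batch or heartbeat, and ancestor catch-up adds no extra delay) are exactly what the paper's one-line proof leaves implicit. One small fix: \Cref{lemma:2} only gives a necessary condition for a QC to form, so it does not show that correct replicas will vote on $B$; that follows directly from the steady-state voting rule.
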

\begin{proof}
    Once the view $v$ is stabilized, each batch needs $\Delta$ to propagate, $\Delta$ for voting, and $\Delta$ to propagate the next batch that carries the commitQC. Therefore, each correct node considers a batch committed after $t' + 3\Delta$.
\end{proof}

\begin{lemma}
\label{lemma:audit-liveness}
    Let $t$ be the time after GST at which an honest leader in view $v$ forms $qc_{NV^v}^v$, all correct nodes consider batches proposed at $t' > t$ audited within $t' + (4s + 1)\Delta$.
\end{lemma}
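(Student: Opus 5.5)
Our plan is to mirror the proof of \Cref{lemma:commit-liveness}, but for the slow-audit chain $B \anc B_1 \anc (B_2 \cat qc_{B_1}^v) \anc B_3 \anc (B_4 \cat qc_{B_3}^v)$. The fast path only speeds auditing up, so we bound the slow path. Since the view is stabilized and the leader is honest, every correct node votes on each \textsc{Append-Entry} within $\Delta$ of its broadcast. After GST at least $N-u$ signed votes therefore reach the leader within $2\Delta$ of the proposal of any \emph{signed} batch. Because the leader signs only every $s$ batches, an \qc can form only on a signed batch, and a batch $B$ proposed at $t'$ may first be covered by a signed descendant $B_1$ proposed up to $s$ batch slots later.

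We will charge time along the chain in stages.
\begin{itemize}
\item The signed descendant $B_1$ is proposed within $s\Delta$ of $t'$.
\item $qc_{B_1}^v$ forms within $2\Delta$ after that; we use the vote-accumulation rule in $pendingQCs$, so a signed vote on any descendant counts for $B_1$.
\item $qc_{B_1}^v$ is embedded into the next batch $B_2$.
\item The next signed batch $B_3$ extending $B_2$ appears within another $s\Delta$.
\item $qc_{B_3}^v$ forms within $2\Delta$ more.
\item The final step is embedding in $B_4$ and disseminating it to all correct nodes, which takes $\Delta$.
\end{itemize}
The second half of the chain mirrors the first, which yields the $2s$ waiting terms.

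The accounting is in the unit the paper's timeout uses: proposals are pipelined, so the leader emits at least one batch per $\Delta$ and can never be blocked waiting for QCs (the leader never waits to send the next \textsc{Append-Entry}). Summing the stages gives roughly $2(s\Delta + 2\Delta) + \Delta$. We would then tighten the constant to $(4s+1)\Delta$ by bounding each gap between consecutive signed batches conservatively by $2s\Delta$, which absorbs the vote-collection rounds. That gap covers waiting up to $s$ slots for the signed batch plus the round trip for its votes, with each slot counting up to $2\Delta$ when a proposal has to wait for the previous batch's retrievals. Two such gaps plus one final dissemination step gives $(4s+1)\Delta$.

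The main obstacle is making this per-slot accounting rigorous, i.e.\ justifying a per-batch cadence bound of $2\Delta$ after stabilization, consistent with $T_{out} > (4s+5)\Delta$. We would also have to check that no timeout fires in the meantime, so that the honest leader is not deposed. A new \qc reaches every correct node within $(4s+1)\Delta < T_{out}$, and each \textsc{Append-Entry} carrying a new \qc resets the timers, so view $v$ persists and the bound holds for all $t' > t$.
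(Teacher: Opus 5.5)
Your final accounting is the paper's: the slow path needs two consecutive auditQCs, each costing at most $s$ batch round-trips of $2\Delta$ because only every $s$-th batch is signed, plus one $\Delta$ to disseminate the last one, for $(2\cdot 2s+1)\Delta$. The per-batch cadence you flag as the main obstacle is exactly what the paper assumes without proof (one batch per round-trip, with each new QC piggybacked on the batch sent as the votes return, as in \Cref{lemma:commit-liveness}), so you can drop the intermediate $(2s+5)\Delta$ estimate and the retrieval-based justification for $2\Delta$ slots, which conflicts with the leader never blocking.
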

\begin{proof}
    This is analogous to \Cref{lemma:commit-liveness} with the exception that \qcs are disseminated every $s$ batches.
    Even with one unresponsive node, we lose the liveness for the fast path. Hence, it takes $2s$ round-trips to form the $2$ consecutive \qcs that audit a batch and one message delay to propagate the last \qc. Hence, it takes at most $(2\times2s + 1)\Delta = (4s + 1)\Delta$ to audit a batch after view stabilization.
\end{proof}

Finally, we prove that, after GST, a client request generated at time $t$ gets audited within a finite time.
(As is standard for leader-based protocols, we make the simplifying assumption that clients are not censored.)

\begin{lemma}
    Let $t$ be the time after GST at which a client submits a request to the system. If $T_{out} > (4s + 5)\Delta$, the request gets audited by at most $t + uT_{out} + (4s + 5)\Delta$.
\end{lemma}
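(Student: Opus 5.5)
The plan is to split the time after $t$ into a bounded prefix of failed views and one final view with an honest leader, and then reuse the four liveness lemmas above for that final view.

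\textbf{Step 1: a bound on consecutive bad views.} With round-robin leader rotation and at most $u = f_{live}+c$ faulty (crashed or compromised) nodes, among any $u+1$ consecutive views at least one has an honest leader. So after the request is submitted at time $t$, at most $u$ consecutive views can have a faulty leader. Each such view lasts at most $T_{out}$, because correct nodes start their timer on entering the view and reset it only when an \textsc{Append-Entry} carries a new auditQC. A faulty leader can keep a view alive past $T_{out}$ only by actually producing new auditQCs. In that case the system is auditing, and the request is audited once it is included; here we use the assumption that clients are not censored. I would handle this by noting that any view that survives must keep auditing new batches that include pending client requests. This is the step I am least sure of, and I would lean on the no-censorship assumption exactly as the statement does. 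The faulty-leader views therefore contribute at most $uT_{out}$ before the first honest-leader view begins.

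\textbf{Step 2: the honest-leader view.} Let $v$ be the first honest-leader view, entered by some correct node at time $t_0 \le t + uT_{out}$. I would chain the lemmas above as follows.
\begin{itemize}
\item By view synchronization, all correct nodes enter $v$ by $t_0+2\Delta$, so the leader has its $N-u$ \textsc{ViewChange} messages by then.
\item By the stabilization lemma, $qc_{NV^v}^v$ forms by $t_0+4\Delta$.
\item The leader proposes the client request in the next batch.
\item By \Cref{lemma:audit-liveness}, the request is audited within a further $(4s+1)\Delta$.
\end{itemize}
This gives a total of $t + uT_{out} + 2\Delta + 2\Delta + (4s+1)\Delta = t + uT_{out} + (4s+5)\Delta$.

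\textbf{Step 3: the timeout condition.} We must show the honest view $v$ is not abandoned before the audit completes. No correct node times out in $v$ until it has gone $T_{out}$ without a new auditQC, measured from its own entry into $v$. Stabilization together with the first audit takes at most $(4s+5)\Delta$ from $t_0$. The assumption $T_{out} > (4s+5)\Delta$ therefore means no correct node times out first, and fewer than $f_{safe}+1$ faulty \textsc{ViewChange} messages cannot force a view change. Once auditing starts, each subsequent auditQC arrives within $2s$ round trips, which is less than $T_{out}$, so timers keep being reset.

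\textbf{Main obstacle.} The subtle part is Step 1: bounding time spent in faulty-leader views that make partial progress (for example, resetting timers with auditQCs while censoring or stalling). I would argue that either such a view audits the request or its lack of audit progress triggers a timeout within $T_{out}$. The no-censorship assumption removes the first branch.
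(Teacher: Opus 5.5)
Your proposal is correct and matches the paper's proof. At most $u$ consecutive faulty-leader views each cost $T_{out}$, and the first honest-leader view then takes $2\Delta$ to synchronize, $2\Delta$ to stabilize, and $(4s+1)\Delta$ to audit, with $T_{out} > (4s+5)\Delta$ ensuring that view is not abandoned. The paper handles faulty leaders that make partial progress exactly as you do, by invoking the assumption that clients are not censored, so your ``main obstacle'' is treated at the same level of rigor as in the original.
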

\begin{proof}
    Using $T_{out} > (4s + 5)\Delta$ guarantees that a node remains in view $v$ long enough for other nodes to synchronize, the leader to stabilize the view, and then audit a batch of client requests, even without the fast path.

    After GST, at most $u$ consecutive views may have Byzantine leaders and therefore may refuse to make any progress. The system, therefore, needs to time out through $u$ views in $uT_{out}$ time and then require at most $(4s + 5)\Delta$ time to audit the client's request.
\end{proof}

\section{Necessity of View Stabilization}
\label{sec:vs}
In the proofs above, we have extensively used the notion of view stabilization.
In this section, we investigate why this step is necessary for the correctness of the protocol.
Consider a non-Byzantine system before GST, where the leader in view 1 gets an \qc on batch 1 and embeds it in batch 4.
Batches 1-4 were committed properly by replicating them to a majority.
But due to asynchrony, the leader in view 1 is not able to propagate batch 4.
The cluster times out, and the leader in view 2 extends from batch 3, as it is the last committed batch of view 1.
In view 2, the leader commits two new batches and then times out while another batch is in-flight.
The network partition is lifted and leader in view 1 participates in the view change for view 3.
The branches seen by the leader in view 3 are depicted in \Cref{fig:vs1}.

\begin{figure*}
    \centering
    \includegraphics[width=0.70\linewidth]{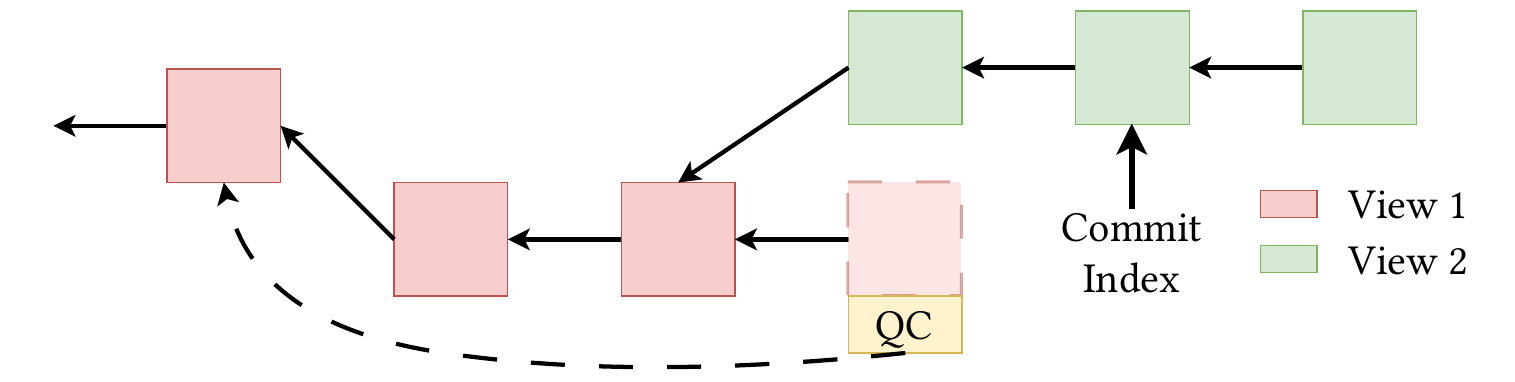}
    \caption{Necessity of View Stabilization: The dilemma in branch selection.}
    \label{fig:vs1}
\vspace{-0.3cm}
\end{figure*}

By the branch selection rules, the leader in view 3 must choose the "hidden" batch 4 from view 1, since it has the highest \qc view.
However, doing that causes committed batches in view 2 to rollback, without there being any Byzantine behavior in the system.
View stabilization would have required there be another \qc in view 2 before all the committed batches in view 2 were proposed.

Given this situation, one might consider changing the branch selection rules to prevent such mishaps.
Since, for audit safety, it is only necessary to preserve the batch for which the \qc is formed, not the \qc itself, the branch selection rules could be relaxed to select a branch containing the batch that has the \qc with the highest view, but not necessarily the branch that contains the \qc itself.
While it does solve the dilemma above, using this rule in a Byzantine setting leads to forgetting of audited batches.

\begin{figure*}
    \centering
    \includegraphics[width=0.70\linewidth]{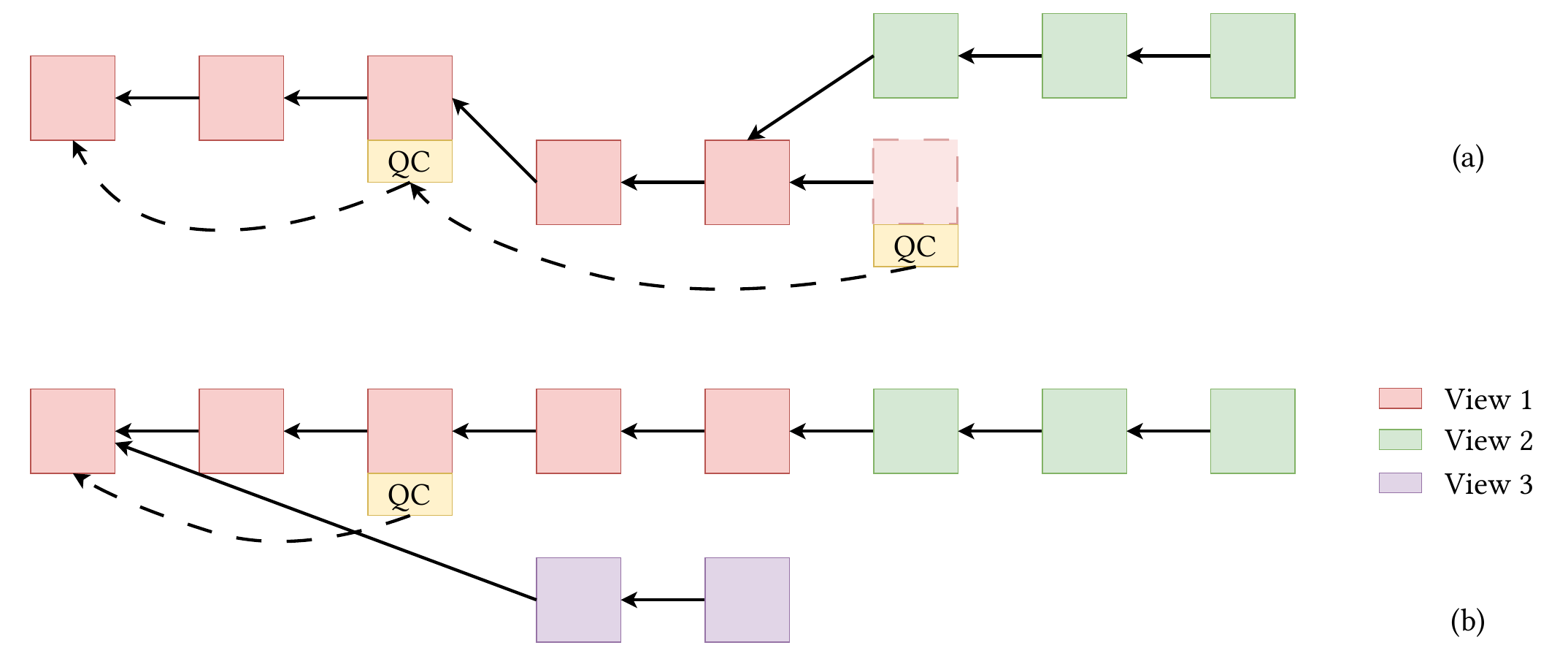}
    \caption{A Byzantine system with modified branch selection rules: (a) Branches seen for view change to view 3, (b) Branches seen for view change to view 4.}
    \label{fig:vs2}
\vspace{-0.3cm}
\end{figure*}

To understand why, consider a system with Byzantine nodes, as described in \Cref{fig:vs2}.
The leader in view forms the \qc that audits batch 1 and embeds it in batch 6.
But it times out before it could propagate the batch.
The leader in view 2 extends from batch 5 and commits 3 new batches.

During the view change to view 3, the branch from view 2's leader is chosen as it contains batch 3, the highest \qc with the highest view (but not the branch containing the \qc itself).
View 3 times out without proposing new batches.
During view change to view 4, out of $N - u$ collected branches, $f_{safe}$ show a branch that extends batch 1 directly but uses spoofed batches from view 3.
The branch selection rules accept this spoofed branch, as it contains batch 1, which has the highest \qc with the highest view.
Once the leader in view 4 enforces this branch onto all replicas, the \qc on batch 1 is forgotten.
If view 4 now times out before creating a \qc, the Byzantine replicas in view 5 can spoof a branch conflicting with view 1, and the leader in view 5 can then freely choose this conflicting branch, since there is no \qc left to "protect" batch 1.

\section{Reconfiguration}
\label{sec:reconfiguration}

Changing node membership remains an understudied problem in BFT literature (with only a few notable works, eg, ~\cite{duan2022dyno}).
Nevertheless, to make \sys{} practical, we provide a simple reconfiguration protocol that repurposes the view change logic with minor modifications.

\par \textbf{Configuration.}
A cluster, in its runtime, goes through a sequence of configurations.
Each configuration $C_i$ is defined by an ordered list of replica identities, and the values of $u$ and $f_{safe}$ corresponding to that configuration.
Each message (ie, an AppendEntries, Vote or ViewChange), along with its view, must also contain the configuration number.
A replica ignores messages from older configurations, and, like View Change, only processes messages from higher configurations if it receives enough evidence of a successful configuration change and that it is part of the newest configuration.
The list of nodes that make up the initial configuration is fixed and is public knowledge.
Any changes in the node membership thereafter need to happen through the reconfiguration protocol.

\par \textbf{Node Lifecycle.}
In its lifetime, a replica goes through different lifecycle stages, as follows:
\begin{enumerate}[leftmargin=*]
    \item \textit{Dormant:} A replica boots up in a dormant stage, before it becomes part of the cluster.
    \item \textit{Learner:} Once a dormant cluster joins the network and starts receiving messages from the leader, it becomes a learner. Learners do not vote. They are read-only replicas.
    \item \textit{Full Node:} A learner can be upgraded to a full node that takes part in the protocol.
    \item \textit{Old Full Node:} Once the cluster moves to a new configuration, all full nodes in the old configuration become Old Full Nodes. If the node is part of the new configuration, it can be both an Old Full Node and a Full Node at the same time. Otherwise, Old Full Nodes behave like learners in the new configuration and process certain messages from the old configuration.
    \item \textit{Dead:} Once an Old Full Node or a Learner is marked as Dead, it can be safely shut down and removed from the cluster.
\end{enumerate}

\par \textbf{Reconfiguration Transactions.}
\sys{} handles reconfigurations through transactions.
Since adding learners does not involve changes to the voting quorum, the transaction to add new learners is executed right after committing.
Adding learners does not change the configuration number.

Transactions to add full nodes or remove any node in the system must execute only after they are audited.
Potential equivocation in the list of voting members could be devastating.
Once removed from the cluster, a node dies, and thus if a removal transaction is rolled back, it might never recover the dead node.
Each node buffers such transactions in a queue.
The node executes a transaction once all transactions preceding it have been executed and the current view is stable.

Nodes transition through the lifecycle one stage at a time in each transaction. For example, dormant nodes become learners, learners become full nodes, and so on, but a learner cannot become an old full node, or, more importantly, old full nodes cannot become dead directly.
One transaction can only batch multiple transitions of one type.

\par \textbf{Changing set of full nodes.}
We only allow existing nodes to be upgraded to full nodes.
On executing a transaction that changes the set of full nodes to $C_{new}$, an existing full node in configuration $i$ sets the next configuration $C_{i + 1} \gets C_{new}$, assigns all nodes in $C_i$ as old full nodes and creates $C_{old} \gets C_i$, and then increments its configuration number to $i + 1$.
It then broadcasts a \textsc{View-Change} message to all nodes in $C_{new}$ and $C_{old}$.
Learners that are part of $C_{new}$ but not $C_{old}$ execute the same steps but do not trigger a View Change.

Any node, on receiving $N_{old} - u_{old}$ \textsc{View-Change} messages with configuration $i + 1$ enters a new view with configuration $i + 1$ and stops accepting any messages from lower views and configurations.
The leader in the new view, \textit{according to the new configuration}, attempts to stabilize this view by sending \textsc{New-View} messages to all nodes in $C_{old}$ and $C_{new}$ and waiting for votes from both these sets.
Unlike normal view changes, where $N_{new} - u_{new}$ votes from nodes in $C_{new}$ would have been enough, stabilizing this view requires $N_{new} - u_{new}$ votes from $C_{new}$ \textit{and} $N_{old} - u_{old}$ votes from $C_{old}$.
Once the view is stabilized, the leader can operate normally in $C_{new}$ and all nodes $C_{old}$ simply acts as learners.

\par \textbf{Removing old full nodes.}
Audited transactions to remove nodes in $C_{old}$ can be executed immediately after a view is stabilized in $C_{new}$.
It does not require changing configurations or triggering view changes.
After executing such a transaction, a node in $C_{old}$ simply goes to a dead state and removes itself from the network.

\par \textbf{Addressing liveness issues.}
The protocol for changing configurations above relies on $N_{old} - u_{old}$ nodes discovering that the reconfiguration transaction is audited.
Our BFT auditing protocol lacks a "Deliver" phase, like Hotstuff~\cite{hotstuff}, where the \qc that audits the transaction is broadcast and $N_{old} - u_{old}$ votes are collected on it.
We rely on our view change logic to re-audit the same transaction at the same log position in all future views.
However, this relies on enough nodes remaining in the old configuration for view changes in the old configuration to succeed, which is not always the case.
Concretely, as soon as $u_{old} + 1$ nodes jump to the new configuration, it is impossible for the remaining $N_{old} - u_{old} - 1$ to carry out a view change in the old configuration.
Thus these nodes may never consider the reconfiguration transaction to be audited and may never change configurations, and as a result, there may never be enough nodes to stabilize a view in the new configuration.

To this end, we introduce a new Pre-View-Change phase before invoking a view change for reconfiguration.
Once a node is about to execute a reconfiguration transaction that changes the configuration number from $i$ to $i + 1$, after auditing it in view $v$, it first broadcasts a \msg{PreViewChange}{v, i, QC} to all nodes in $C_{old}$, where $QC$ is the \qc that audits the reconfiguration transaction.
For any future view changes that happen in the old configuration, it re-sends \msg{PreViewChange}{v + k, i, QC}, where $v + k$ is the new view.
Once a node hears $N_{old} - u_{old}$ \textsc{PreViewChange} messages in any view $v + k$, it increments the configuration number to $i + 1$ and triggers the view change process as described above.
This guarantees that enough nodes in the old configuration audits the reconfiguration transaction before the view change with the new configuration change is triggered.

\par \textbf{Guarantees.}
The reconfiguration protocol guarantees the following:
\begin{enumerate}[leftmargin=*]
    \item \textbf{Safety I:} If a correct node in configuration $C_i$ considers $C_{i+k} ~(k > 0)$ to be the $(i + k)^{th}$ configuration, no other correct node in $C_i$ or any future configuration considers any other configuration $C'_{i+k}$ to be the $(i + k)^{th}$ configuration.
    \item \textbf{Safety II:} If a correct node in configuration $C_i$ considers a transaction $m$ audited at a log sequence number and another correct node in $C_{i'}$ considers another transaction $m'$ audited at the same log sequence number, then $m = m'$.
    \item \textbf{Safety III:} If a configuration $C_i$ consists of only uncompromised nodes, and if a node in configuration $C_i$ considers a transaction $m$ \textit{committed} at a log sequence number, and any other correct node in some \textit{future} configuration $C_{i+k}$ considers a transaction $m'$ committed at the same log sequence number, then $m = m'$.
    \item \textbf{Liveness:} If a correct node is part of configurations $\langle C_i, C_{i+1}, \allowbreak \ldots, C_{i+m}\rangle$, where each configuration guarantees liveness individually , and the node is currently in configuration $C_i$, eventually the node will move to configuration $C_{i+m}$.
\end{enumerate}

\section{Lower bounds on number of platforms}
\label{sec:pft-lower-bound}

In this section, we shall derive the minimum number of platforms needed to tolerate a given number of faults.

\begin{theorem}{}
    Forgoing liveness (i.e., $\pi_{live} = 0$), the minimum number of platforms, $min(\Pi)$ required to tolerate $\pi_{safe}$ faulty platforms is $\pi_{safe} + 1$.
\end{theorem}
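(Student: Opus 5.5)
The proof has two halves: a lower bound showing that $\pi_{safe}$ platforms can never suffice, and a construction showing that $\pi_{safe}+1$ platforms always do. Both rest on \Cref{theorem:node-count} with $\pi_{live}=0$, which gives $f_{live}=0$. We keep $c$ general, and write $N^1\ge N^2\ge\dots\ge N^\Pi$ for the platform sizes in descending order, as in Table~\ref{tab:pft}.

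\emph{Lower bound.} Suppose $\Pi\le\pi_{safe}$. Then all platforms may be compromised at once. Since $f_{safe}$ is the sum of the $\pi_{safe}$ largest platforms, we get $f_{safe}=\sum_{j=1}^{\Pi}N^j=N$. The requirement $N\ge 2c+f_{safe}+1$ then reads $N\ge 2c+N+1$, which is impossible for any $c\ge 0$. Intuitively, with every node compromised, no correct node remains to witness a committed or audited operation, so two audit quorums for conflicting batches need not share an honest node. This is exactly the intersection argument that follows \Cref{theorem:node-count}. Hence $\Pi\ge\pi_{safe}+1$.

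\emph{Sufficiency.} Take $\Pi=\pi_{safe}+1$ and choose the distribution $N^1=\dots=N^{\pi_{safe}}=m$ and $N^{\Pi}=m'$, with $m'\ge 1$. Then $f_{safe}=\pi_{safe}m$ and $N=\pi_{safe}m+m'$. The bound of \Cref{theorem:node-count} becomes $\pi_{safe}m+m'\ge 2c+\pi_{safe}m+1$, that is, $m'\ge 2c+1$.

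Because the platforms are sorted in descending order, we need $m\ge m'$. The choice $m=m'=2c+1$ works, with every platform of size $2c+1$. For $c=0$ this reduces to one node per platform, which matches Fig.~\ref{fig:pft}(1): two platforms suffice for $\pi_{safe}=1$.

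With the node-count bound met, the safety proofs of \Cref{sec:safety} apply verbatim. Their only hypothesis is $N\ge 2u+f_{safe}+1$, which here holds with $u=c$. This shows that $\pi_{safe}+1$ platforms suffice.

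The main obstacle is the role of \Cref{theorem:node-count}. If it is read only as the sufficient condition that the protocol needs, the lower bound should not merely cite the inequality. It should instead argue directly that with $\Pi\le\pi_{safe}$ every node can be Byzantine, so no protocol can guarantee that two correct replicas agree. That argument is a standard indistinguishability one: the adversary controls all nodes and simulates two divergent executions, one shown to each correct observer. The sorting constraint in the sufficiency half is easy but needs to be checked explicitly, since the sizes are chosen so that the smallest platform is the honest one in the worst case.
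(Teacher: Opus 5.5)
Your proof is correct. It uses the same ingredients as the paper, namely the node-count bound with $f_{live}=0$ and accounting over platforms sorted by size, but organizes them differently.

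The paper fixes $N = 2c+f_{safe}+1$ and lets $\nu = N^{\pi_{safe}}$ be the smallest compromised platform. The $\Pi-\pi_{safe}$ remaining platforms each hold at most $\nu$ nodes yet together hold $2c+1$, which gives the parametric bound $\Pi \geq \frac{2c+1}{\nu}+\pi_{safe}$. The paper then sets $\nu = 2c+1$ to bring this down to $\pi_{safe}+1$, uses $\frac{2c+1}{\nu}>0$ to exclude $\Pi=\pi_{safe}$, and states the $2c+1$-nodes-per-platform deployment only afterwards.

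Your contradiction for $\Pi\leq\pi_{safe}$ is exactly that strictness step taken on its own. Your construction is that same deployment, but you check it directly against \Cref{theorem:node-count}, including the sorting constraint and $u=c$ in the safety proofs. Achievability is therefore explicit rather than read off the lower-bound expression, and you never fix $N$ at its minimum.

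The paper's parametric route gives quantitative information yours does not: how many platforms are forced when compromised platforms are small. For instance, $\nu=1$ forces $\Pi\geq\pi_{safe}+2c+1$. The paper reuses this for the $\pi_{live}>0$ case and for the nodes-versus-platforms trade-off.

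One caution on your final paragraph. The fallback indistinguishability sketch only works if the ``correct observers'' are clients. With every replica Byzantine, the replica-level safety definitions hold vacuously. Citing \Cref{theorem:node-count} as a necessary condition, as the paper does, avoids this issue.
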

\begin{proof}
    The minimum number of nodes to tolerate $c$ benign crashes and $f_{safe}$ commission failures is $N = 2c + f_{safe} + 1$.
    Consider the platforms $\langle 1, 2, \ldots, \pi_{safe}, \ldots, \Pi \rangle$ sorted according in descending order of their sizes (i.e., $i < j \implies N^i \geq N^j ~\forall i,j$).

    We consider the maximum possible Byzantine (commission) faulty nodes to be of the first $\pi_{safe}$ platforms.
    So $\sum_{i=1}^{\pi_{safe}} N^i = f_{safe}$ and $\sum_{i=\pi_{safe}+1}^{\Pi} N^i = N - f_{safe} = 2c + 1$.

    Let $\mu = N^1$ and $\nu = N^{\pi_{safe}}$ be the largest and smallest possible size of a faulty platform.
    We have, $\forall i > \pi_{safe}, N^i \leq \nu$. Therefore,
    $$ \sum_{i=\pi_{safe}+1}^{\Pi} N^i \leq \nu (\Pi - \pi_{safe}) $$
    $$ \implies 2c + 1 \leq \nu (\Pi - \pi_{safe}) $$
    $$ \implies \Pi \geq \frac{2c+1}{\nu} + \pi_{safe} $$

    Again, we have $\forall 1 \leq i \leq \pi_{safe}, N^i \geq \nu$. Therefore,
    $$ f_{safe} = \sum_{i=1}^{\pi_{safe}} N^i \geq \nu \pi_{safe} $$
    $$ \implies \nu \leq \frac{f_{safe}}{\pi_{safe}} $$
    $$ \implies \frac{2c+1}{\nu} \geq \frac{2c+1}{f_{safe}}\pi_{safe} $$
    $$ \implies \Pi \geq \frac{2c+1}{\nu} + \pi_{safe} \geq \left[\frac{2c+1}{f_{safe}} + 1\right]\pi_{safe} $$

    Therefore, the minimum value of $\Pi$, i.e., $min(\Pi) = \left[\frac{2c+1}{f_{safe}} + 1\right]\pi_{safe}$
    The value of $f_{safe}$ changes according to the values of $\mu$ and $\nu$.
    For example, one could envision a system where we have one node per platform ($\mu = \nu = 1$).
    It fixes $f_{safe} = \pi_{safe}$ and $min(\Pi) = \pi_{safe} + 2c + 1$, which is sound.
    However, setting $\nu = 2c + 1$, we get, $f_{safe} \geq \nu \pi_{safe} \implies min(\Pi) \leq \pi_{safe} + 1$. Since $min(\Pi)$ must be $> \pi_{safe}$, we get $min(\Pi) = \pi_{safe} + 1$.
\end{proof}

Note that this gives us a straightforward deployment strategy. Given values for $c, \pi_{safe}$, deploy $2c + 1$ nodes each in $\pi_{safe} + 1$ platforms.

To understand the case with $\pi_{live} > 0$, we first make some assumptions about the sets of platforms included in $\pi_{live}$ and $\pi_{safe}$.
Note that these sets may or may not be disjoint. Let $\pi$ denote the size of the union of these sets, i.e., the total number of Byzantine/compromised platforms, and let $r$ be the total number of nodes in these platforms.
In the worst case, the biggest $\pi$ platforms are Byzantine:
$r = \sum_{i=1}^{\pi} N^i$.
We use $N = 2c + 2f_{live} + f_{safe} + 1$.

\begin{theorem}{}
    The minimum number of platforms, $min(\Pi)$, required to tolerate $\pi_{safe}$ commission-faulty platforms and $\pi_{live}$ omission-faulty platforms is $2\pi_{live} + \pi_{safe} + 1$.
\end{theorem}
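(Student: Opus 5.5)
The plan mirrors the previous theorem: derive a lower bound on $\Pi$ from the node bound $N \geq 2c + 2f_{live} + f_{safe} + 1$ (\Cref{theorem:node-count}), then show it is tight with an explicit deployment. As before, sort platforms in descending order of size, $N^1 \geq N^2 \geq \cdots \geq N^\Pi$. In the worst case the compromised platforms are the largest ones, so $f_{live} = \sum_{j=1}^{\pi_{live}} N^j$ and $f_{safe} = \sum_{j=1}^{\pi_{safe}} N^j$, both counted from the top of the list.

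\textbf{Lower bound.} Suppose for contradiction that $\Pi \leq 2\pi_{live} + \pi_{safe}$. Pad the list with empty platforms ($N^j = 0$) up to exactly $2\pi_{live} + \pi_{safe}$ entries; this neither changes $N$ nor breaks the ordering. Split the indices into three consecutive blocks:
\begin{itemize}
\item $I_1 = \{1,\ldots,\pi_{live}\}$,
\item $I_2 = \{\pi_{live}+1,\ldots,2\pi_{live}\}$,
\item $I_3 = \{2\pi_{live}+1,\ldots,2\pi_{live}+\pi_{safe}\}$.
\end{itemize}
The block $I_1$ sums to exactly $f_{live}$. For $I_2$, the $k$-th element of the block has size at most $N^k$ because the list is descending, so $\sum_{I_2} N^j \leq f_{live}$. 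In the same way, the $k$-th element of $I_3$ has size at most $N^k$, so $\sum_{I_3} N^j \leq f_{safe}$. Summing the three blocks gives $N \leq 2f_{live} + f_{safe}$. This contradicts \Cref{theorem:node-count}, so $\Pi \geq 2\pi_{live} + \pi_{safe} + 1$.

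\textbf{Tightness.} Deploy $\Pi = 2\pi_{live} + \pi_{safe} + 1$ platforms, each containing $s = 2c+1$ nodes. Then $f_{live} = s\pi_{live}$ and $f_{safe} = s\pi_{safe}$, so
\[
2c + 2f_{live} + f_{safe} + 1 = s(2\pi_{live} + \pi_{safe}) + s = s\,\Pi = N,
\]
and the requirement of \Cref{theorem:node-count} holds with equality. With $\pi_{live} = 0$ this recovers the $\pi_{safe}+1$ result and the $2c+1$-nodes-per-platform deployment of the previous theorem, which is a consistency check.

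\textbf{Expected obstacle.} The delicate step is the block comparison in the lower bound when $\pi_{safe} > \pi_{live}$, since then $I_3$ reaches past the top-$\pi_{live}$ region. The argument only uses that every entry in a later block is dominated by the entry at the same offset from the top of the list, which holds for any block lengths in a descending list. I would state this domination once as a small lemma and apply it to both $I_2$ and $I_3$. I would also state explicitly the worst-case convention that $f_{live}$ and $f_{safe}$ are the top-$\pi_{live}$ and top-$\pi_{safe}$ sums, since the bound depends on it.
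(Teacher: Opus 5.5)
Your proof is correct, but it reaches the lower bound by a different route than the paper.

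\textbf{The paper's route.} The paper introduces $\nu = N^{\pi}$, the size of the smallest compromised platform. Here $\pi$ counts the union of the live and safe platform sets and $r$ is their total size. It bounds the uncompromised remainder by $N - r \le \nu(\Pi - \pi)$ and divides by $\nu$ to obtain $\Pi \ge \frac{2c+1}{\nu} + 2\pi_{live} + \pi_{safe}$. Positivity of $\frac{2c+1}{\nu}$ plus integrality then gives $\Pi \ge 2\pi_{live}+\pi_{safe}+1$, and ``setting $\nu = 2c+1$'' supplies tightness.

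\textbf{Your route.} You pad the sorted list and dominate each later block termwise by the top prefix. You never introduce $\nu$, $\pi$, or $r$, and you use the node bound only as an inequality.

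\textbf{What yours buys.} It is shorter and fully self-contained. The paper's step from $\frac{2f_{live}+f_{safe}-r}{\nu}$ to $2\pi_{live}+\pi_{safe}-\pi$ needs an upper bound on $r$, namely an inclusion--exclusion over the overlap of the live and safe sets. The bound it quotes, $r \ge \nu\pi$, points the wrong way for that step. Your block comparison never has to consider how the two sets overlap.

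\textbf{What the paper's buys.} It yields the quantitative bound $\Pi \ge \frac{2c+1}{\nu} + 2\pi_{live}+\pi_{safe}$. This shows how small compromised platforms force extra platforms, and it underlies the nodes-versus-platforms Pareto frontier discussed after the proof. Your inequality $N \le 2f_{live}+f_{safe}$ discards the $2c+1$ slack and so cannot see that trade-off, though the theorem itself does not need it.

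\textbf{Achievability.} You explicitly check that $2c+1$ nodes in each of $2\pi_{live}+\pi_{safe}+1$ platforms meets the node-count bound with equality. That is a cleaner argument than the paper's substitution of $\nu = 2c+1$ into what is only a lower-bound formula.
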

\begin{proof}
    Define $\nu = N^{\pi}$ as the minimum size of a potentially compromised platform. We have $r = \sum_{i=1}^{\pi} N^i \geq \nu \pi$. Similarly, $f_{safe} \geq \nu \pi_{safe}$ and $f_{live} \geq \nu \pi_{live}$.
    There are $N - r$ nodes in the remaining $\Pi - \pi$ platforms, each whose size can't exceed $\nu$. Thus,
    $$ N - r \leq \nu (\Pi - \pi) $$
    $$ \implies \Pi \geq \frac{N - r}{\nu} + \pi $$
    $$ \implies \Pi \geq \frac{2c + 2f_{live} + f_{safe} + 1 - r}{\nu} + \pi$$
    $$ \implies \Pi \geq \frac{2c + 1}{\nu} + \frac{2f_{live} + f_{safe} - r}{\nu} + \pi$$
    $$ \implies \Pi \geq \frac{2c + 1}{\nu} + \frac{\nu(2\pi_{live} + \pi_{safe} - \pi)}{\nu} + \pi$$
    $$ \implies \Pi \geq \frac{2c + 1}{\nu} + 2\pi_{live} + \pi_{safe} $$
    $$ \implies \Pi \geq \frac{2c + 1}{r}\pi + 2\pi_{live} + \pi_{safe} $$

    Therefore $min(\Pi) = \frac{2c + 1}{r}\pi + 2\pi_{live} + \pi_{safe}$.
    Setting $\nu = 2c + 1$, we have $r \geq \nu \pi = (2c + 1)\pi$. So, $min(\Pi) \leq 1 + 2\pi_{live} + \pi_{safe}$.
    But since $\Pi \geq \frac{2c + 1}{r}\pi + 2\pi_{live} + \pi_{safe} > 2\pi_{live} + \pi_{safe}$, the only possible value for $min(\Pi)$ is $min(\Pi) = 1 + 2\pi_{live} + \pi_{safe}$.
\end{proof}

Note that this formulation is independent of how many platforms does $\pi_{safe}$ and $\pi_{live}$ have in common.
To formulate a simple deployment strategy, note that all $\pi$ platforms must have exactly $\nu = 2c + 1$ nodes, and $f_{safe} = (2c + 1)\pi_{safe}$, $f_{live} = (2c + 1)\pi_{live}$, and $r = (2c + 1)\pi$.
In the remaining $\Pi - \pi = (2\pi_{live} + \pi_{safe} + 1 - \pi)$ platforms, the number of nodes is $N - r = (2c + 2f_{live} + f_{safe} + 1) - r = (2c + 1)(2\pi_{live} + \pi_{safe} + 1 - \pi)$.
Since each of these platforms can have at most $\nu = 2c + 1$ nodes, each of them must have exactly $2c + 1$ nodes to satisfy this equation.
So a simple deployment strategy is to deploy exactly $2c + 1$ nodes each in $2\pi_{live} + \pi_{safe} + 1$ platforms.

Minimizing the number of platforms results in more nodes than optimal.
We achieve minimum nodes by letting each platform have only one node, resulting in $N = \Pi = 2c + 2\pi_{live} + \pi_{safe} + 1$.
This gives us a Pareto frontier of number of nodes vs number of platforms.
In a real deployment scenario, one must choose the number of nodes based on the total available platforms.

\section{Details on Code Transparency Service}
\label{subsec:details-cts}

Modern software supply chains rely on automated pipelines and large dependency graphs, making a single compromised component capable of affecting thousands of downstream users.
TEEs provide attestation for the execution environment, but they do not ensure that the attested code originates from an authorized source or complies with organizational policy.
Establishing this link requires a verifiable, immutable record of software releases and their provenance.

A Code Transparency Service~\cite{scitt-ietf, Delignat23} (CTS) provides this root of trust by maintaining a ledger of signed claims describing releases, their provenance, and the policies they satisfy.
Clients and auditors can verify that a release was issued by an authorized publisher, and each accepted claim is recorded together with a receipt that can be checked offline.

Existing CTS implementations pair TEEs with CFT protocols~\cite{scitt}.
This approach inherits the limitations discussed earlier, as a single TEE compromise can undermine the ledger’s integrity, and correlated TEE failures violate the independence assumptions that CFT requires.
\syscts addresses these limitations by replacing the TEE+CFT foundation and strengthening the guarantees of the execution, providing users with audit receipts (\S\ref{sec:receipts}).
Nevertheless, CTS's policy validation required non-trivial changes to the behavior of \sys{}.

A CTS must ensure that each submitted claim satisfies the current policy.
Under a TEE+CFT design, this validation is performed exclusively by the leader, assuming its TEE is always correct.
Under PFT, this assumption does not hold, meaning a leader could propose batches containing invalid claims.
Therefore, \syscts performs validation across all replicas, following two invariants:
(1) a correct leader validates incoming claims before batching them into a block, and
(2) correct replicas independently validate every claim in a proposed batch before voting to accept it.
If correct replicas detect policy-violating claims, they simply withhold their vote, preventing the batch from gathering a quorum.

This design introduces an additional validation phase, when compared to TEE+CFT approaches, and this results in a non-trivial amount of additional cryptography-heavy computation.
To reduce this cost, since policy changes are infrequent, \syscts employs an optimistic concurrency control mechanism.
Each node, for each batch, begins by validating all claims in parallel against the most up-to-date policy available, tagging each result with the policy version applied.
Afterwards, it runs a quick sequential check to confirm that this policy version remained current throughout the validation.
In the uncommon case where a policy update occurred mid-validation, a node re-validates the batch sequentially.
Since this situation is infrequent, the cost of the deterministic fallback is minimal.

\section{TLA+ Spec for \sys{}}
\label{subsec:tlaplus}

We provide a copy of the formal specification for \sys{} in the following pages.



\section*{Supplementary material (transcribed from pages 24-38 of the arXiv PDF: pirateship.pdf)}

─────────────── MODULE *proteus* ───────────────

This is a TLA+ specification of the *Proteus* consensus protocol.
For simplicity of the specification, message delivery is assumed to be ordered and reliable.
Likewise, we also assume all transactions are signed.

EXTENDS
    TLA+ standard modules
    *Integers*,
    *Sequences*,
    *FiniteSets*,
    TLA+ community modules
    *FiniteSetsExt*,
    *SequencesExt*

CONSTANTS

Set of all replicas
CONSTANT $R$
ASSUME $R \neq \{\}$

Set of all *Byzantine* replicas
CONSTANT $BR$
ASSUME $BR \subseteq R \land Cardinality(R) \geq 3 * Cardinality(BR) + 1$

Set of all possible transactions
CONSTANT $Txs$
ASSUME $Txs \neq \{\}$

Maximum number of byzantine actions across all replicas
This parameter is completely artificial and is used to limit the state space
CONSTANT *MaxByzActions*

VARIABLES

VARIABLE
    messages in transit between any pair of replicas
    *network*,
    current view of each replica
    *view*,
    current *log* of each replica
    *log*,
    flag indicating if each replica is a primary
    *primary*,
    (primary only) the highest *log* entry on each replica replicated in this view
    *prepareQC*,

   commit index of each replica
  $commitIndex,$
   audit index of each replica
  $auditIndex,$
   total number of byzantine actions taken so far by any byzantine replica
  $byzActions,$
   (primary only) flag indicating if the primary has stabilized the view
  $viewStable$

Sequence of all variables
$vars \triangleq \langle$
  $network,$
  $view,$
  $log,$
  $primary,$
  $prepareQC,$
  $commitIndex,$
  $auditIndex,$
  $byzActions,$
  $viewStable\rangle$

---

HELPERS & VARIABLE TYPES

Number of replicas
$N \triangleq Cardinality(R)$

Set of quorums for commitment (simple majority).
$CQ \triangleq \{q \in \text{SUBSET } R : Cardinality(q) \geq N - ((N-1) \div 2)\}$

Set of quorums for auditing (super majority).
$AQ \triangleq \{q \in \text{SUBSET } R : Cardinality(q) \geq N - ((N-1) \div 3)\}$

Set of honest replicas
$HR \triangleq R \setminus BR$

Set of all views
$Views \triangleq Nat$

$ReplicaSeq \triangleq$
  $\text{CHOOSE } s \in [1 \mathrel{..} N \to R] : Range(s) = R$

The primary of view $v$
$Primary(v) \triangleq$
  $ReplicaSeq[(v\%N) + 1]$

Quorum certificates ($QCs$) are simply the index of the $log$ entry they confirm
Quorum certificates do not need views as they are always formed in the current view

Note that in the specification, we do not model signatures anywhere.
This means that signatures are omitted from the logs and messages.
When modelling byzantine faults, byz replicas will not be permitted to form
messages which would be discarded by honest replicas.

$QC \triangleq Nat$

Each *log* entry contains a view, a txn and optionally, quorum certificates for commitment and auditing

$LogEntry \triangleq [$
    $view : Views,$
    $tx : Seq(Txs),$
      For convenience, we represent a quorum certificate as a set but it can only be empty or a singleton
    $auditQC : \text{SUBSET } QC,$
    $auditQCVotes : AQ \cup \{\{\}\},$    empty set iff *auditQC* is empty.
    $commitQC : \text{SUBSET } QC]$

A *log* is a sequence of *log* entries. The index of the *log* entry is its sequence number/height
We do not explicitly model the parent relationship, the parent of *log* entry $i$ is *log* entry $i - 1$

$Log \triangleq Seq(LogEntry)$

Set of all possible *AppendEntries* messages

$AppendEntries \triangleq [$
    $type : \{\text{"AppendEntries"}\},$
    $view : Views,$
      In practice, it suffices to send only the *log* entry to append
      However, for the sake of the spec, we send the entire *log* as we need to check
      that the replica has the parent of the *log* entry to append
    $log : Log]$

Set of all possible *Vote* messages

$Votes \triangleq [$
    $type : \{\text{"Vote"}\},$
    $view : Views,$
      As with *AppendEntries*, we send the entire *log* for the sake of the spec
    $log : Log]$

Set of all possible *ViewChange* messages

$ViewChanges \triangleq [$
    $type : \{\text{"ViewChange"}\},$
    $view : Views,$
    $log : Log]$

Set of all possible *NewView* messages
Currently, we use separate messages for *NewView* and *AppendEntries*, these could be merged

$NewViews \triangleq [$
    $type : \{\text{"NewView"}\},$
    $view : Views,$
    $log : Log]$

Set of all possible messages
$Messages \triangleq$
    $AppendEntries \cup$
    $Votes \cup$
    $ViewChanges \cup$
    $NewViews$

$LogTypeOK \triangleq$
    $\wedge\ log \in [R \to Log]$
    $\wedge\ \forall\, l \in Range(log):$
        $\forall\, i \in \text{DOMAIN}\ l : l[i].auditQC = \{\} \equiv l[i].auditQCVotes = \{\}$

$NetworkTypeOK \triangleq$
    $\forall\, r, s \in R:$
        $\forall\, i \in \text{DOMAIN}\ network[r][s]:$
            $network[r][s][i] \in Messages$

Invariant to check the types of all variables
$TypeOK \triangleq$
    $\wedge$    $viewStable \in [R \to \text{BOOLEAN}]$
    $\wedge$    $view \in [R \to Views]$
    $\wedge$    $LogTypeOK$
    $\wedge$    $NetworkTypeOK$
    $\wedge$    $primary \in [R \to \text{BOOLEAN}]$
    $\wedge$    $prepareQC \in [R \to [R \to Nat]]$
    $\wedge$    $commitIndex \in [R \to Nat]$
    $\wedge$    $auditIndex \in [R \to Nat]$
    $\wedge$    $byzActions \in Nat$

---

INITIAL STATES

We begin in view 0 with a non-deterministically chosen replica as primary.
$Init \triangleq$
    Compare src/consensus/$handler.rs \neq ConsensusState$
    $\wedge\ view = [r \in R \mapsto 0]$
    $\wedge\ network = [r \in R \mapsto [s \in R \mapsto \langle\rangle]]$
    $\wedge\ log = [r \in R \mapsto \langle\rangle]$
    $\wedge\ primary \in \{f \in [R \to \text{BOOLEAN}] : Cardinality(\{r \in R : f[r]\}) = 1\}$
    $\wedge\ prepareQC = [r \in R \mapsto [s \in R \mapsto 0]]$
    $\wedge\ commitIndex = [r \in R \mapsto 0]$
    $\wedge\ auditIndex = [r \in R \mapsto 0]$
    $\wedge\ byzActions = 0$
    $\wedge\ viewStable = primary$

---

ACTIONS

$IsAuditQC(e) \triangleq$
$\quad e.auditQC \neq \{\}$

$IsCommitQC(e) \triangleq$
$\quad e.commitQC \neq \{\}$

Given a *log* $l$, returns the index of the highest *log* entry with a *commitQC*, 0 if the *log* contains no *commitQCs*
$HighestCommitQC(l) \triangleq$
$\quad$ LET $idx \triangleq SelectLastInSeq(l, IsCommitQC)$
$\quad$ IN $\quad$ IF $idx = 0$ THEN 0 ELSE $Max(l[idx].commitQC)$

Given a *log* $l$, returns the index of the highest *log* entry with a *auditQC*, 0 if the *log* contains no *auditQCs*
Compare: src/consensus/*log.rs* $\neq$ *Log*
$HighestAuditQC(l) \triangleq$
$\quad$ LET $idx \triangleq SelectLastInSeq(l, IsAuditQC)$
$\quad$ IN $\quad$ IF $idx = 0$ THEN 0 ELSE $Max(l[idx].auditQC)$

Given a *log* $l$, returns the index of the highest *log* entry with a *auditQC* over a *auditQC*
Compare: src/consensus/*log.rs* $\neq$ *Log*
$HighestQCOverQC(l) \triangleq$
$\quad$ LET $lidx \triangleq HighestAuditQC(l)$
$\quad\quad\quad idx \triangleq SelectLastInSubSeq(l, 1, lidx, IsAuditQC)$
$\quad$ IN $\quad$ IF $idx = 0$ THEN 0 ELSE $Max(l[idx].auditQC)$

Given a *log* $l$, this operator returns the highest index of a *log* entry for which $a * \text{Quorum Certificate} * (QC)$ exists. Note, the index of the *log* entry with the $QC$ corresponds to a higher *log* index than the returned index. This $QC$ is formed by unanimous $**auditQCVotes**$ from replicas.
Since a vote by a replica $r$ for some index $n$ implicitly serves as a vote for all *log* entries at index $b$ and below, the returned highest index might not have directly received a unanimous vote. Instead, replicas may have voted for this index transitively by voting for higher indices.
The pair $(idx, r)$ is a vote by replica $r$ for *log* index $idx$. While this vote may not yet be recorded in the *log*, this operator is robust against it.
$HighestUnanimity(l, idx, r) \triangleq$
$\quad$ Traverse the *log* * backwards* and record the replicas that have voted for the current $idx$ or higher indices (see $V$).
$\quad$ LET $\quad$ Include $r$'s vote in $V$ of $1..i$ if $r$ voted for index $i$.
$\quad\quad\quad V(S, i) \triangleq S \cup l[i].auditQCVotes \cup \text{IF } i \leq idx \text{ THEN } \{r\} \text{ ELSE } \{\}$
$\quad\quad\quad$ RECURSIVE $RUnanimity(\_, \_)$
$\quad\quad\quad RUnanimity(i, S) \triangleq$
$\quad\quad\quad\quad$ IF $i = 0$ THEN $\{0\}$
$\quad\quad\quad\quad$ ELSE IF $V(S, i) = R$
$\quad\quad\quad\quad\quad$ THEN $l[i].auditQC$
$\quad\quad\quad\quad\quad$ ELSE $RUnanimity(i - 1, V(S, i))$
$\quad$ IN $\quad RUnanimity(Len(l), \{\})$

$Max2(a, b) \triangleq$ IF $a > b$ THEN $a$ ELSE $b$
$Min2(a, b) \triangleq$ IF $a < b$ THEN $a$ ELSE $b$

$MaxQuorum(Q, l, m, default) \triangleq$
 LET RECURSIVE $RMaxQuorum(\_)$
  $RMaxQuorum(i) \triangleq$
   IF $i = default$ THEN $default$
   ELSE IF $\exists q \in Q : \forall n \in q : m[n] \geq i$
    THEN $i$ ELSE $RMaxQuorum(i-1)$
 IN $RMaxQuorum(Len(l))$

Checks if a *log* $l$ is well formed *e.g.* views are monotonically increasing
$WellFormedLog(l) \triangleq$
 $\forall i \in \text{DOMAIN } l :$
  check views are monotonically increasing
  $\land i > 1 \Rightarrow l[i-1].view \leq l[i].view$
  check *auditQCs* are well formed
  $\land \forall q \in l[i].auditQC :$
   *auditQCs* are always for previous entries
   $\land q < i$
   *auditQCs* are always formed in the current view
   $\land l[q].view = l[i].view$
   *auditQCs* are in increasing order
   $\land \forall j \in 1 .. i-1 :$
    $\forall qj \in l[j].auditQC : qj < q$
  check *commitQCs* are well formed
  $\land \forall q \in l[i].commitQC :$
   *commitQCs* are always for previous entries
   $\land q < i$
   *commitQCs* are in increasing order
   $\land \forall j \in 1 .. i-1 :$
    $\forall qj \in l[j].commitQC : qj < q$

Replica $r$ handling *AppendEntries* from primary $p$
Compare: src/consensus/$steady\_state$.rs $\neq do\_append\_entries$
$ReceiveEntries(r, p) \triangleq$
 there must be at least one message pending
 $\land network[r][p] \neq \langle\rangle$
 and the next message is an *AppendEntries*
 $\land Head(network[r][p]).type = \text{"AppendEntries"}$
 the replica must be in the same view
 $\land view[r] = Head(network[r][p]).view$
 and must be replicating an entry from this view
 $\land Last(Head(network[r][p]).log).view = view[r]$
 the replica only appends (one entry at a time) to its *log*
 $\land log[r] = Front(Head(network[r][p]).log)$
 received *log* must be well formed
 $\land WellFormedLog(Head(network[r][p]).log)$

for convenience, we replace the replica's *log* with the received *log* but in practice we are only appending one entry
$\land\ log' = [log\ \text{EXCEPT}\ ![r] =\ Head(network[r][p]).log]$
we remove the *AppendEntries* message and reply with a *Vote* message.
$\land\ network' = [network\ \text{EXCEPT}$
  $\ ![r][p]\ = Tail(@),$
  $\ ![p][r]\ = Append(@, [$
    $\ \ type \mapsto \text{"Vote"},$
    $\ \ view \mapsto view[r],$
    $\ \ log \mapsto log'[r]$
    $\ ])$
  $]$

replica updates its commit index provided the new commit index is greater than the current one
the only time a commit index can decrease is on the receipt of a *NewView* message if there's been a byz attack
$\land\ commitIndex' = [commitIndex\ \text{EXCEPT}\ ![r] = Max2(@,\ HighestCommitQC(log'[r]))]$
assumes that a replica can safely consider a transaction audited if there's a quorum certificate over a quorum certificate
Compare: src/consensus/*commit.rs* ≠ *maybe_byzantine_commit*
$\land\ \text{LET}\ bci\ \triangleq\ HighestQCOverQC(log'[r])$
  Compare: src/consensus/*commit.rs* ≠ *maybe_byzantine_commit_by_fast_path*
  $bciFastPath\ \triangleq\ HighestUnanimity(log'[r],\ 0,\ r)$
$\ \text{IN}\ \ auditIndex' = [auditIndex\ \text{EXCEPT}\ ![r] = Max(\{@\} \cup \{bci\} \cup bciFastPath)]$
$\land\ \text{UNCHANGED}\ \langle primary,\ view,\ prepareQC,\ byzActions,\ viewStable \rangle$

Replica *r* handling *NewView* from primary *p*

Note that unlike an *AppendEntries* message, a replica can update its view upon receiving a *NewView* message
$ReceiveNewView(r,\ p)\ \triangleq$
  there must be at least one message pending
  $\land\ network[r][p] \neq \langle\rangle$
  and the next message is a *NewView*
  $\land\ Head(network[r][p]).type = \text{"NewView"}$
  the replica must be in the same view or lower
  $\land\ view[r] \leq Head(network[r][p]).view$
  received *log* must be well formed
  $\land\ WellFormedLog(Head(network[r][p]).log)$
  update the replica's local view
  note that we do not dispatch a view change message as a primary has already been elected
  $\land\ view' = [view\ \text{EXCEPT}\ ![r] = Head(network[r][p]).view]$
  step down if replica was a primary
  $\land\ primary' = [primary\ \text{EXCEPT}\ ![r] = \text{FALSE}]$
  $\land\ viewStable' = [viewStable\ \text{EXCEPT}\ ![r] = \text{FALSE}]$
  reset *prepareQCs*, in case view was updated
  $\land\ prepareQC' = [prepareQC\ \text{EXCEPT}\ ![r] = [s \in R \mapsto 0]]$
  the replica replaces its *log* with the received *log*
  $\land\ log' = [log\ \text{EXCEPT}\ ![r] =\ Head(network[r][p]).log]$
  we remove the *NewView* message and reply with a *Vote* message.
  $\land\ network' = [network\ \text{EXCEPT}$

```
            ![r][p]  = Tail(@),
            ![p][r]  = Append(@, [
                type  ↦ "Vote",
                view  ↦ view[r],
                log   ↦ log'[r]
                ])
        ]
    ⸺ replica must update its commit index
    ⸺ Commit index may be decreased if there's been an byz attack
    ∧ commitIndex' = [commitIndex EXCEPT ![r] = Min2(@, Len(log'[r]))]
    ∧ UNCHANGED ⟨byzActions, auditIndex⟩

⸺ True iff primary p is in a stable view
⸺ A view is stable when a audit quorum have the view's first log entry
CheckViewStability(p) ≜
    LET inView(e) ≜ e.view = view[p] IN
    ∃ Q ∈ AQ :
        ∀ q ∈ Q :
            prepareQC'[p][q] ≥ SelectInSeq(log[p], inView)

⸺ Primary p receiving votes from replica r
⸺ Compare: src/consensus/steady_state.rs ≠ do_process_vote
ReceiveVote(p, r) ≜
    ⸺ p must be the primary
    ∧ primary[p]
    ⸺ and the next message is a vote from the correct view
    ∧ network[p][r] ≠ ⟨⟩
    ∧ Head(network[p][r]).type = "Vote"
    ∧ view[p] = Head(network[p][r]).view
    ∧ prepareQC' = [prepareQC EXCEPT
        ![p][r] = IF @ ≤ Len(Head(network[p][r]).log)
        THEN Len(Head(network[p][r]).log)
        ELSE @]
    ⸺ we remove the Vote message.
    ∧ network' = [network EXCEPT ![p][r] = Tail(network[p][r])]
    ⸺ if view is not already stable, check if it is now
    ∧ viewStable' = [viewStable EXCEPT ![p] =
            IF @ THEN @ ELSE CheckViewStability(p)]
    ⸺ If view is stable, then the primary can update its commit indexes
    ∧ IF viewStable'[p] THEN
        ∧ commitIndex' = [commitIndex EXCEPT ![p] =
            MaxQuorum(CQ, log[p], prepareQC'[p], @)]
        ⸺ Compare: src/consensus/commit.rs ≠ maybe_byzantine_commit
        ∧ LET bci ≜ HighestAuditQC(SubSeq(log[p], 1, MaxQuorum(AQ, log[p], prepareQC'[p], 0)))
            ⸺ Compare: src/consensus/commit.rs ≠ maybe_byzantine_commit_by_fast_path
```

$$bciFastPath \triangleq HighestUnanimity(log[p], prepareQC'[p][r], r)$$
$$\text{IN} \quad auditIndex' = [auditIndex \text{ EXCEPT } ![p] = Max(\{@\} \cup bciFastPath \cup \{bci\})]$$
$$\text{ELSE UNCHANGED } \langle commitIndex, auditIndex \rangle$$
$$\wedge \text{ UNCHANGED } \langle view, log, primary, byzActions \rangle$$

$MaxCommitQC(l, p) \triangleq$
  IF $commitIndex[p] > HighestCommitQC(l)$
    THEN $\{commitIndex[p]\}$
    ELSE $\{\}$

$MaxAuditQC(l, m) \triangleq$
  LET $idx \triangleq MaxQuorum(AQ, l, m, 0)$ IN
  IF $idx > HighestAuditQC(l)$
    THEN $[n \mapsto \{idx\}, v \mapsto \{r \in \text{DOMAIN } m : m[r] \geq idx\}]$
    ELSE $[n \mapsto \{\}, v \mapsto \{\}]$

Primary $p$ sends *AppendEntries* to all replicas
Compare: src/consensus/*steady_state.rs* $\neq$ *do_append_entries*
$SendEntries(p) \triangleq$
  $p$ must be the primary
  $\wedge\ primary[p]$
  and view must be stable
  $\wedge\ viewStable[p]$
  $\wedge\ \exists\ tx \in Txs :$
    primary will not send an *appendEntries* to itself so update *prepareQC* here
    $\wedge\ prepareQC' = [prepareQC \text{ EXCEPT } ![p][p] = Len(log[p]) + 1]$
    add the new entry to the *log*
    $\wedge$ LET $qc \triangleq MaxAuditQC(log[p], prepareQC'[p])$ IN
      $log' = [log \text{ EXCEPT } ![p] = Append(@, [$
      $view \mapsto view[p],$
        for simplicity, each txn batch includes a single txn
      $tx \mapsto \langle tx \rangle,$
      $commitQC \mapsto MaxCommitQC(log[p], p),$
      $auditQC \mapsto qc.n,$
      $auditQCVotes \mapsto qc.v])]$
    $\wedge\ network' =$
      $[r \in R \mapsto [s \in R \mapsto$
        IF $s \neq p \vee r = p$ THEN $network[r][s]$ ELSE $Append(network[r][s], [$
          $type \mapsto \text{"AppendEntries"},$
          $view \mapsto view[p],$
          $log \mapsto log'[p]])]]$
  $\wedge \text{ UNCHANGED } \langle view, primary, commitIndex, auditIndex, byzActions, viewStable \rangle$

Replica $r$ times out
Compare: src/consensus/*view_change.rs* $\neq$ *do_init_view_change*
$Timeout(r) \triangleq$

32

$\land \quad view' = [view \text{ EXCEPT } ![r] = view[r] + 1]$

*send a view change message to the new primary (even if it's itself)*
$\land \ network' = [network \text{ EXCEPT } ![Primary(view'[r])][r] = Append(@, [$
  $type \mapsto \text{"ViewChange"},$
  $view \mapsto view'[r],$
  $log \mapsto log[r]])$
$]$

*step down if replica was a primary*
$\land \ primary' = [primary \text{ EXCEPT } ![r] = \text{FALSE}]$
$\land \ viewStable' = [viewStable \text{ EXCEPT } ![r] = \text{FALSE}]$
*reset prepareQCs, these are not used until the node is elected primary*
$\land \ prepareQC' = [prepareQC \text{ EXCEPT } ![r] = [s \in R \mapsto 0]]$
$\land \text{ UNCHANGED } \langle log, \ commitIndex, \ auditIndex, \ byzActions \rangle$

*The view of the highest auditQC in log l, −1 if log contains no qcs*
$HighestQCView(l) \triangleq$
  LET $idx \triangleq HighestAuditQC(l)$ IN
  IF $idx = 0$ THEN $-1$ ELSE $l[idx].view$

*True if log l is valid log choice from the set of logs ls.*
*Assumes that $l \in ls$*
*Compare: src/consensus/view_change.rs ≠ fork_choice_rule_get*
$LogChoiceRule(l, ls) \triangleq$
  *if all logs are empty, then any l must be empty and a valid choice*
  $\lor \ \forall \ l2 \in ls : l2 = \langle \rangle$
  $\lor \ \land \ l \neq \langle \rangle$
    *l is valid if all other logs in ls are empty or l is from a higher view or*
    $\land$ LET $v1 \triangleq HighestQCView(l)$
    IN $\forall \ l2 \in ls :$
      *l is valid if all other logs in ls are empty or ...*
      $l \neq l2 \land l2 \neq \langle \rangle$
      $\Rightarrow$ LET $v2 \triangleq HighestQCView(l2)$ IN
        *l is from a higher view or ...*
        $\lor \ v1 > v2$
        *l is from the same view but at least as long*
        $\lor \ \land \ v1 = v2$
          $\land \ \lor \ Last(l).view > Last(l2).view$
            $\lor \ \land \ Last(l).view = Last(l2).view$
              $\land \ Len(l) \geq Len(l2)$

*Replica r becomes primary*
*Compare: src/consensus/view_change.rs ≠ do_init_new_leader*
$BecomePrimary(r) \triangleq$
  *replica must be assigned the new view*
  $\land \ r = Primary(view[r])$
  *a audit quorum must have voted for the replica*

33

$\land \exists q \in AQ:$
    $\land \forall n \in q:$
        $\land network[r][n] \neq \langle\rangle$
        $\land Head(network[r][n]).type =$ "ViewChange"
        $\land view[r] = Head(network[r][n]).view$
    $\land \exists l1 \in \{Head(network[r][n]).log : n \in q\}:$
        Non-deterministically pick a *log* from the set of logs in the quorum that satisfy the *log* choice rule.
        $\land LogChoiceRule(l1, \{Head(network[r][n]).log : n \in q\})$
        Primary adopts chosen *log* and adds a new entry in the new view
        $\land log' = [log \text{ EXCEPT } ![r] = Append(l1, [$
            $view \mapsto view[r],$
            $tx \mapsto \langle\rangle,$
            $commitQC \mapsto \{\},$
            $auditQC \mapsto \{\},$
            $auditQCVotes \mapsto \{\}])]$
        $\land prepareQC' = [prepareQC \text{ EXCEPT } ![r][r] = Len(log'[r])]$
        Need to update network to remove the view change message and send a *NewView* message to all replicas
        $\land network' = [r1 \in R \mapsto [r2 \in R \mapsto$
          IF $r1 = r \land r2 \in q$
          THEN $Tail(network[r1][r2])$
          ELSE IF $r1 \neq r \land r2 = r$
            THEN $Append(network[r1][r2], [$
              $type \mapsto$ "NewView",
              $view \mapsto view[r],$
              $log \mapsto log'[r]])$
            ELSE $network[r1][r2]]]$
replica becomes a primary
$\land primary' = [primary \text{ EXCEPT } ![r] = \text{TRUE}]$
primary updates its commit indexes
Commit index may be decreased if there's been an byz attack
$\land commitIndex' = [commitIndex \text{ EXCEPT}$
    $![r] = Max2(Min2(@, Len(log'[r])), HighestCommitQC(log'[r]))]$
$\land$ UNCHANGED $\langle view, byzActions, auditIndex, viewStable\rangle$

Replicas will discard messages from previous views or extra view changes messages
Note that replicas must always discard messages as the pairwise channels are ordered
so a replica may need to discard an out-of-date message to process a more recent one
$DiscardMessages \triangleq$
    $\land \exists s, r \in R:$
        $network' = [network \text{ EXCEPT } ![r][s] = SelectSeq(@,$
          LAMBDA $m : \neg(m.view < view[r] \lor$
            $(m.view = view[r] \land m.type =$ "ViewChange" $\land primary[r])))]$
    $\land$ UNCHANGED $\langle view, log, primary, prepareQC, commitIndex, auditIndex, byzActions, viewStable\rangle$

BYZANTINE ACTIONS
Byzantine actions can only be taken by byzantine replicas ($BR$) and if there are byzantine actions left to take

A byzantine replica might vote for an entry without actually appending it to its *log*.
This byzantine action currently has the same preconditions as *AppendEntries*
$ByzOmitEntries(r, p) \triangleq$
 $\land r \in BR$
 $\land byzActions < MaxByzActions$
 $\land byzActions' = byzActions + 1$
  there must be at least one message pending
 $\land network[r][p] \neq \langle\rangle$
  and the next message is an *AppendEntries*
 $\land Head(network[r][p]).type =$ "AppendEntries"
  the replica must be in the same view
 $\land view[r] = Head(network[r][p]).view$
  the replica only appends one entry to its *log*
 $\land log[r] = Front(Head(network[r][p]).log)$
  we remove the *AppendEntries* message and reply with a *Vote* message.
 $\land network' = [network$ EXCEPT
   $![r][p] = Tail(@),$
   $![p][r] = Append(@, [$
     $type \mapsto$ "Vote",
     $view \mapsto view[r],$
     $log \mapsto Head(network[r][p]).log$
     $])$
   $]$
 $\land$ UNCHANGED $\langle primary, view, prepareQC, commitIndex, auditIndex, log, viewStable \rangle$

Given an append entries message, returns the same message with the txn changed to 1
$ModifyAppendEntries(m) \triangleq [$
  $type \mapsto$ "AppendEntries",
  $view \mapsto m.view,$
  $log \mapsto SubSeq(m.log, 1, Len(m.log) - 1) \circ$
    $\langle [Last(m.log)$ EXCEPT $!.tx = \langle 1 \rangle] \rangle$
$]$

We allow a byzantine primary to equivocate by changing the txn in an *AppendEntries* message
$ByzPrimaryEquivocate(p) \triangleq$
 $\land p \in BR$
 $\land byzActions < MaxByzActions$
 $\land byzActions' = byzActions + 1$
 $\land \exists r \in R :$
   $\land network[r][p] \neq \langle\rangle$
   $\land Head(network[r][p]).type =$ "AppendEntries"
   $\land Head(network[r][p]).log \neq \langle\rangle$

35

$\quad\quad\quad\quad\wedge\ network' = [network\ \text{EXCEPT}$
$\quad\quad\quad\quad\quad\ ![r][p][1] = ModifyAppendEntries(@)]$
$\quad\quad\wedge\ \text{UNCHANGED}\ \langle view,\ log,\ primary,\ prepareQC,\ commitIndex,\ auditIndex,\ viewStable \rangle$

Next state relation

Note that the byzantine actions are included here but can be disabled by setting *MaxByzActions* to 0 or *BR* to {}.

$Next\ \triangleq$
$\quad \vee\ DiscardMessages$
$\quad \vee\ \exists\, r \in BR :$
$\quad\quad \vee\ ByzPrimaryEquivocate(r)$
$\quad\quad \vee\ \exists\, s \in R :$ Could be *CR* because we don't need byz replicas to receive messages from other byz replicas
$\quad\quad\quad ByzOmitEntries(r, s)$
$\quad \vee\ \exists\, r \in R :$
$\quad\quad \vee\ SendEntries(r)$
$\quad\quad \vee\ Timeout(r)$
$\quad\quad \vee\ BecomePrimary(r)$
$\quad\quad \vee\ \exists\, s \in R :$
$\quad\quad\quad \vee\ ReceiveEntries(r, s)$
$\quad\quad\quad \vee\ ReceiveVote(r, s)$
$\quad\quad\quad \vee\ ReceiveNewView(r, s)$

$Fairness\ \triangleq$
$\quad$ Only *Timeout* if there is no primary.
$\quad \wedge\ \text{WF}_{vars}(DiscardMessages)$
$\quad \wedge\ \forall\, r \in HR : \text{WF}_{vars}(\text{TRUE} \notin Range(primary) \wedge Timeout(r))$
$\quad \wedge\ \forall\, r \in HR : \text{WF}_{vars}(BecomePrimary(r))$
$\quad \wedge\ \forall\, r \in HR : \text{WF}_{vars}(SendEntries(r))$
$\quad \wedge\ \forall\, r, s \in HR : \text{WF}_{vars}(ReceiveEntries(r, s))$
$\quad \wedge\ \forall\, r, s \in HR : \text{WF}_{vars}(ReceiveVote(r, s))$
$\quad \wedge\ \forall\, r, s \in HR : \text{WF}_{vars}(ReceiveNewView(r, s))$
$\quad$ Omit any byzantine actions from the fairness condition.

$Spec\ \triangleq$
$\quad \wedge\ Init$
$\quad \wedge\ \Box[Next]_{vars}$
$\quad \wedge\ Fairness$

---

PROPERTIES

Correct replicas are either honest or byzantine when no byzantine actions have been taken yet

$CR\ \triangleq\ \text{IF}\ byzActions = 0\ \text{THEN}\ R\ \text{ELSE}\ HR$

$Committed(r)\ \triangleq$
$\quad SubSeq(log[r],\ 1,\ commitIndex[r])$

The view of a *log* entry is always greater than or equal to the view of the previous entry, *i.e.*,

36

the view of *log* entries is (non-strictly) monotonically increasing.
$ViewMonotonicInv \triangleq$
$\quad \forall\, r \in R :$
$\quad\quad \forall\, i \in 2\,..\,Len(log[r]) :$
$\quad\quad\quad log[r][i].view \geq log[r][i-1].view$

Every view starts with a view stabilization *log* entry. Moreover, view 0 is always stable.
Therefore, view 0 has no view stabilization *log* entry.
$ViewStabilizationInv \triangleq$
$\quad \forall\, r \in R :$
$\quad\quad \wedge\, \forall\, i \in 1\,..\,Len(log[r]) :$
$\quad\quad\quad \wedge\, log[r][i].tx = \langle\rangle \Rightarrow log[r][i].view \neq 0$
$\quad\quad\quad \wedge\, i > 1 \wedge log[r][i].view > log[r][i-1].view \Rightarrow log[r][i].tx = \langle\rangle$

Ignoring view stabilization *log* entries (modeled as empty txs), true iff the *log* $p$ is a prefix of *log* $l$.
$IsPrefixWithoutEmpty(p,\, l) \triangleq$
$\quad$ $p$ can be longer than $l$. Suppose $l$ matches $p$ as a prefix up to index $i$, but the suffix of $p$ starting at $i+1$ contains only view stabilization *log* entries. By adding the condition $Len(p) \leq Len(l)$, we ensure that such cases are not considered as $p$ being a prefix of $l$. Instead, we require that $l$ is at least as long as $p$, ensuring that $l$ has a suffix distinct from $p$.
$\quad$ Independently, this condition prevents out-of-bounds access when $p$ is longer than $l$. For example, if $l = \langle\rangle$ (an empty sequence), attempting to access $l[k]$ in the disjunct $p[k] = l[k]$ would lead to an out-of-bounds access.
$\quad \wedge\, Len(p) \leq Len(l)$
$\quad \wedge\, \forall\, k \in 1\,..\,Len(p) :$
$\quad\quad \vee\, p[k] = l[k]$
$\quad\quad \vee\, p[k].tx = \langle\rangle$

If no byzantine actions have been taken, then the committed logs of all replicas must be prefixes of each other
This, together with *CommittedLogAppendOnlyProp*, is the classic *CFT* safety property
Note that if any nodes have been byzantine, then this property is not guaranteed to hold on any node
*LogInv* implies that the audited logs of replicas are prefixes too,
as *IndexBoundsInv* ensures that the *auditIndex* is always less than or equal to the *commitIndex*.
$LogInv \triangleq$
$\quad byzActions = 0 \Rightarrow$
$\quad\quad \forall\, i, j \in R :$
$\quad\quad\quad \vee\, IsPrefixWithoutEmpty(Committed(i),\, Committed(j))$
$\quad\quad\quad \vee\, IsPrefixWithoutEmpty(Committed(j),\, Committed(i))$

$Audited(r) \triangleq$
$\quad SubSeq(log[r],\, 1,\, auditIndex[r])$

Variant of *LogInv* for the audit index and correct replicas only
We make no assertions about the state of byzantine replicas
$AuditLogInv \triangleq$
$\quad \forall\, i, j \in CR :$

$\quad\quad\quad \lor \textit{IsPrefix}(\textit{Audited}(i),\ \textit{Audited}(j))$
$\quad\quad\quad \lor \textit{IsPrefix}(\textit{Audited}(j),\ \textit{Audited}(i))$

If no byzantine actions have been taken, then each replica only appends to its committed *log*
Note that this invariant allows empty blocks (sent at the start of a view) to be rolled back
$\textit{CommittedLogAppendOnlyProp} \triangleq$
$\quad \Box[\textit{byzActions} = 0 \Rightarrow$
$\quad\quad \forall\, i \in R :$
$\quad\quad \textit{IsPrefixWithoutEmpty}(\textit{Committed}(i),\ \textit{Committed}(i)')]_{\textit{vars}}$

All correct replicas only append to their audited logs
$\textit{MonotonicAuditedIndexdProp} \triangleq$
$\quad \Box[\forall\, i \in \textit{CR} :$
$\quad\quad \textit{auditIndex}[i] \le \textit{auditIndex}'[i]]_{\textit{vars}}$

Each correct replica only appends to its audited *log*
$\textit{AuditedLogAppendOnlyProp} \triangleq$
$\quad \Box[\forall\, i \in \textit{CR} :$
$\quad\quad \textit{IsPrefix}(\textit{Audited}(i),\ \textit{Audited}(i)')]_{\textit{vars}}$

At most one correct replica is primary in a view
$\textit{OneLeaderPerTermInv} \triangleq$
$\quad \forall\, v \in 0\,..\,\textit{Max}(\textit{Range}(\textit{view})),\ r \in \textit{CR} :$
$\quad\quad \textit{view}[r] = v \land \textit{primary}[r]$
$\quad\quad \Rightarrow \forall\, s\ \in R \setminus \{r\} : \textit{view}[s] = v \Rightarrow \neg \textit{primary}[s]$

The commit and audit indexes are within bounds
The audit index is always less than or equal to the commit index
$\textit{IndexBoundsInv} \triangleq$
$\quad \forall\, r \in \textit{CR} :$
$\quad\quad \land \textit{commitIndex}[r] \le \textit{Len}(\textit{log}[r])$
$\quad\quad \land \textit{auditIndex}[r] \le \textit{commitIndex}[r]$

The *log* of each replica is well formed
$\textit{WellFormedLogInv} \triangleq$
$\quad \forall\, r \in \textit{CR} : \textit{WellFormedLog}(\textit{log}[r])$



\end{document}